\newcommand*{\addFileDependency}[1]{
	\typeout{(#1)}
	\@addtofilelist{#1}
	\IfFileExists{#1}{}{\typeout{No file #1.}}
}
\newcommand*{\myexternaldocument}[1]{%
	\externaldocument{#1}%
	\addFileDependency{#1.tex}%
	\addFileDependency{#1.aux}%
}
\theoremstyle{plain}
\newtheorem{theorem}{Theorem}[section]
\newtheorem{lemma}[theorem]{Lemma}
\newtheorem{othertheorem}{othertheorem}[section]
\theoremstyle{remark}
\newtheorem{definition}[theorem]{Definition}
\newtheorem*{fact}{Fact}
\newtheorem{rem}[othertheorem]{Remark}
\DeclareMathOperator*{\plim}{plim}
\DeclareMathOperator*{\PP}{\mathbb{P}}
\newcommand{\E}{\mathbb{E}}
\newcommand{\R}{\mathbb{R}}
\newcommand{\abs}[1]{\lvert#1\rvert}
\newcommand{\bm}{\mathbf}
\newcommand{\bet}{\boldsymbol{\beta}}
\newcommand{\p}{p}
\newcommand{\n}{n}
\newcommand{\A}{\mathrm{A}}
\newcommand{\Aeff}{\mathrm{A}_\textup{eff}}
\newcommand{\Z}{\mathbf{Z}}
\newcommand{\X}{\mathbf{X}}
\newcommand{\w}{\mathbf{w}}
\newcommand{\y}{\mathbf{y}}
\newcommand{\blam}{\boldsymbol \lambda}
\newcommand{\thet}{\boldsymbol \theta}
\newcommand{\bfalph}{\boldsymbol \alpha}
\newcommand{\TPP}{\textup{TPP}^\infty}
\newcommand{\FDP}{\textup{FDP}^\infty}
\renewcommand{\S}{\textup{CDF}}
\newcommand{\jk}[1]{{\color{orange}\textbf{Jason:} #1}}
\begin{document}
	
	\begin{frontmatter}
		\title{Sharp Trade-Offs in High-Dimensional Inference via 2-Level SLOPE}
		
		\begin{aug}
			\author[A]{\fnms{Zhiqi}~\snm{Bu}\ead[label=e1]{woodyx218@gmail.com}},
			\author[B]{\fnms{Jason M.}~\snm{Klusowski}\ead[label=e2]{jason.klusowski@princeton.edu}},
			\author[C]{\fnms{Cynthia}~\snm{Rush}\ead[label=e3]{cynthia.rush@columbia.edu}},
			\and
			\author[D]{\fnms{Ruijia}~\snm{Wu}\ead[label=e4]{rjwu@sjtu.edu.cn}}
			\address[A]{\printead[presep={,\ }]{e1}}
			
			\address[B]{Department of Operations Research \& Financial Engineering, Princeton University\printead[presep={,\ }]{e2}}
			
			\address[C]{Department of Statistics, Columbia University\printead[presep={,\ }]{e3}}
			
			\address[D]{Antai College of Economics and Management, Shanghai Jiao Tong University\printead[presep={,\ }]{e4}}
		\end{aug}
		
		\begin{abstract}
			Among techniques for high-dimensional linear regression, Sorted L-One Penalized Estimation (SLOPE) \citep{bogdan2015slope} generalizes the LASSO via an adaptive $l_1$ regularization that applies heavier penalties to larger coefficients in the model. To achieve such adaptivity, SLOPE requires the specification of a complex hierarchy of penalties, i.e., a monotone penalty sequence in $\R^p$, in contrast to a single penalty scalar for LASSO. Tuning this sequence when $ p $ is large poses a challenge, as brute force search over a grid of values is computationally prohibitive. In this work, we study the \textbf{2-level SLOPE}, an important subclass of SLOPE, with only \emph{three} hyperparameters. We demonstrate both empirically and analytically that 2-level SLOPE not only preserves the advantages of general SLOPE—such as improved mean squared error and overcoming the Donoho-Tanner power limit—but also exhibits computational benefits by reducing the penalty hyperparameter space. In particular, we prove that 2-level SLOPE admits a sharp, theoretically tight characterization of the trade-off between true positive proportion (TPP) and false discovery proportion (FDP), contrasting with general SLOPE where only upper and lower bounds are known. Empirical evaluations further underscore the effectiveness of 2-level SLOPE in settings where predictors exhibit high correlation, when the noise is large, or when the underlying signal is not sparse. Our results suggest that 2-level SLOPE offers a robust, scalable alternative to both LASSO and general SLOPE, making it particularly suited for practical high-dimensional data analysis.
		\end{abstract}

	\end{frontmatter}
	


\section{Introduction}
The basic linear model is a widely applied regression model that uses  linear predictor functions to model the relationships  between the response $\y\in\R^\n$ and data $\X\in\R^{\n\times \p}$, with $n$ being the sample size and $p$ being the number of features, formulated as 
$$\y=\X\bet+\w.$$
Here $\bet\in\R^\p$ is the unknown target of inference and  $\w\in\R^\n$ is the noise. We assume the data is truly generated from such a model throughout this paper.

In this setting, a number of methods, such as the LASSO \citep{tibshirani1996regression}, have been developed to extract and analyze the signal $\bet$ from the dataset $(\y, \X)$ in high-dimensional settings, that is, when the sample size $n$ is smaller than or proportional to the feature size $p$. Such a high-dimensional model is often a reasonable framework for statistical machine learning \citep{zou2005regularization,su2016slope,bayati2011lasso}, signal processing \citep{candes2006stable,candes2005decoding}, genomic sequence analysis \citep{ogutu2012genomic}, and medical imaging \citep{lustig2008compressed}.

Methods for estimation in high-dimensional settings often employ an $l_1$ regularization or its variants to leverage the potential sparsity of signals and constrain the search space of possible $\bet$ vectors, in hopes that the corresponding estimator, $\widehat{\bet}$ will have better statistical properties, like a smaller estimation error or a higher true positive rate. In the LASSO, the $l_1$ regularization or the penalty, however, is not adaptive in the sense that all predictors $\widehat\beta_i$ are penalized at the same level. 

Recent works have discussed several drawbacks of adopting such a non-adaptive $l_1$ penalty. For example, when the signal is not sufficiently sparse or the sample size not sufficiently large, the LASSO can have a fundamental limit on its power, regardless of the choice of penalty level. To give a specific scenario, under the setting of \cite{su2017false,bu2021characterizing} where $\#\{j:\beta_j\neq 0\}/p=0.5$, the LASSO true positive proportion (TPP) is at most 36.69\%, asymptotically as $n$ and $p$ both grow with $n/p=0.3$  fixed, where the TPP is the proportion of non-zero coefficients in the estimate compared to the truth. This TPP upper bound holds irrespective of the choice of regularization parameter $\lambda$ in the LASSO and is known as the Donoho-Tanner power limit (DT limit). Moreover, the asymptotic estimation error, namely $\plim_p \frac{1}{p} \|\bet - \widehat{\bet}\|^2_2$, where $\plim$ is the probability limit, of LASSO can be significantly worse than that of more adaptive estimators, like the 
sorted $l_1$ penalized estimation (SLOPE) \citep{bu2021characterizing,chen2021asymptotic}. For instance, in the previous example 
with a Bernoulli prior on the elements of the unknown signal $\bet$, the optimal SLOPE can achieve an estimation error that is significantly (e.g.,\ 24\% relatively) lower than that is achievable by the optimal LASSO (c.f.,\ Figure 7 in \cite{bu2021characterizing}).

To be specific, SLOPE \citep{bogdan2015slope,su2016slope} is an adaptive regularization procedure that applies a sorted $l_1$ regularization to penalize the larger entries of the estimate more heavily, as follows:
\begin{align}
\begin{split}  \widehat{\boldsymbol{\beta}}(\blam)&\in\underset{\boldsymbol{b}\in\R^p}{\arg \min } \,\, C(\bm b;\blam) \quad \text{ where } \quad C(\bm b;\blam)
:= \frac{1}{2}\|\boldsymbol{y}-\boldsymbol{X} \boldsymbol{b}\|_{2}^{2}+\sum_{i=1}^{p} \lambda_{i}|b|_{(i)}.
\end{split}
\label{eq:SLOPE}
\end{align}
Here $\blam\in\R^\p$ is a penalty vector with $\lambda_{1} \geq \cdots \geq \lambda_{p}\geq 0$, 
and $\abs b_{(i)}$ denotes the order statistics of the absolute values of $\boldsymbol{b}$ such that $|b|_{(1)} \geq \cdots \geq |b|_{(p)}$. Notice that LASSO is a special case of SLOPE when all $\{\lambda_i\}$ take the same value. 

\subsection{Choices of SLOPE penalty}\vspace{-0.3cm}

For each penalty vector $\blam$, SLOPE in \eqref{eq:SLOPE} is a convex problem \citep{bogdan2015slope}  solvable by existing optimization algorithms like subgradient methods and proximal gradient descent. Furthermore, thanks to the flexibility of its penalty vector, SLOPE performs well in a number of ways according to the pre-defined choice of $\blam$. For instance, if we use the  critical values from the Benjamini-Hochberg procedure \citep{bogdan2015slope}, $\lambda_i=\Phi^{-1}(1-iq/2p)$\footnote{Here $\Phi$ is the cumulative distribution function of standard normal (i.e.,\ $\Phi^{-1}$ is the quantile). 
} for some $q\in(0,1)$, then SLOPE demonstrates two desirable features not available in other methods: (1) SLOPE provably controls the false discovery rate below $q$ under orthogonal design matrices \citep{bogdan2015slope}; (2) SLOPE achieves minimax estimation properties without knowledge about the true sparsity of the signal $\bet$ \citep{su2016slope}. Another example is to use $\lambda_i=\lambda_p+c(p-i)$ for some constant $c>0$, which can group correlated predictors when the data $\X$ is correlated \citep{bondell2008simultaneous,zeng2014decreasing,figueiredo2016ordered}, e.g., the uniform SLOPE where $\lambda_i=\lambda_1 (1-0.99(i-1)/p)$ in \cite{wang2022does}.

However, it is much more difficult to determine good choices of the parameter vector $\blam$ in the general setting. In other words, unless one makes strong assumptions on the data $\mathbf{X}$, it is rarely possible to pre-define the penalty sequence in a smart way. In practice, one may need to conduct hyperparameter tuning in a case-by-case manner for $\blam$. Unfortunately, for large-scale datasets like those often encountered in practice where $p$ can easily be in the order of thousands to millions, tuning a $p$-dimensional penalty vector $\blam$ can be extremely costly if even possible. These computational difficulties necessitate a fundamentally different design of $\blam$ with theoretical guarantees.

In this paper, we propose to consider the so-called `2-level SLOPE' which requires that $\blam$ takes only two non-negative values (clearly, $2\ll p$). Notably, the LASSO is a special instance of 2-level SLOPE, in the sense that LASSO is actually 1-level SLOPE; however, we remark that
2-level SLOPE is the simplest form of a SLOPE penalty that still retains adaptivity.
In fact, as we will introduce in detail in later sections, the performance of 2-level SLOPE has been studied empirically (but neither theoretically nor optimally) in \cite{zhang2021efficient, bu2021characterizing}, where 2-level SLOPE has shown some improved performance compared to the LASSO.

\begin{figure}[!htb]
\centering
\vspace{-0.5cm}
\includegraphics[width=0.46\linewidth]{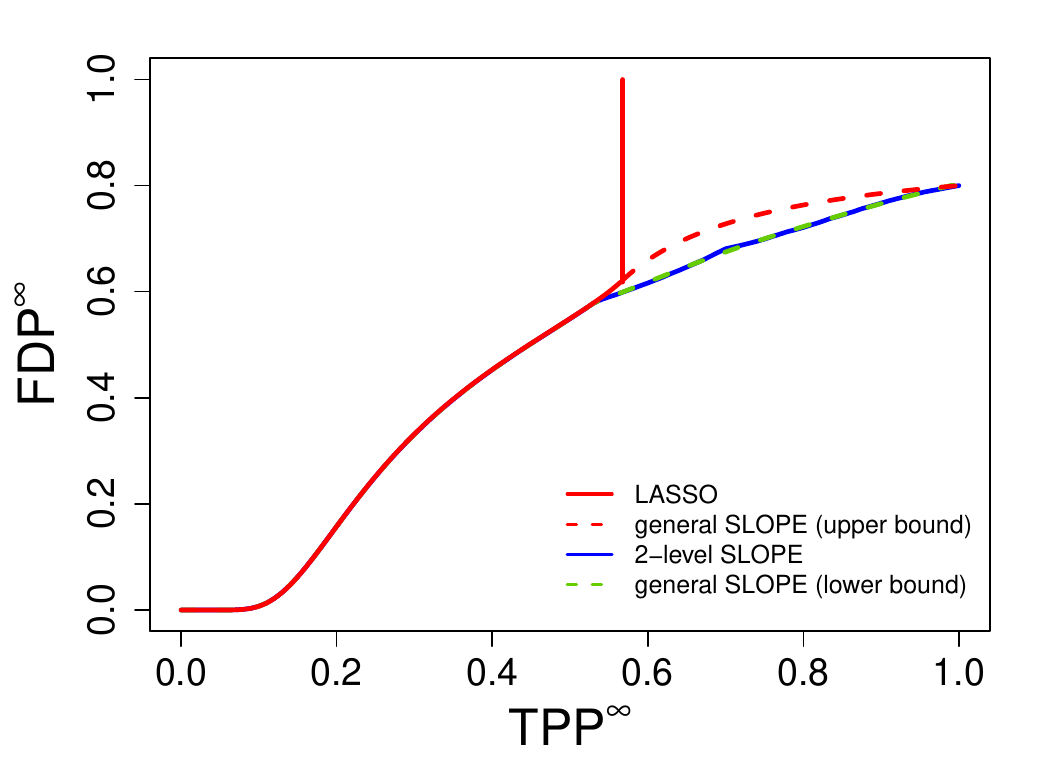}
\includegraphics[width=0.46\linewidth]{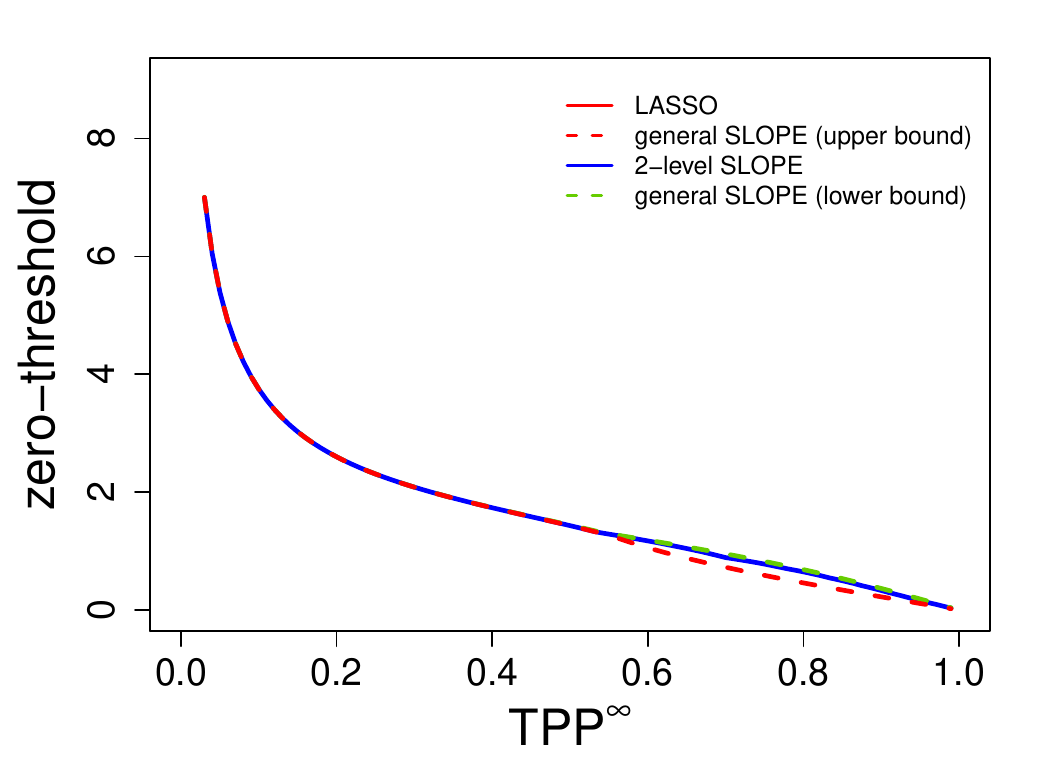}
\caption{The left plot shows the $\TPP$-$\FDP$ trade-off for LASSO (red/solid), general SLOPE (upper bound in red/dashed and lower bound in green/dashed), and 2-level SLOPE (blue/solid). The right plot shows the zero-threshold (see \Cref{def: zero threshold}) for LASSO, general SLOPE, and 2-level SLOPE. Here we have used $\epsilon=0.2,\delta=0.3$ under the assumptions in \Cref{sec:AMP}.}
    \label{fig:summary}
\end{figure}

On the one hand, we study the model selection performance of the general SLOPE, the 2-level SLOPE and the LASSO in terms of the false discovery proportion (FDP) and true positive proportion (TPP). While it is desirable for a model to have a small FDP and a large TPP, we demonstrate in \Cref{fig:summary} that all models show a trade-off between TPP and FDP. In other words, there is an asymptotic lower bound of FDP for each model at any level of TPP, regardless of the choice of prior and penalty.

Interestingly, both the general SLOPE and 2-level SLOPE can overcome the DT limit of LASSO at $\TPP\approx 0.5676$. However, the general SLOPE's trade-off curve beyond a certain $\TPP$ is only bounded (below and above) but not precisely characterized. In contrast, the trade-off for 2-level SLOPE is not only tight (to be shown in \Cref{thm:trade-off}), but also closely matches the minimum $\FDP$ that is given by the lower bound of the trade-off for the general SLOPE (see also \Cref{fig:trade-off curves} for various settings).

\begin{figure}[!htp]
\centering
\includegraphics[width=0.45\linewidth]{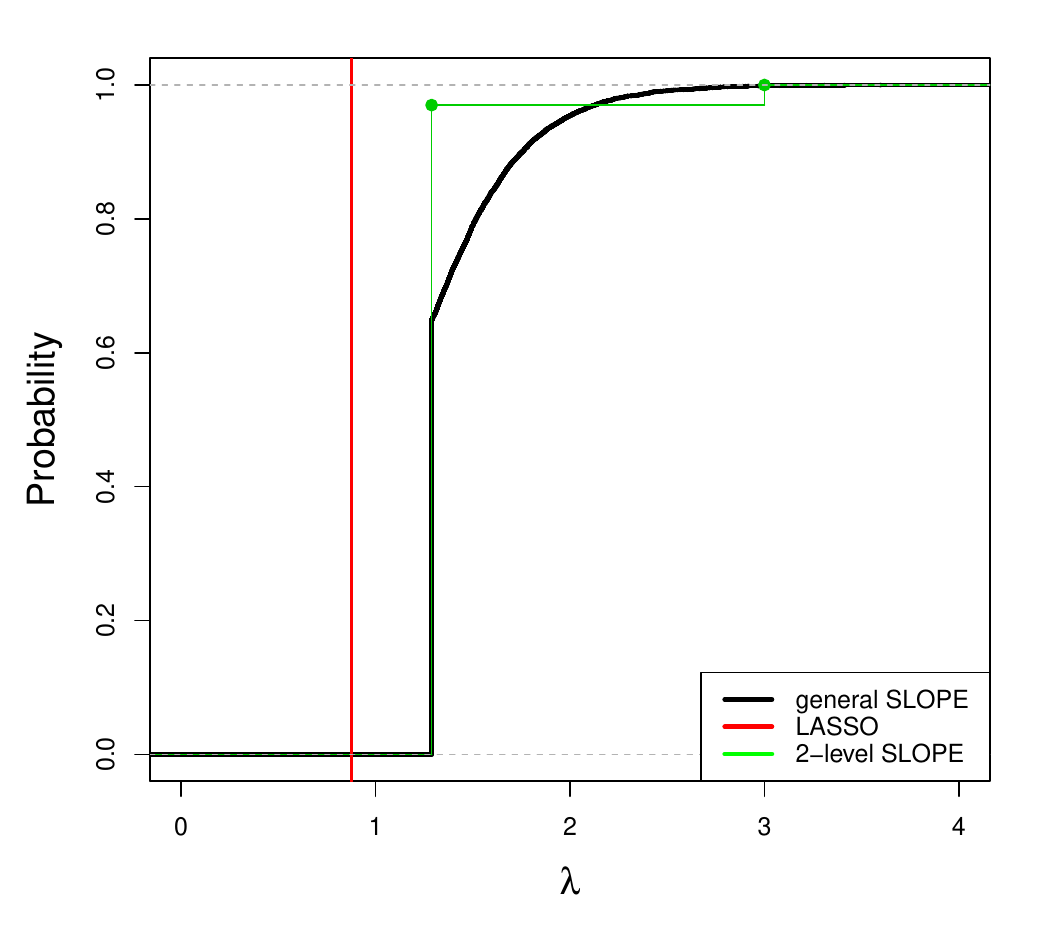}
\vspace{-0.5cm}
    \caption{Here we plot the cumulative distribution function (CDF) of the general SLOPE penalty $\blam$ that minimizes the $\FDP$, as well as the optimal 2-level SLOPE penalty CDF in green and the LASSO penalty CDF in red. In this experiment, $\X\in\R^{n\times p}$ is i.i.d.\ $\mathcal{N}(0,1/n)$ with dimensions $n=300$ and $p=1000$ using $\w\sim\mathcal{N}(\mathbf{0},\mathbf{I}_n)$ and the elements of the true prior $\bet$ are assumed to be i.i.d.\ $\text{Bernoulli}(0.7)*4$. We restrict to $\TPP=0.55$.
 For the LASSO, $\lambda\approx 0.878$; for the optimal 2-level SLOPE, $\blam\approx\langle 3.0,1.3;0.03\rangle$.
}
\label{fig:optimal SLOPE}
\end{figure}

Additionally, we study the FDP and TPP under fixed priors in \Cref{sec:6}. We plot the cumulative distribution functions of $\Lambda$ for the optimal SLOPE that minimizes the $\FDP$ in \Cref{fig:optimal SLOPE}, from which the coefficients of a SLOPE penalty vector $\blam$ are drawn. Specifically, the black curve represents the optimal general SLOPE penalty, which takes an explicit form only because of the simple Bernoulli prior. We emphasize that in general the optimal general SLOPE penalty is implicit and can only be optimized numerically (c.f., Section 4.3 in \cite{bu2021characterizing}). The green curve represents the optimal 2-level SLOPE penalty (given by \Cref{alg:fixed prior simpli}) and the red curve represents the optimal LASSO penalty (given by a grid search).

On the other hand, we also study the optimal SLOPE penalty in terms of minimizing the estimation error $\text{MSE}(\widehat{\bet})=\frac{1}{p} \|\bet - \widehat{\bet}\|^2_2$. In \Cref{sec:6}, we show that 2-level SLOPE has significantly lower MSE than the LASSO, under the constraint that we also require 2-level SLOPE to have lower FDP at the same TPP. Furthermore, in \Cref{sec:error fixed priors}, we remove such constraint and minimize the MSE as a stand-alone problem. Through extensive experiments on independent or correlated $\X$ with Gaussian or heavy-tailed distribution, small or large noises, uniform (non-tied) or Bernoulli (tied) priors, and multiple choices of SLOPE penalty, we observe that 2-level SLOPE achieves consistently smaller or equal MSE compared to the LASSO.

\subsection{Contributions and organization}
Our main contributions are listed as follows and summarized in \Cref{tab:summary}. 
\begin{enumerate}
    \item We formally propose the 2-level SLOPE as a convex minimization problem that is practically applicable to high-dimensional datasets. Specially, we show that 2-level SLOPE has an explicit  proximal operator in the limit, unlike general SLOPE, which makes it easy to provide the exact statistical characterization of 2-level SLOPE including, for example, the asymptotic TPP, FDP, and estimation error (i.e.,\ the mean squared error or MSE), whereas the TPP-FDP trade-off for the general SLOPE can only be approximated in \cite{bu2021characterizing}.
    \item We study the asymptotic TPP-FDP trade-off of the 2-level SLOPE under the approximate message passing (AMP) regime, and derive its tight characterization. We show that 2-level SLOPE captures the essence of the general SLOPE, e.g., breaking the Donoho–Tanner limit that upper bounds the LASSO TPP and achieving 100\% TPP,
    and hence not much additional benefit is gained from using the fully general SLOPE.
    \item We show that, when the  prior distribution on the signal $\bet$ is fixed, the 2-level SLOPE (with well chosen penalties) significantly reduces the estimation error and improves the TPP-FDP tradedoff compared to the LASSO. Moreover, we develop the algorithms to find these penalties efficiently in practice, including scenarios beyond the AMP regime.
\end{enumerate}

\begin{table}[!htb]
    \centering
    \begin{tabular}{c|c|c|c|c}
     \hline
     & limiting scalar function & adaptive penalty& overcoming DT limit& \# hyper-parameters \\\hline
    \textbf{LASSO}     &explicit (see $\eta_\text{soft}$)& no&no&1\\
    \textbf{2-level SLOPE}&explicit (see \eqref{eq:2-level limiting})& yes&yes&3\\
    \textbf{general SLOPE}     &implicit& yes&yes&$\p$\\
    \hline
    \end{tabular}
    \caption{Summary of proximal operators in the SLOPE family and LASSO.}
    \label{tab:summary}
\end{table}

The paper is organized as follows. In \Cref{sec:k-level}, we introduce the adaptive 2-level SLOPE. In \Cref{sec:AMP}, we lay out the AMP regime where our theoretical analysis is carried out. In \Cref{sec:limiting scalar function}, we derive the explicit 2-level SLOPE proximal operator, which is leveraged to derive the theoretical results in later sections. Specifically, we seek the optimal 2-level penalty (in terms of TPP-FDP trade-off) across all priors in \Cref{sec:tpp fdp all priors}; we seek the optimal 2-level penalty (in terms of TPP-FDP trade-off and estimation error) for any fixed prior in \Cref{sec:6} and \Cref{sec:error fixed priors}, respectively.

\section{2-level sorted $\ell_1$ penalized estimation}
\label{sec:k-level}

To avoid optimizing in the high dimensions of $\blam$ space, as is needed for the general SLOPE, we propose to restrict ourselves to a subclass of the possible penalty vectors $\blam$ that were introduced in \eqref{eq:SLOPE}. In particular, we consider a subclass such that the $p$ elements of $\blam$ take only two non-negative values, denoted by
\begin{align}
\blam&=\underbrace{(\lambda_1,\cdots,\lambda_1,}_{\text{about } s p \text{ of } \lambda_1}\underbrace{\lambda_2,\cdots,\lambda_2)}_{\text{about } (1-s)p \text{ of } \lambda_2}:=\langle\lambda_1,\lambda_2;s\rangle \, .
\label{eq:K-level SLOPE seq}
\end{align}
Here $\lambda_1\geq\lambda_2\geq 0$ (with at least one value being non-zero) denote the \textit{penalty levels} and $s\in (0,1)$ represents the \textit{splitting ratio}, at which the penalty changes from the larger level $\lambda_1$ to the lower level $\lambda_2$. Note that when $\lambda_1=\lambda_2$, this reduces to the LASSO penalty. We name such a restricted SLOPE formulation as the \textit{2-level SLOPE} and notice that in such settings, the penalty $\blam$ has three degrees of freedom. Similarly, for any integer $1\leq K\leq p$, we define $K$-level SLOPE as that having at most $K$ distinct penalty levels $\lambda_1\geq\cdots\geq\lambda_K\geq 0$, with ratios $0< s_1,\cdots,s_{K-1}<1$ such that $0 < \sum_k s_k<1$.

Considering \eqref{eq:SLOPE}, we can view the general SLOPE as the $p$-level SLOPE and the LASSO as a special subclass of 2-level SLOPE, namely the 1-level SLOPE having only a single $\lambda_1 > 0$, which is the only non-adaptive penalty sequence in all possible 2-level SLOPE penalties. In other words, for $p > 2$,
\begin{center}
\fbox{LASSO (1-level)$\ \subset$ 2-level SLOPE$\ \subset$ SLOPE ($p$-level).}
\end{center}

\subsection{2-level SLOPE optimization program}
Mathematically speaking, 2-level SLOPE gives rise to a simplified version of  \eqref{eq:SLOPE}, that is
\begin{align}
\begin{split}
  \widehat{\boldsymbol{\beta}}(\lambda_1,\lambda_2,s)
  & = \underset{\bm b\in\R^\p}{\arg\min }\,\, C(\bm b;\lambda_1,\lambda_2,s)\,,
    \\
\text{where } C(\bm b;\lambda_1,\lambda_2,s)  &\equiv \frac{1}{2}\|\boldsymbol{y}-\boldsymbol{X} \boldsymbol{b}\|_{2}^{2}+\lambda_{1}\sum_{i/p< s} |b|_{(i)}+\lambda_{2}\sum_{i/p\geq s} |b|_{(i)}\,.
 \end{split}
  \label{eq:K-level}
\end{align}
Although prior to this work, its formulation has not been explicitly described and its theoretical analysis is lacking, 2-level SLOPE is well motivated via its application in previous works. For example, in \cite{bu2021characterizing}, a 2-level SLOPE is constructed to overcome the DT limit asymptotically. That is, under some conditions (indeed, the same as are considered here: see \Cref{sec:AMP}), the LASSO has an upper limit on the TPP regardless of the choice of the penalty coefficient; however, there exists a 2-level SLOPE that can break this limit and achieve 100\% power as shown in \cite[Proposition 2.3]{bu2021characterizing}. As another example, under the same conditions, it is shown both empirically \citep{zhang2021efficient} and theoretically \citep{bu2021characterizing} that 2-level SLOPE sometimes strictly outperforms the LASSO in terms of the estimation error, up to an improvement of 50\% in some cases. Specifically, 2-level SLOPE can outperform LASSO, uniform SLOPE (SLOPE with a specific penalty sequence; see Section \ref{sec:error fixed priors}) and ridge regression in terms of estimation error (see Figure 4 of \cite{wang2022does}, also extended in \Cref{fig:arian}) in some settings including when the prior $\bet$ has tied elements, where having tied elements means there is a non-zero probability that multiple elements of $\bet$ take the same value, or equivalently that the elements of $\bet$ have a repeated magnitude. 

Because the 2-level SLOPE is a subclass of the general SLOPE, it is indeed a convex minimization problem that is readily solvable by the isotonic regression and proximal gradient descent using the  \texttt{SLOPE} R-package, according to \cite[Proposition 1.2]{bogdan2015slope}. 


\subsection{Proximal operator}\vspace{-0.3cm}

To solve the general SLOPE problem in \eqref{eq:SLOPE}, including its 2-level sub-problem \eqref{eq:K-level}, one needs to iteratively apply the proximal gradient descent by \cite{bogdan2015slope}: for arbitrarily initialization, iterate
\begin{align}
\bet^{k+1}=\eta_\text{SLOPE}(\bet^k
+ t_k \X^\top(\y-\X\bet^k);t_k \blam) \, ,
\label{eq:fista}
\end{align}
where $k$ is the index of iteration, $\bet^k$ converges to the true minimizer $\widehat\bet$ in \eqref{eq:K-level}, and $t_k$ is the step size. The proximal operator of SLOPE is defined through
\begin{align}
  \eta_\text{SLOPE}(\boldsymbol{v};\boldsymbol{\theta}):=\underset{\boldsymbol{b}}{\arg\min }\,\, \frac{1}{2}\|\boldsymbol{v}-\boldsymbol{b}\|^{2}+\sum_{i=1}^p \theta_i|b|_{(i)} \, .
  \label{eq:SLOPE prox}
\end{align}
For the LASSO, the penalty vector $\thet$ can be replaced by a penalty scalar $\theta > 0$, and the proximal operator is known as the soft-thresholding function $\eta_\text{soft}(\boldsymbol{v};\theta)=\text{sign}(\boldsymbol{v})(|\boldsymbol{v}|-\theta)_+$.
However, for general $\boldsymbol{\theta}\in\R^\p$, the SLOPE proximal operator has no explicit form and at each iteration \eqref{eq:SLOPE prox} has to be solved by the pool adjacent violators algorithm (PAVA) in \cite{brunk1972statistical,kruskal1964nonmetric,bogdan2015slope}. Nevertheless, it is well known that SLOPE often produces estimators with shared values, meaning that the $p$ values in the estimator may share a number of magnitudes. For example, $\eta_\text{SLOPE}([5,-4,0.5];[4,1,0.7])=[2,-2,0]$ has a shared magnitude at 2. 

Throughout the paper, we define the set of shared magnitudes of a vector $\bm w$ as $\{\bm w\}_\text{shared}:=\{h:|\{i:|w_i|=h\}|>1\}$, and use the $l_0$ norm to count the non-zero shared magnitudes. For instance, $\{(3,3,-1,0,2,1,0)\}_\text{shared}:=\{3,1,0\}$ and $\|\{3,1,0\}\|_0=2$. Lemma~\ref{lem:1 flat} below, proved in \Cref{app:lem1proof}, shows that for 2-level SLOPE, the proximal operator shares at most one non-zero magnitude (the meaning is clarified in the lemma statement, by taking $K=2$). 

\begin{lemma}\label{lem:1 flat}
For the $K$-level SLOPE problem, the components of the minimizer $\widehat\bet$ share at most $K-1$ non-zero magnitudes, meaning $\|\{\widehat\bet\}_\textup{shared}\|_0\leq K-1$.
\end{lemma}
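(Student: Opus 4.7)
The plan is to analyze $\widehat\bet$ via its fixed-point characterization as a SLOPE proximal output, and then to exploit the Pool Adjacent Violators Algorithm (PAVA) description of $\eta_\text{SLOPE}$ under the $K$-level penalty structure. Since any SLOPE minimizer satisfies
$$\widehat\bet \;=\; \eta_\text{SLOPE}\!\bigl(\widehat\bet + t\X^\top(\y - \X\widehat\bet);\, t\blam\bigr)$$
for every step-size $t>0$, it suffices to bound the number of distinct non-zero shared magnitudes in the output of $\eta_\text{SLOPE}(\bm v;\, t\blam)$ for an arbitrary input $\bm v$. By the standard sign- and permutation-equivariance of $\eta_\text{SLOPE}$, I would reduce to the canonical case where $\bm v$ has non-negative entries in weakly decreasing order; then $\eta_\text{SLOPE}(\bm v;\, t\blam)$ coincides with the isotonic regression of $u_i := v_i - t\lambda_i$ subject to $b_1\geq\cdots\geq b_p\geq 0$, which PAVA solves.

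The key structural step is as follows. PAVA produces a partition of $\{1,\ldots,p\}$ into maximal consecutive blocks with strictly decreasing block-averages, and distinct non-zero shared magnitudes of the output correspond precisely to blocks of cardinality $\geq 2$ with positive block-average (any trailing block collapsed to zero by the non-negativity clip contributes only a zero magnitude, which is excluded from $\|\cdot\|_0$ by definition). The $K$-level structure of $\blam$ partitions $\{1,\ldots,p\}$ into $K$ maximal \emph{level regions} separated by $K-1$ \emph{penalty jumps}. Within any single level region, $u_i = v_i - t\lambda_k$ inherits weak monotone-decrease from $v_i$, so the isotonic constraint is already satisfied internally; by the standard PAVA characterization of blocks as groups forced by violations of monotonicity, no block of size $\geq 2$ can lie entirely within one level region. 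Since distinct blocks are disjoint and only $K-1$ penalty jumps exist, at most $K-1$ blocks can have size $\geq 2$, giving $\|\{\widehat\bet\}_\text{shared}\|_0 \leq K-1$.

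The main obstacle I anticipate is handling accidental input ties $v_i = v_{i+1}$ inside a single level region, which would naively produce a `natural-tie' PAVA block of size $\geq 2$ disjoint from any penalty jump and thereby threaten the argument above. I would resolve this via the SLOPE KKT subdifferential conditions at $\widehat\bet$: a cluster of tied magnitudes in $\widehat\bet$ whose sort positions lie entirely within one level region forces all corresponding signed gradient coordinates to coincide with that region's $\lambda_k$, which is a strict codimension condition on $(\X,\y)$ that one rules out by a short perturbation/genericity argument. After this reduction, the PAVA counting of boundary-crossing blocks delivers the $K-1$ bound unconditionally, and the other minor subtlety—the non-negativity clipping at the tail—is absorbed into the single uncounted zero block.
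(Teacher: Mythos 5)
Your proposal is correct and follows essentially the same route as the paper's proof: reduce to the proximal/PAVA computation via the fixed-point identity $\widehat\bet=\eta_\text{SLOPE}(\widehat\bet+t\X^\top(\y-\X\widehat\bet);t\blam)$, observe that the $K$-level penalty introduces only $K-1$ monotonicity violations (vertical drops) in the differenced sequence, and count the resulting averaged blocks, with the clipped tail contributing only the uncounted zero magnitude (the paper writes this out in detail for $K=2$ and asserts the generalization). The one place you go beyond the paper is the explicit treatment of accidental input ties inside a single penalty level --- the paper simply assumes without loss of generality that the sorted input is strictly increasing --- so your KKT/genericity remark is a clarification of a point the paper glosses over rather than a genuinely different argument.
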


\section{Approximate message passing regime}
\label{sec:AMP}
To prepare for our asymptotic study of the 2-level SLOPE performance, throughout the paper we work under the following assumptions that are conventional in the literature of approximate message passing (AMP) \citep{donoho2009message,bayati2011dynamics,bu2019algorithmic} and convex Gaussian min-max theorem (CGMT) \citep{hu2019asymptotics}.

\begin{enumerate}
    \item[(A1)] The data matrix $\X$ has independent and identically-distributed (i.i.d.) elements that follow a Gaussian distribution with mean 0 and variance $1/n$.
    
    \item[(A2)] The signal $\boldsymbol{\bet}$ has elements that are i.i.d.\ with distribution $\Pi$, where $\PP(\Pi\neq 0)=\epsilon$ for some $0<\epsilon<1$ and $\mathbb{E}(\Pi^{2} \max \{0, \log \Pi\})<\infty$.
    
    \item[(A3)] The noise $\bm w$ has elements that are i.i.d.\ with distribution $W$ and $\sigma^{2}:=\mathbb{E}(W^{2})<\infty$.
    
    \item[(A4)] The penalty vector $\boldsymbol{\lambda}(p)=\left(\lambda_{1}, \ldots, \lambda_{p}\right)$ contains the order statistics of $p$ i.i.d.\ realizations of a non-negative random variable  $\Lambda$, with $\mathbb{E}(\Lambda^{2})<\infty$.
    
    \item[(A5)] The ratio $n/p$ reaches a scaling constant $\delta \in (0,\infty)$ in the so-called `large system limit' where $n$ and $p \rightarrow \infty$.
\end{enumerate} 
The assumptions above enable the tight characterization of the asymptotic properties of the SLOPE estimator, without which it is difficult to analyze the estimation error as the SLOPE (including the LASSO) has no closed form solution in general. 

We note that the assumption $\PP(\Pi\neq 0)=\PP(\beta_j\neq 0)=\epsilon$ in (A2) is not required by the AMP theory, but is required to properly define the TPP and FDP as follows.

\begin{align}
\operatorname{TPP}(\boldsymbol{\beta}, \blam) & =\frac{\left|\left\{j:|\widehat{\beta}(\blam)_j|>0, \, \beta_j \neq 0\right\}\right|}{\left|\left\{j: \beta_j \neq 0\right\}\right|} \,, \quad 
\operatorname{FDP}(\boldsymbol{\beta}, \blam) =\frac{\left|\left\{j:|\widehat{\beta}(\blam)_j|>0, \, \beta_j=0\right\}\right|}{\left|\left\{j:|\widehat{\beta}(\blam)_j|>0\right\}\right|} \,.
\end{align}
\begin{rem}
It is clear that the 2-level SLOPE satisfies assumption (A4) as $\mathbb{E}(\Lambda^2)=s\lambda_1^2+(1-s)\lambda_2^2<\infty$ with $s \in (0,1)$ and $0 \leq \lambda_2 \leq \lambda_1 < \infty$. Specifically, we can write
\begin{align}
\Lambda=\langle\lambda_1,\lambda_2;s\rangle=
\begin{cases}
\lambda_1 &\text{ w.p.\ } \,\,\, s \,,
\\
\lambda_2 &\text{ w.p.\ } \,\,\, 1-s \,.
\end{cases}
\label{eq:asymptotic lambda}
\end{align}

Note that unlike the general SLOPE, the number of penalty levels in 2-level SLOPE is fixed at two and does not go to infinity when $n,p\to\infty$, making it easier to tune and analyze. 
\end{rem}

\subsection{Approximate message passing theory}\vspace{-0.3cm}
Generally speaking, AMP \citep{donoho2009message,bayati2011dynamics} is a class of optimization algorithms that both converge to the true minimizer in \eqref{eq:SLOPE} and exactly characterize the minimizer's asymptotic distribution under certain problem settings. See \cite{feng2022unifying} for a tutorial on AMP. From the AMP theory for SLOPE (see \cite[Theorem 3]{bu2019algorithmic} and \cite[Theorem 1]{hu2019asymptotics}), the SLOPE estimator $\widehat\bet(\blam)$ can be informally viewed as 
\begin{align}
\widehat\bet(\blam)\approx \eta_{\Pi+\tau Z,\A\tau}\left(\boldsymbol{\beta}+\tau\Z\right)\to \eta_{\Pi+\tau Z,\A\tau}\left(\bm\Pi+\tau \Z\right)\,,
\label{eq:AMP result}
\end{align}
in which $\Z\in\R^\p$ is an i.i.d. standard Gaussian vector, $Z$ is a standard Gaussian $N(0,1)$, the random variable $\Pi$ is the signal prior defined in Assumption (A2) from which $\bm{\Pi}\in\R^\p$ is i.i.d. drawn, and $\eta_{\Pi+\tau Z,\A\tau}$ is the limiting scalar function \citep{hu2019asymptotics} with $\A$ and $\tau$ defined through the original penalty $\Lambda$ and prior $\Pi$ in the following paragraph via equations \eqref{eq:calibration} and \eqref{eq:state evolution}.
Specifically, the limiting scalar function $\eta$ is an element-wise mapping defined as follows. For sequences $\{\boldsymbol{\theta}(p)\}$ and $\{\boldsymbol{v}(p)\}$ growing in $p$, with empirical distributions that weakly converge to distributions $\Theta$ and $V$, respectively, there exists a limiting scalar function $\eta$ (determined by $\Theta$ and $V)$ such that,
\begin{align}
\frac{1}{p}\left\|\eta_\text{SLOPE}(\boldsymbol{v}(p) ; \boldsymbol{\theta}(p))-\eta_{V, \Theta}(\boldsymbol{v}(p))\right\|^2 \rightarrow 0\,  \text{ as $p \rightarrow \infty$}.
\label{eq:limiting scalar}
\end{align}
In light of \eqref{eq:AMP result}, we set $\thet=\bfalph\tau$ and $\bm v=\bet+\tau \Z$ where $\bfalph$ is the normalized SLOPE penalty to be introduced in \eqref{eq:alpha seq}. We note that the SLOPE proximal operator is implicit, but asymptotically separable, as the function applies itself coordinatewise to the vector $\bet+\tau \Z$. 

To determine $\A := \A(\Lambda)$ and $\tau := \tau(\Lambda)$  in \eqref{eq:AMP result},  where we will sometimes drop the dependence on $\Lambda$ to save space,  we introduce two key equations for the analysis, namely the calibration and the state evolution equations \citep{bu2019algorithmic, hu2019asymptotics}, for which $(\A(\Lambda),\tau(\Lambda))$ is the unique solution. 
We state these two equations below: given $(\delta, \Pi, \Lambda)$,
\begin{align}
\textbf{Calibration:}\quad  \Lambda&=\A \tau\left(1-\frac{1}{\delta} \, \mathbb{E}\left[\eta'_{\Pi+\tau Z,\A\tau}(\Pi+\tau Z)\right]\right)\,,
  \label{eq:calibration}
  \\
\textbf{State evolution:}\quad  \tau^{2}&=\sigma^{2}+\frac{1}{\delta} \, \mathbb{E}\left[\left(\eta_{\Pi+\tau Z,\A\tau}(\Pi+\tau Z)-\Pi\right)^{2}\right]\,.
\label{eq:state evolution}
\end{align}
In the above, $\Pi$, $\Lambda$, and $\delta$ are defined in Assumptions (A2), (A4), and (A5) and $\eta'$ is the derivative of the limiting scalar function $\eta$. On one hand, the calibration equation \eqref{eq:calibration} describes a mapping between the original penalty distribution $\Lambda$ and the normalized penalty distribution, denoted $\A$. This mapping is bijective by \cite[Proposition II.6]{bu2019algorithmic}, allowing us to work with $\Lambda$ or the normalized $\A$, exchangeably. Using the same notation as in \eqref{eq:K-level SLOPE seq}, we write the normalized 2-level SLOPE penalty for $\alpha_1 \geq \alpha_2 \geq 0$ (with at least one value being non-zero) as
\begin{align}
&\bfalph=\langle \alpha_1,\alpha_2;s\rangle
:=\underbrace{(\alpha_1,\cdots,\alpha_1,}_{\text{about }s p}\underbrace{\alpha_2,\cdots,\alpha_2)}_{\text{about }(1-s) p} \,,
\label{eq:alpha seq}
\end{align}
and its asymptotic distribution as
\begin{align}
\A=\langle \alpha_1,\alpha_2;s\rangle=
\begin{cases}
\alpha_1 &\text{ w.p.\ } \,\,\, s  \,,
\\
\alpha_2 &\text{ w.p.\ }  \, \, \, 1-s  \,.
\end{cases}
\label{eq:A two-level}
\end{align}
Notice that $\bfalph$ and $\blam$ have the same splitting ratio $s$ because they are parallel to each other according to \eqref{eq:calibration}.

On the other hand, we employ Theorem 1 of \cite{bu2019algorithmic} to show that the state evolution equation \eqref{eq:state evolution} uniquely defines $\tau$ through $(\Pi,\Lambda)$. Consequently, we define the normalized signal distribution and its non-zero component as
\begin{align}
\pi:=\Pi/\tau, \qquad \text{ and } \qquad  \pi^*:=(\Pi|\Pi\neq 0)/\tau=(\pi|\pi\neq 0).
\label{eq:define pi}
\end{align}
Now that we have established the equivalence between the original $(\Pi,\Lambda)$ and the normalized $(\pi,\A)$, we will sometimes work with $(\pi,\A)$ when it makes the analysis easier.

These notations are summarized in Table \ref{tab:notations} below.
\begin{table}[!htb]
    \centering
    \begin{tabular}{c|c|c}
    \hline
    finite dimension & asymptotic distribution & normalized distribution \\\hline
    $\bet$ & $\Pi$ &$\pi$\\
    $\blam$ & $\Lambda$ &$\A$ \\
    $\widehat\bet$ & $\eta_{\Pi+\tau Z,\A\tau}(\Pi+\tau Z)$ &$\eta_{\pi+Z,\A}(\pi+Z)$\\
    \hline
    \end{tabular}
    \caption{Notations of the prior and the SLOPE penalty in different regimes.}
    \label{tab:notations}
\end{table}

\subsection{TPP and FDP}
Under the characterization of the asymptotic SLOPE distribution given in \eqref{eq:AMP result}, we define $\TPP$ and $\FDP$ as the large system limits of the TPP and FDP. 
Following \cite[Lemma 3.1]{bu2021characterizing}, as $n,p\to\infty$ and under convergence in probability,
\begin{align}
\begin{split}
\TPP(\Pi, \Lambda) &:=\mathbb{P}(\eta_{\pi+Z, \A}(\pi^*+Z) \neq 0)  \,\,\, \text{ and }  \,\,\,\FDP(\Pi, \Lambda) :=\frac{(1-\epsilon) \mathbb{P}(\eta_{\pi+Z, \A}(Z) \neq 0)}{\mathbb{P}(\eta_{\pi+Z, \A}(\pi+Z) \neq 0)}.
\end{split}
\label{eq:tppfdp comp}
\end{align}
Recall that $\epsilon$ is defined via $\PP(\Pi\neq 0)=\epsilon$ in Assumption (A2) above.  Here we have used $\eta_{\Pi+\tau Z,\A\tau}=\tau\eta_{\pi+Z, \A}$ so as to work with the normalized prior $\pi$ and penalty $\A$.

In this paper, we are interested in the asymptotic \textbf{$\TPP$-$\FDP$ trade-off curve}, by which we mean the following. We characterize the trade-off curve for all pairs of problem parameters $(\epsilon, \delta)$ where $\epsilon$ is the signal sparsity and $\delta$ is the undersampling ratio from (A5). In more detail, the trade-off curve answers the following: for any given $0 \leq u \leq 1$ with $\TPP = u$, what is the infimum of achievable $\FDP$ values, where we say that a given $\FDP$ value is achievable if there exists some $\epsilon$-sparse signal prior $\Pi$ (having $\mathbb{P}(\Pi \neq 0) = \epsilon$) and some penalty distribution $\Lambda$ attaining this $\FDP$.

Importantly, the TPP and FDP values exhibit an asymptotic trade-off within the SLOPE family, i.e.,\ higher TPP comes at the price of higher FDP. To characterize the SLOPE trade-off, we will leverage a critical quantity from \cite[Definition 4.1]{bu2021characterizing}, the zero-threshold $\alpha(\Pi, \Lambda)$, which is defined below. 
\begin{definition}
\label{def: zero threshold}
Let $(\Pi, \Lambda)$ be a pair of prior and penalty distributions with normalized version $(\pi, \mathrm{A})$, where $\pi:=\Pi/\tau$ and $\A$ is defined via \eqref{eq:calibration} and \eqref{eq:state evolution}. The \emph{zero-threshold}, denoted $\alpha=\alpha(\Pi, \Lambda)\equiv\alpha(\pi,\A)$ is defined such that $\eta_{\pi+Z, \mathrm{A}}(x)\neq 0$ if and only if $|x|>\alpha$.
\end{definition}
We note that the zero-threshold always exists and is positive if the SLOPE solution contains zero values (see \cite[Proposition C.5]{bu2021characterizing}). Put differently, conditioning on the prior $\Pi$, there is an one-to-one mapping between the SLOPE penalty $\Lambda$ (or $\A$) and the zero-threshold $\alpha$. Specially, the zero-threshold is equivalent to the normalized penalty $\A$ for the LASSO.
Using the zero-threshold, the limiting values in \eqref{eq:tppfdp comp} can be simplified to
\begin{align}
&\TPP(\Pi, \Lambda)=\mathbb{P}\left(\left|\pi^*+Z\right|>\alpha(\Pi, \Lambda)\right)\,, 
\label{eq:tppfdp rewrite2}\\
&\FDP(\Pi, \Lambda)=\frac{2(1-\epsilon) \Phi(-\alpha(\Pi, \Lambda))}{2(1-\epsilon) \Phi(-\alpha(\Pi, \Lambda))+\epsilon \cdot \operatorname{TPP}^{\infty}(\Pi, \Lambda)} \,,
\label{eq:tppfdp rewrite}
\end{align}
by directly applying $\PP(\eta_{\pi+Z, \A}(x)\neq 0)=\PP(|x|>\alpha)$ and decomposing $\PP(\eta_{\pi+Z, \A}(\pi+Z)\neq 0)=(1-\epsilon)\PP(\eta_{\pi+Z, \A}(Z)\neq 0)+\epsilon\PP(\eta_{\pi+Z, \A}(\pi^*+Z)\neq 0)$, where the last term is $\TPP$.

Note from the equations above that for fixed $\TPP=u$, the formula of $\FDP$ in \eqref{eq:tppfdp rewrite} is decreasing in $\alpha$. Therefore, given any $\TPP=u$, finding the infimum of achievable $\FDP(\alpha)$ values is equivalent to finding the supremum of feasible zero-thresholds $\alpha$. 
We say that a zero-threshold is `feasible' if plugged into \eqref{eq:tppfdp rewrite2}, we find  $\TPP=u$ for some $(\Pi,\Lambda)$. As rigorously proven in \cite{su2017false,bu2021characterizing}, the supremum of feasible $\alpha$ exists and is not infinite, since in \eqref{eq:tppfdp rewrite2} $\TPP(\alpha)\to 0$ as $\alpha\to\infty$, which is visualized in the right sub-plot of \Cref{fig:summary}. With this in mind, we define some useful quantities.

\begin{enumerate}
    \item \textbf{All priors scenario: } For the general trade-off curve over all priors, we consider
\begin{align}
\alpha^*(u):=\sup_{\Pi, \Lambda} \, \alpha(\Pi, \Lambda) \quad \text{ s.t. } \quad \TPP(\Pi,\Lambda)=u\,.
\label{eq:all prior max zero thres}
\end{align}
\item \textbf{Fixed prior scenario: } For the trade-off curve over a fixed prior $\Pi$, we consider
\begin{align}
\alpha^*(u):=\sup_{\Lambda} \, \alpha(\Pi, \Lambda) \quad \text{ s.t. } \quad \TPP(\Pi,\Lambda)=u\,.
\label{eq:fixed prior max zero thres}    
\end{align}
\end{enumerate}
Either way, we derive the minimum $\FDP$ -- over $(\Pi,\Lambda)$ or only $\Lambda$ -- on the SLOPE trade-off
\begin{align}
\min \, \FDP(\Pi,\Lambda; \delta, \epsilon):=\frac{2(1-\epsilon) \Phi\left(-\alpha^*(u)\right)}{2(1-\epsilon) \Phi\left(-\alpha^*(u)\right)+\epsilon u} \quad \text{ s.t. } \quad\TPP(\Pi,\Lambda)=u\, .
\label{eq:fdp is function of tpp}
\end{align}

\section{Explicit 2-level SLOPE limiting scalar function}
\label{sec:limiting scalar function}
In fact, following \cite[Definition 4.2]{bu2021characterizing}, the limiting scalar function of any general SLOPE can also be written as an adaptive soft-thresholding function, $\eta_{\pi+Z,\A}(x)=\eta_\text{soft}(x;\widehat\A_\text{eff}(x))$, with an implicit penalty function $\widehat\A_\text{eff}$. We claim that the limiting scalar function of the 2-level SLOPE (including the LASSO), however, is explicit and in this section we derive its form.
\begin{align}
\begin{split}
\textbf{LASSO: } \,\,\eta_{\pi+Z,\A}(x)&=\eta_\text{soft}(x;\alpha) \,,
\\
\textbf{2-level SLOPE: } \,\, \eta_{\pi+Z,\A}(x)&=\eta_\text{soft}(x;\Aeff(x;\alpha_1,\alpha_2,s)) \,,
\end{split}
\label{eq:slope eff}
\end{align}
where $\Aeff$ is termed as the effective penalty and will be defined in \eqref{eq: 2level Aeff} below. 

\begin{theorem}
\label{thm:1 point mass}
Under Assumptions (A1)-(A5), the asymptotic distribution of the 2-level SLOPE solution in \eqref{eq:K-level}  is $\widehat\Pi:=\eta_{\Pi+\tau Z,\A\tau}\left(\Pi+\tau Z\right)=\tau\eta_{\pi+ Z,\A}\left(\pi+Z\right)$ where $(\tau,\A)$ depends on $(\Pi,\Lambda)$ via \eqref{eq:calibration} and \eqref{eq:state evolution}. Denoting $\A=\langle\alpha_1,\alpha_2;s\rangle$, we obtain
\begin{align}
\begin{split}
\textbf{2-level SLOPE}
\\
\textbf{limiting scalar function}
\end{split}
:\, \, \eta_{\pi+Z,\A}(x)=
\begin{cases}
\eta_\textup{soft}(x;\alpha_1) &\text{ if } \alpha_1+h < |x|,
\\
\textup{sign}(x) \max\{h,0\} &\text{ if } \alpha_2+h<|x|<\alpha_1+h, \\
\eta_\textup{soft}(x;\alpha_2) &\text{ if } |x|<\alpha_2+h,
\end{cases}
\label{eq:2-level limiting}
\end{align}
in which the value of $h$ is unique as derived in  \Cref{lem:q1q2h}.
\end{theorem}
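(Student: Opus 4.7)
The plan is to derive the piecewise formula by (i) characterizing the 2-level SLOPE proximal minimizer at finite $p$ via its KKT / PAVA structure, (ii) invoking \Cref{lem:1 flat} to collapse that structure into three regimes indexed by a single shared magnitude $h_p$, and (iii) passing to the large-system limit using the asymptotic separability of the SLOPE proximal operator expressed in \eqref{eq:limiting scalar}.

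First, I would fix finite $p$ and analyze the minimizer $\widehat{\bm b}$ of $\frac{1}{2}\|\bm v - \bm b\|^2 + \theta_1 \sum_{i/p < s} |b|_{(i)} + \theta_2 \sum_{i/p \geq s}|b|_{(i)}$, whose solution is obtained by applying PAVA to the sorted sequence $|\bm v|_{(i)} - \theta_i$. By \Cref{lem:1 flat} with $K=2$, $\widehat{\bm b}$ carries at most one non-zero shared magnitude, call it $h_p \geq 0$, plus possibly a block of zeros. I would then identify three regimes based on $|v_i|$. If $|v_i| > \theta_1 + h_p$, the coordinate lives strictly above the pooled plateau, the PAVA boundary does not affect it, so it sees only the penalty $\theta_1$ and stationarity gives $\widehat{b}_i = \sgn(v_i)(|v_i| - \theta_1)$; this is self-consistent since $|v_i|-\theta_1 > h_p$. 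If $|v_i| < \theta_2 + h_p$, the coordinate lives strictly below the plateau, sees only $\theta_2$, and satisfies $\widehat{b}_i = \sgn(v_i)(|v_i|-\theta_2)_+ < h_p$. The remaining coordinates with $|v_i|\in(\theta_2+h_p,\, \theta_1+h_p)$ are pooled by PAVA onto the shared level, giving $\widehat{b}_i = \sgn(v_i)\max\{h_p,0\}$; the ``$\max$'' handles the degenerate case where the plateau has merged into the zero block.

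Next, I would pass to the large-$p$ limit. By \eqref{eq:limiting scalar} and the AMP results cited around \eqref{eq:AMP result}, once the empirical distributions of $\bm v$ and $\bm\theta$ converge to $\Pi+\tau Z$ and $\A\tau$, the proximal operator acts coordinatewise via a deterministic scalar function. The plateau $h_p$ is pinned down implicitly by two PAVA balance identities: that the mass of coordinates above the plateau matches the split $\lfloor sp\rfloor/p$, and that the pooled value equals the coordinatewise average of $|v_i|-\theta_i$ over the pooled band. Under (A1)--(A5) these empirical identities concentrate on a pair of deterministic integral equations in $(h,\alpha_1,\alpha_2,s,\pi)$, which are exactly those appearing in the soon-to-be-stated \Cref{lem:q1q2h} and that have a unique root $h$. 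Substituting $\theta_i=\alpha_i\tau$ and normalizing by $\tau$ using the homogeneity $\eta_{\Pi+\tau Z,\A\tau}=\tau\,\eta_{\pi+Z,\A}$ produces exactly the three-case formula \eqref{eq:2-level limiting}.

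The main obstacle will be making the limit $h_p \to h$ rigorous, since $h_p$ is defined only implicitly through PAVA on random input. I would handle this by (a) showing monotonicity of the PAVA balance functionals in $h$, (b) applying the uniqueness of the limiting fixed point from \Cref{lem:q1q2h} together with a standard continuity/compactness argument to force any subsequential limit of $h_p$ to equal $h$, and (c) treating the boundary case $h=0$ separately so that the ``$\max\{h,0\}$'' in the middle branch is consistent with the middle band simply being absorbed into the zero branch. Given these, the piecewise formula carries over verbatim to the limit.
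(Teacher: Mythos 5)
Your proposal is correct in outline but takes a genuinely different route from the paper's proof. The paper never analyzes the finite-$p$ proximal operator and then passes to the limit; instead it starts from the variational characterization of the limiting scalar function in \cite[Proposition 1]{hu2019asymptotics}, namely that $\eta_{V,\Theta}$ minimizes $\tfrac12\E[(V-g(V))^2]+\int_0^1 F_\Theta^{-1}(u)F_{|g(V)|}^{-1}(u)\,du$ over odd, non-decreasing, $1$-Lipschitz $g$. For a two-valued $\Theta$ this objective splits into two conditional least-squares blocks; discretizing and imposing oddness turns it into an isotonic regression whose input has downward jumps at $\pm|V|_{1-s}$ and at $0$, so its PAVA solution has three plateaus --- the middle one pinned at $0$ by oddness, the outer two at $\pm h'$ --- which yields \eqref{eq:2-level limiting} directly in the continuum. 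What the paper's route buys is that the limit is already built into the object being computed, so no convergence of a random finite-$p$ plateau height is needed; what your route buys is a more constructive argument that matches \Cref{fig:normalized PAVA} and reuses \Cref{lem:1 flat} rather than the variational machinery. The load-bearing step in your version is exactly the one you flag: proving $h_p\to h$. Your plan (monotonicity of the balance functional in $h$, uniqueness from \Cref{lem:q1q2h}, subsequential compactness) is sound --- the monotonicity is in fact established inside the proof of \Cref{lem:q1q2h} --- but to close it you would still need uniform (Glivenko--Cantelli type) convergence of the empirical quantile function of $|\boldsymbol v|$ so that the PAVA balance identities concentrate, plus an $L^2$ bound on the shrinking boundary bands where the finite-$p$ and limiting regime assignments disagree, before invoking \eqref{eq:limiting scalar} to identify your candidate with $\eta_{V,\Theta}$. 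Two minor corrections: in your first regime write $\eta_\textup{soft}(v_i;\theta_1)$ with the positive part rather than $\sgn(v_i)(|v_i|-\theta_1)$, since $h_p$ may be negative and then $|v_i|>\theta_1+h_p$ does not force $|v_i|>\theta_1$; and since the penalty level assigned to a coordinate is determined by rank rather than by value, the claim that every coordinate with $|v_i|>\theta_1+h_p$ ``sees'' $\theta_1$ requires the observation (from the proof of \Cref{lem:1 flat}) that the pooled band always straddles the split index $(1-s)p$, so that the value thresholds $\theta_2+h_p$ and $\theta_1+h_p$ coincide with the rank boundaries of the plateau.
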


In \Cref{thm:1 point mass}, as $n,p\to\infty$, we can derive that the asymptotic distribution of $\widehat\bet$, i.e.,\ $\widehat\Pi:=\eta_{\Pi+\tau Z,\A\tau}\left(\Pi+\tau Z\right)$, has at most one non-zero point mass at $h$. Here, we define a point mass $x$ for a distribution $X$ as any $x$ such that $\PP(X=x)>0$, e.g.,\ the standard normal distribution has no point masses, whereas a uniform distribution on $\{0, 3\}$ has a non-zero point mass at 3. Depending on whether $h$ is strictly positive, we know if there is one non-zero point mass or not (i.e.,\ whether the 2-level SLOPE reduces to LASSO).

At high level, it is important to leverage that the 2-level SLOPE solution $\widehat\bet$ has at most one \textit{non-zero shared magnitude} by \Cref{lem:1 flat}. To visualize the derivation of 2-level SLOPE proximal operator $\eta_\text{SLOPE}(\bet+\tau\Z;\bfalph\tau)$ in \eqref{eq:limiting scalar}, we refer the readers to \Cref{fig:normalized PAVA} where we demonstrate the PAVA 
step-by-step (restated in Algorithm 1 in the Supplement, which iteratively solves the proximal problem in \eqref{eq:SLOPE prox} for finite dimension $p$).
We plot the quantile function (y-axis) of each resulting distribution against the probability (x-axis), so as to approximate the asymptotic distribution $\widehat\Pi$. We present two cases in two columns, and for both cases $\pi=\text{Bernoulli}(0.5)$. In the left column $\A=\langle 2,1,0.15\rangle$ and $\widehat\Pi$ has one non-zero point mass; in the right column $\A=\langle 2,1,0.35\rangle$ and $\widehat\Pi$ has no non-zero point mass.

Focusing on the left column, in the first row, we visualize the sorted vector of $|\bet+\tau\Z|$ in the black solid curve, and the normalized 2-level penalty vector $\bfalph\tau=\langle \alpha_1,\alpha_2;0.15\rangle\cdot\tau$ in the red dashed curve. In the second row, we calculate the element-wise difference of the two vectors, which includes a vertical drop at the $0.85p$-th element, 
resulting from the fact that the black curve is continuous but the red curve is discrete, changing values from $\alpha_1\tau$ to $\alpha_2\tau$ at the $0.85p$-th element. The elements to the left of the vertical drop (with indices $\leq 0.85p$) are penalized by the constant penalty $\alpha_2\tau$, while the other elements (with indices $> 0.85p$) are penalized by the larger penalty $\alpha_1\tau$. Therefore, the 2-level penalty can be viewed as two 1-level LASSO penalties applied to two groups of elements in $\bet+\tau \Z$. In the third row, we average out all non-decreasing sub-sequences, thus creating a flattened region of one shared magnitude and not affecting the elements in the non-flattened regions. Now there are three groups of elements in $\bet+\tau \Z$:
\begin{enumerate}
    \item elements with indices $<q_2 p$, as if they are penalized by LASSO with $\alpha_2\tau$;
    \item elements with indices between $q_2 p$ and $q_1 p$, sharing the same magnitude $h$; 
    \item elements with indices $>q_1 p$, as if they are penalized by LASSO with $\alpha_1\tau$.
\end{enumerate}
In the fourth row, we truncate all non-positive elements, which may also truncate the flattened region if the shared magnitude is smaller than 0 (see the fourth row right sub-plot).

\begin{figure}
    \centering    \includegraphics[width=0.34\linewidth]{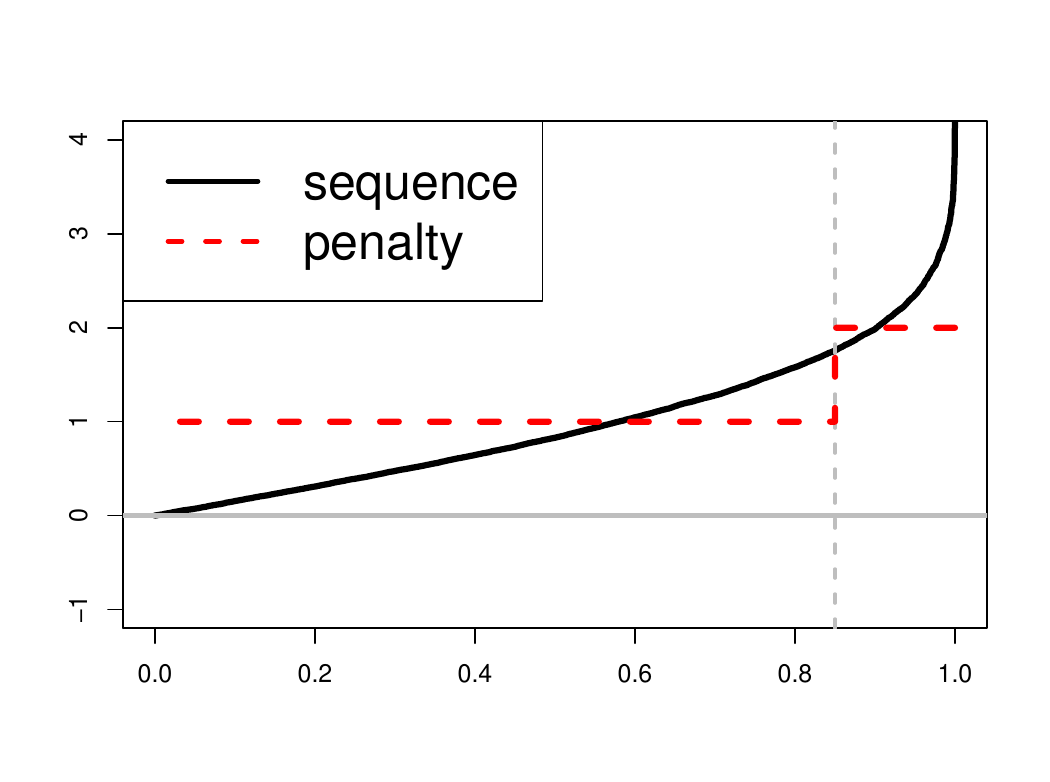}
     \includegraphics[width=0.34\linewidth]{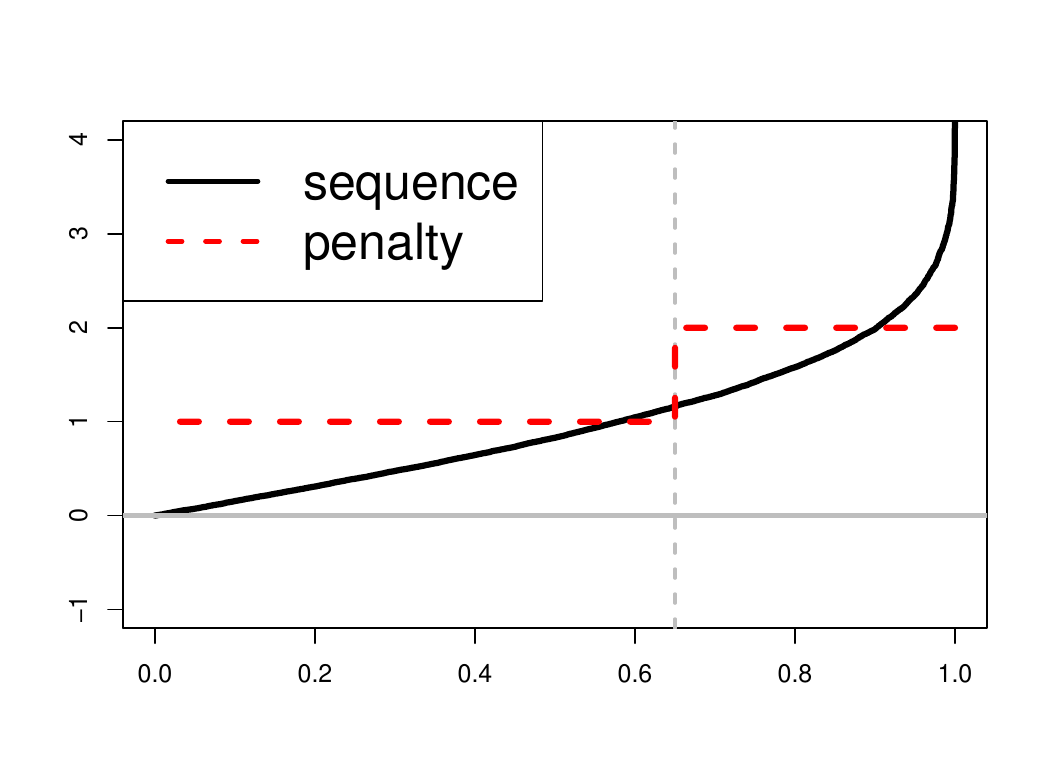}
    \\\vspace{-0.4cm}
\includegraphics[width=0.34\linewidth]{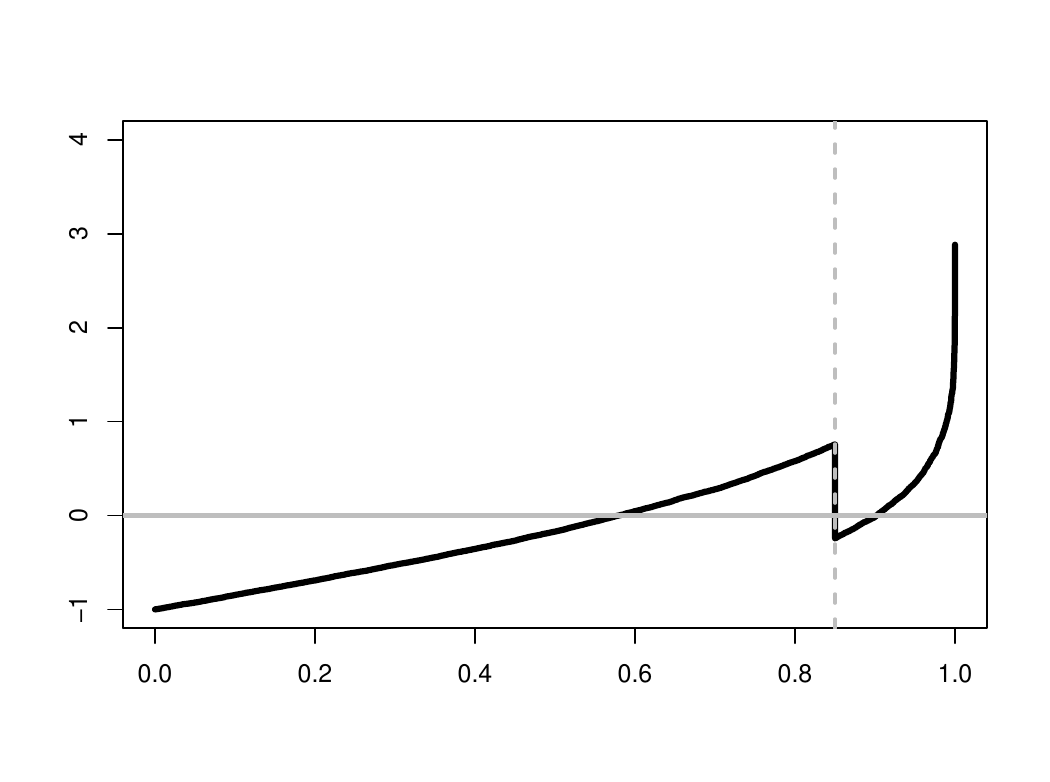}
\includegraphics[width=0.34\linewidth]{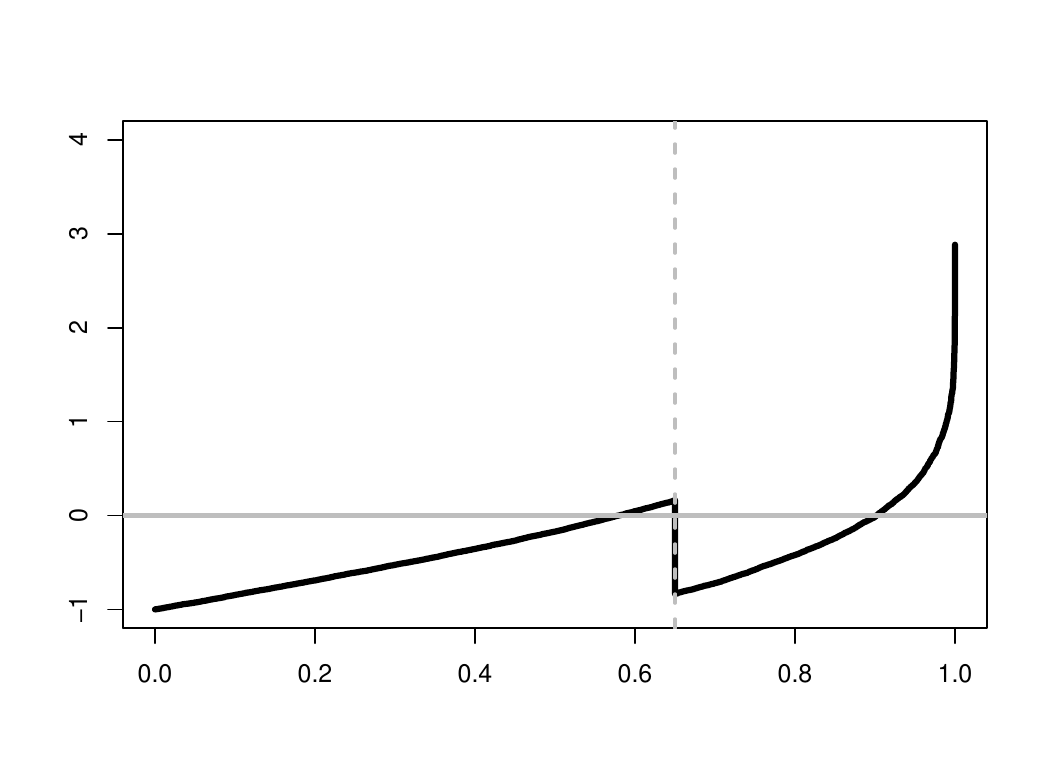}
    \\\vspace{-0.4cm}
\includegraphics[width=0.34\linewidth]{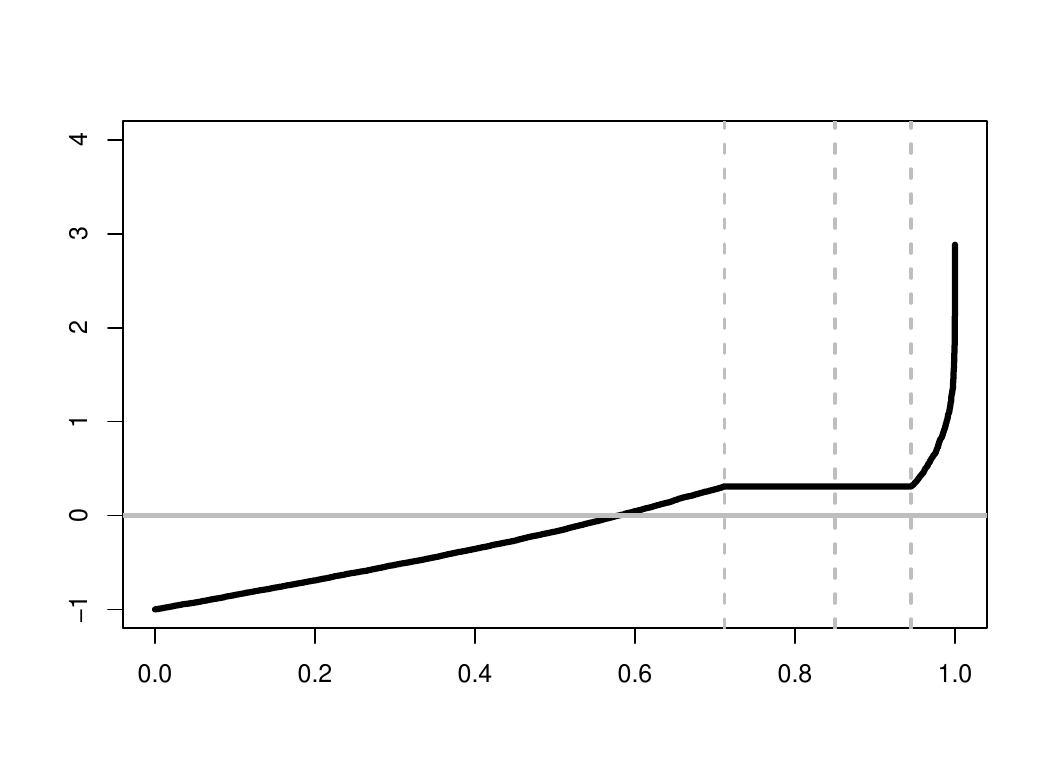}
\includegraphics[width=0.34\linewidth]{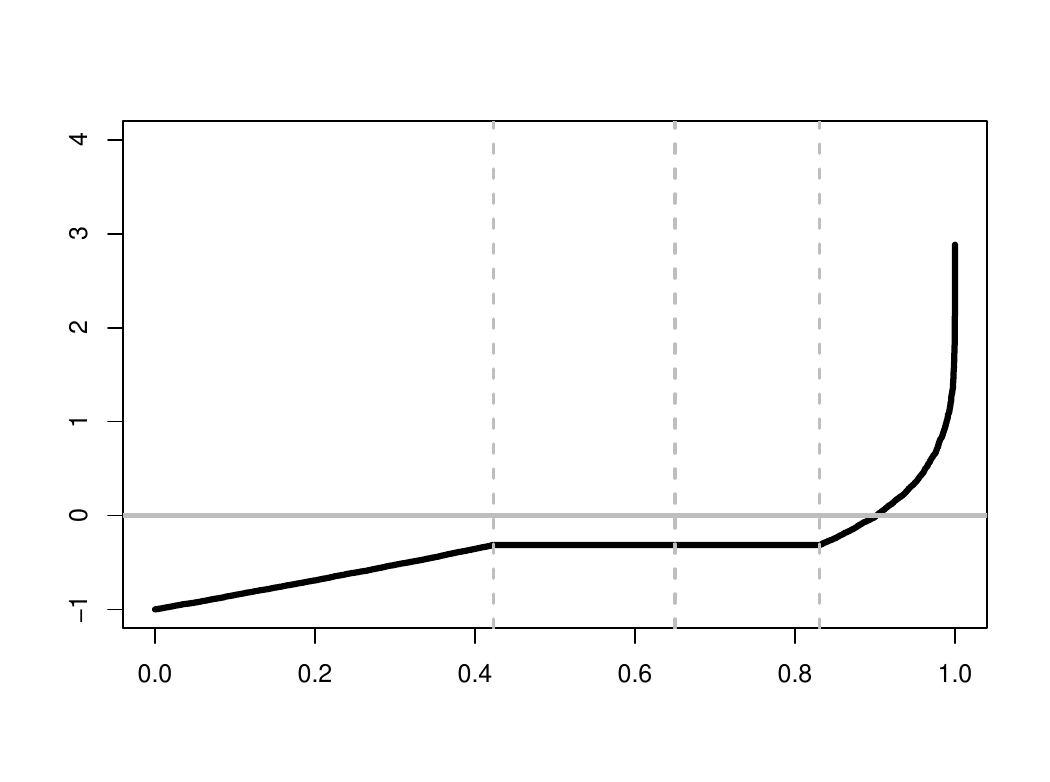}
    \\\vspace{-0.4cm}
\includegraphics[width=0.34\linewidth]{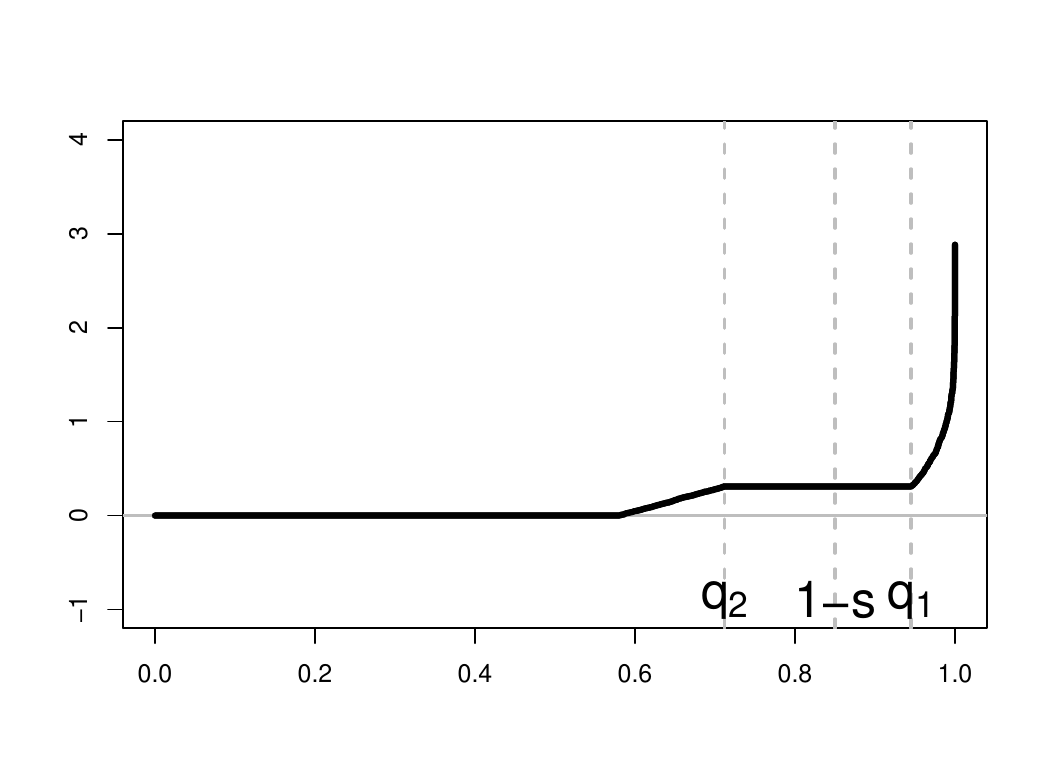}
\includegraphics[width=0.34\linewidth]{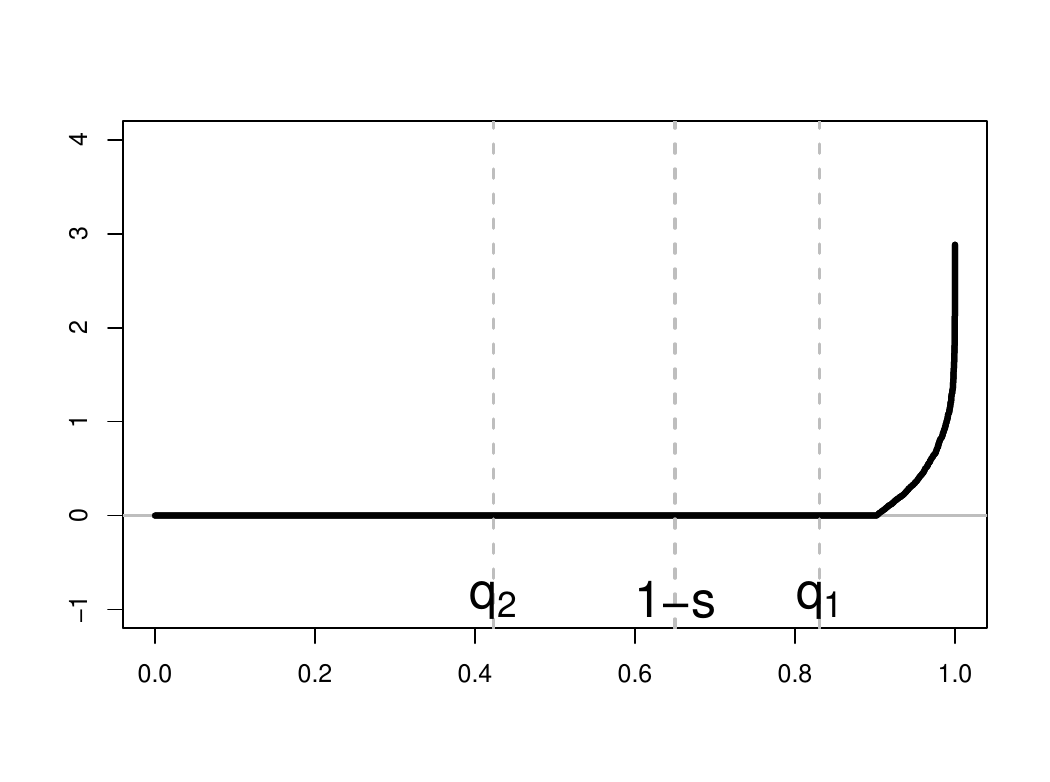}
\vspace{-0.4cm}
\includegraphics[width=0.5\linewidth]{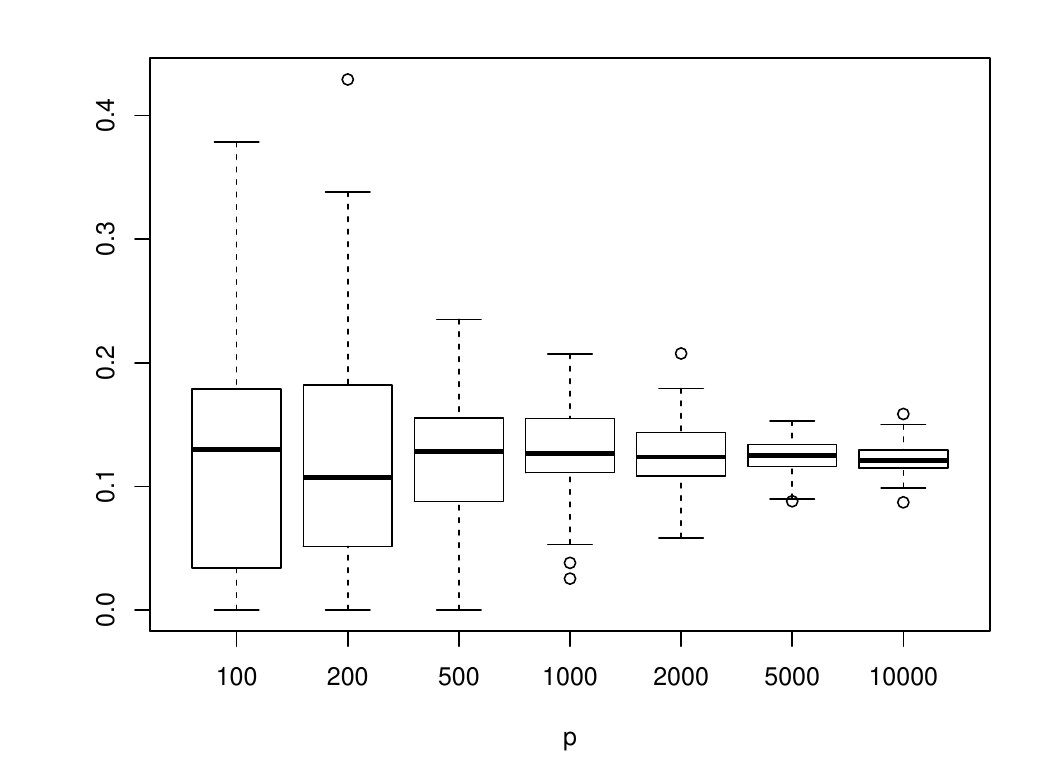}
    \caption{
    Step-by-step derivation of SLOPE solution $\eta_{\pi+Z,A}(\pi+Z)$ (c.f.,\ the PAVA algorithm in Algorithm 1 in the Supplement). Here $\pi=\text{Bernoulli}(0.5)$, and in the left column $\A=\langle 2,1,0.15\rangle$ while in the right column $\A=\langle 2,1,0.35\rangle$. 
    The first row shows the quantile functions of $|\pi+Z|$ and $\A$, corresponding to step 1 in PAVA. 
    The second row shows the difference of the two quantile functions, corresponding to step 2 in PAVA, where the probabilities $<1-s$ are penalized by the smaller level $\alpha_2=1$ and otherwise by the larger level $\alpha_1=2$. The third row averages the sequence, corresponding to step 3 in PAVA, where a shared magnitude is created within the grey vertical lines from $q_2$ to $q_1$. The fourth row truncates the sequence below 0, corresponding to step 4 in PAVA, thus sparsifying the SLOPE solution.
    The last figure uses PAVA to compute the shared magnitude under the setting of left column, for different $p$ each with 100 runs.
    }
    \label{fig:normalized PAVA}
\end{figure}

Overall, the 2-level SLOPE proximal operator is performing soft-thresholding on the non-flat sub-sequences of the input, while performing the averaging on the flat sub-sequence. In other words, the SLOPE penalty is constant on the non-flat sub-sequence of the input, while being adaptive on the flat sub-sequence so that the output is a constant $h$. Therefore, we can write $\Aeff$ for \eqref{eq:slope eff}: \begin{align}
    \Aeff(x;h, q_1, q_2, \alpha_1,\alpha_2,s)=\begin{cases}
       \alpha_1, & \quad q_1<\S_{|\pi+Z|}(|x|), \\
       |x|-h,& \quad q_2 \leq \S_{|\pi+Z|}(|x|)\leq q_1, \\
       \alpha_2, &\quad  \S_{|\pi+Z|}(|x|)<q_2.
    \end{cases}
\label{eq: 2level Aeff}
\end{align}
Here $\S_{|\pi+Z|}(x)=\PP(|\pi+Z|<x)$ is the cumulative distribution function of $|\pi+Z|$, which is monotonically increasing and differentiable. We will derive the precise values of $q_1,q_2,$ and $h$ in \Cref{lem:q1q2h}. 
\begin{lemma}
\label{lem:q1q2h}
For each $(\alpha_1,\alpha_2,s)$ and any $\pi$, we can determine $h$ in \eqref{eq: 2level Aeff} through
\begin{equation}
\int_{h+\alpha_2}^{h+\alpha_1}\PP(|\pi+Z|<x) \, dx=\int_{h+\alpha_2}^{h+\alpha_1}\S_{|\pi+Z|}(x) \, dx=(1-s)(\alpha_1-\alpha_2).
\label{eq:hcalc}
\end{equation}
The solution in $h$ to the above is unique and,
additionally, we have $q_1=\S_{|\pi+Z|}(h+\alpha_1)$ and $q_2=\S_{|\pi+Z|}(h+\alpha_2)$.
\end{lemma}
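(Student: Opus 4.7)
The plan is to derive the three unknowns $h, q_1, q_2$ by taking the continuous-limit of the pool-adjacent-violators (PAVA) construction illustrated in Figure~\ref{fig:normalized PAVA}, use continuity at the endpoints of the flat region for two of them, use mass conservation for the third, reduce the resulting system to \eqref{eq:hcalc}, and finally check uniqueness by monotonicity. Writing $F := \S_{|\pi+Z|}$, the large-$p$ sorted input has quantile function $F^{-1}$ while the sorted 2-level penalty is the step function $P(u) = \alpha_2\,\mathbf{1}\{u < 1-s\} + \alpha_1\,\mathbf{1}\{u \geq 1-s\}$. The pre-PAVA difference $D(u) := F^{-1}(u) - P(u)$ is therefore strictly increasing on each of $[0, 1-s)$ and $[1-s, 1]$ but has a single downward jump of size $\alpha_1 - \alpha_2$ at $u = 1-s$.

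\textbf{Deriving the equation.} In the continuous limit, PAVA replaces $D$ by its isotonic regression, which coincides with $D$ outside a unique interval $[q_2, q_1] \ni 1-s$ and equals a constant $h$ on that interval, corresponding to the shaded band in the third row of Figure~\ref{fig:normalized PAVA}. Continuity at the two endpoints forces $F^{-1}(q_2) - \alpha_2 = h = F^{-1}(q_1) - \alpha_1$, which inverts to $q_i = \S_{|\pi+Z|}(h + \alpha_i)$ as claimed. Mass preservation then reads $\int_{q_2}^{q_1} D(u)\,du = h(q_1 - q_2)$, and splitting the $P$-contribution at $u = 1-s$ gives
\[
\int_{q_2}^{q_1} F^{-1}(u)\,du \;=\; h(q_1 - q_2) + \alpha_2\bigl(1-s-q_2\bigr) + \alpha_1\bigl(q_1 - (1-s)\bigr).
\]
A change of variable $u = F(x)$ followed by integration by parts rewrites the left-hand side as $(h+\alpha_1)q_1 - (h+\alpha_2)q_2 - \int_{h+\alpha_2}^{h+\alpha_1} F(x)\,dx$; substituting the boundary formulas for $q_1, q_2$ then cancels all boundary terms on both sides and collapses the system to exactly \eqref{eq:hcalc}.

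\textbf{Uniqueness and the main obstacle.} For uniqueness I would set $\Phi(h) := \int_{h+\alpha_2}^{h+\alpha_1} F(x)\,dx$, for which $\Phi'(h) = F(h+\alpha_1) - F(h+\alpha_2) > 0$ whenever $h + \alpha_1 > 0$, since the Gaussian convolution in $|\pi+Z|$ makes $F$ strictly increasing on $[0, \infty)$. Combined with $\Phi(-\infty) = 0$, $\Phi(+\infty) = \alpha_1 - \alpha_2$, and the fact that $(1-s)(\alpha_1-\alpha_2)$ lies strictly in $(0, \alpha_1 - \alpha_2)$, the intermediate value theorem yields a unique root $h$. The most delicate point in the whole argument is the passage from the finite-$p$ discrete PAVA (Algorithm~1 in the Supplement) to its continuous-limit isotonic regression; rather than redoing this from scratch, I would invoke the limiting-scalar-function framework of \cite{hu2019asymptotics} and \cite[Def.~4.2]{bu2021characterizing} applied to the specific step penalty $P$, so that the genuinely new content is precisely the identification of $h, q_1, q_2$ carried out in the preceding paragraph.
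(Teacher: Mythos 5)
Your proposal is correct and follows essentially the same route as the paper's proof: continuity of the effective penalty at the endpoints of the flat region gives $q_i=\S_{|\pi+Z|}(h+\alpha_i)$, area conservation under the averaging step gives the intermediate identity, the inverse-function integral identity collapses it to \eqref{eq:hcalc}, and uniqueness follows from monotonicity plus the intermediate value theorem with the same boundary limits $0$ and $\alpha_1-\alpha_2$. (Your derivative $F(h+\alpha_1)-F(h+\alpha_2)$ is in fact the correct Leibniz-rule expression; the paper's displayed computation writes a sum where a difference is meant, so your version is, if anything, slightly cleaner on that point.)
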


\begin{proof}[Proof of \Cref{lem:q1q2h}]
To derive $h$, we first need the formulae of $q_1$ and $q_2$. We look at the boundary where $\Aeff$ changes from $\alpha_2$ to $|x|-h$ when $\S_{|\pi+Z|}(|x|)=q_2$ and we set these values equal so as to have a continuous function.
\begin{align*}
    |x|-h=\S_{|\pi+Z|}^{-1}(q_2)-h=\Aeff(\S_{|\pi+Z|}^{-1}(q_2^+))=\Aeff(\S_{|\pi+Z|}^{-1}(q_2^-))=\alpha_2
\end{align*}
which implies that $q_2=\S_{|\pi+Z|}(h+\alpha_2)$. Similarly, we can obtain that $q_1=\S_{|\pi+Z|}(h+\alpha_1)$. Note that both $q_1$ and $q_2$ depend on $h$. Next, we look at the area under curves between $q_2<\S_{|\pi+Z|}(|x|)<q_1$, before (see the second row of \Cref{fig:normalized PAVA}) and after (see the third row of \Cref{fig:normalized PAVA}) the averaging step. It is obvious that the averaging does not change the area under curves:
\begin{align*}
\text{Area before averaging}&= \int_{q_2}^{1-s} \left(\S_{|\pi+Z|}^{-1}(x)-\alpha_2 \right)dx +\int_{1-s}^{q_1} \left(\S_{|\pi+Z|}^{-1}(x)-\alpha_1 \right)dx,
\\
&= \int_{q_2}^{q_1} \S_{|\pi+Z|}^{-1}(x)dx - \alpha_2(1- s - q_2) - \alpha_1 (q_1 - 1+s),
\\
\text{Area after averaging}& =(q_1-q_2) h.
\end{align*}
This leads to
\begin{align}
(q_1-q_2) h+(1-s-q_2) \alpha_2+(q_1-1+s) \alpha_1&=\int_{q_2}^{q_1}\S_{|\pi+Z|}^{-1}(x)dx.
\label{eq:int2}
\end{align}

Moreover, because $\S_{|\pi+Z|}$ is monotone and differentiable, $\S_{|\pi+Z|}^{-1}$ is differentiable. Using the definition of the integral of the inverse function and that $q_i=\S_{|\pi+Z|}(h+\alpha_i)$, we obtain the relationship
\begin{align}
\int_{q_2}^{q_1}\S_{|\pi+Z|}^{-1}(x)dx+\int_{h+\alpha_2}^{h+\alpha_1}\S_{|\pi+Z|}(x)dx=q_1 (h+\alpha_1)-q_2 (h+\alpha_2).
\label{eq:int3}
\end{align}
Merging \eqref{eq:int2} and \eqref{eq:int3}, we conclude that 
\begin{align}
\int_{h+\alpha_2}^{h+\alpha_1}\S_{|\pi+Z|}(x)dx
&=(1-s) (\alpha_1-\alpha_2).
\label{eq:diff}
\end{align}

We now show that, for each $(\alpha_1,\alpha_2, s)$, the solution $h$ is unique, which follows by the intermediate value theorem. This holds due to (1)  the monotonicity of the integration in \Cref{lem:q1q2h} and (2) the opposite signs of \eqref{eq:diff} at $h=-\infty$ and at $h=\infty$. To see the monotonicity, we note that $\S$ is monotonically increases, hence 
\begin{align*}
\frac{d}{dh}    \int_{h+\alpha_2}^{h+\alpha_1}  \S_{|\pi+Z|}(u) \, du &=    \lim_{dh\to 0}\frac{\int_{h+dh+\alpha_2}^{h+dh+\alpha_1}\S_{|\pi+Z|}(u) \, du-\int_{h+\alpha_2}^{h+\alpha_1}\S_{|\pi+Z|}(u) \, du}{dh} \\
&=\S_{|\pi+Z|}(h+\alpha_1)+\S_{|\pi+Z|}(h+\alpha_2)>0.
\end{align*}
To see the opposite signs, we have
$$\lim_{h\to\infty}\int_{h+\alpha_2}^{h+\alpha_1}\S_{|\pi+Z|}(x)dx=(\alpha_1-\alpha_2)
>(1-s) (\alpha_1-\alpha_2),$$
and 
$$\lim_{h\to-\infty}\int_{h+\alpha_2}^{h+\alpha_1}\S_{|\pi+Z|}(x)dx=0
<(1-s) (\alpha_1-\alpha_2).$$
\end{proof}

\section{TPP-FDP trade-off of 2-level SLOPE for all priors}
\label{sec:tpp fdp all priors}
The main purpose of this section is to derive the tight asymptotic TPP-FDP trade-off over all possible pairs of $(\Pi,\Lambda)$. In particular, as discussed previously in \eqref{eq:all prior max zero thres}, to find the infimum $\FDP$ in \eqref{eq:fdp is function of tpp}, we will find the supremum over zero-thresholds. In what follows, we use subscripts to denote the maximum feasible zero-threshold under a class of $\Lambda$, e.g.,\ the 2-level SLOPE $\alpha^*_\text{2-level}$, the general SLOPE $\alpha^*_\text{SLOPE}$, and the LASSO $\alpha^*_\text{LASSO}$. We wish to solve 
\begin{align}
\alpha_\text{2-level}^*(u; \epsilon, \delta):=\sup_{\Pi, \A=\langle\alpha_1,\alpha_2;s\rangle}\{\alpha(\pi,\A): (\pi,\A) \text{ s.t. } \eqref{eq:state evolution}, \TPP=u, \PP(\pi=0)=1-\epsilon\},
\label{eq:all prior max zero thres_new}
\end{align}
where we now have made explicit in the optimization that we are considering only 2-level SLOPE penalty sequences $\A=\langle\alpha_1,\alpha_2;s\rangle$. Moreover, we mention that the representation in \eqref{eq:all prior max zero thres} is implicitly assuming that $(\pi, \A)$ satisfy the state evolution equations \eqref{eq:state evolution} and the assumptions (A1)-(A5), which we have made explicit in the statement in \eqref{eq:all prior max zero thres_new} by adding the conditions $(\pi,\A)$ satisfies \eqref{eq:state evolution} and $\PP(\pi=0)=1-\epsilon$. 

We highlight that the derivation of $\alpha^*_\text{SLOPE}$, for the general SLOPE (i.e.,\ when $\Lambda$ is not restricted to be 2-level as in \eqref{eq:asymptotic lambda} or, equivalently, when $\A$ is not restricted as in \eqref{eq:A two-level}), is infeasible: the best known result by \cite{bu2021characterizing} only establishes an upper bound $t^*$ and a lower bound $t_*$ such that $t_*(u)\leq\alpha^*(u)\leq t^*(u)$. Consequently, the general SLOPE trade-off curve is actually unknown except between the upper and lower bounds. We refer interested readers to \cite[Figure 2]{bu2021characterizing} as well as Figures \ref{fig:summary} and \ref{fig:trade-off curves}.
In sharp contrast, for the 2-level SLOPE (i.e.,\ where $\Lambda$ is restricted to two point masses as in \eqref{eq:asymptotic lambda} or when $\A$ is restricted as in \eqref{eq:A two-level}), we can derive the exact form of $\alpha^*$ from \eqref{eq:all prior max zero thres_new}, and this is what we will do in what follows.
Armed with $\alpha^*_\text{2-level}$, we will be able to derive the 2-level SLOPE trade-off curve explicitly following \eqref{eq:tppfdp rewrite}. 

\subsection{Main ingredients for characterizing  $\alpha^*_\text{2-level}$ explicitly and efficiently}
\label{sec:main ingredients}
There are two key ingredients that allow us to characterize $\alpha^*_\text{2-level}$. The first key ingredient, given in \Cref{lem:second is zero-threshold}, is that the search over $\Lambda$, or equivalently $\A$, in \eqref{eq:all prior max zero thres_new} can be made computationally efficient and easy-to-analyze by considering only a subset $\{\A=\langle\alpha_1,\alpha_2;s\rangle \, \text{ s.t. } \, \alpha(\pi,\A)=\alpha_2\}$, instead of the full set $\{\A=\langle\alpha_1,\alpha_2;s\rangle \, \text{ s.t. } \, \alpha(\pi,\A)=\alpha_2 \text{ or } \alpha_1\}$. We note that $\alpha(\pi,\A)$ will only take the values $\alpha_1$ or $\alpha_2$ due to the fact that the zero-threshold $\alpha$ is always a quantile of $\A$; this is elaborated in the proof of Lemma~\ref{lem:second is zero-threshold} in \Cref{sec:remainingproofs}. 

\begin{lemma}\label{lem:second is zero-threshold}
The zero-threshold $\alpha(\Pi,\Lambda)\equiv \alpha(\pi,\A)$ of a 2-level penalty $\A=\langle\alpha_1,\alpha_2;s\rangle$ must be $\alpha_1$ or $\alpha_2$. To minimize $\FDP$ in \eqref{eq:fdp is function of tpp}, it suffices to only consider $\A$ (or equivalently $\Lambda$) such that the second penalty level serves as the zero-threshold, i.e.,\ $\alpha(\Pi,\Lambda)\equiv\alpha(\pi,\A)=\alpha_2$.
\end{lemma}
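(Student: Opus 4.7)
The plan is to prove the two claims in order. For the first claim I would perform a direct case analysis on the sign of $h$ using the explicit limiting scalar function \eqref{eq:2-level limiting}; for the second I would exhibit a reduction showing that any 2-level penalty whose zero-threshold equals $\alpha_1$ is functionally indistinguishable from a LASSO penalty, whose second level trivially coincides with the zero-threshold.

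For the first claim, split on the sign of the unique $h$ from \Cref{lem:q1q2h}. If $h>0$, then on the middle branch $\eta_{\pi+Z,\A}(x)=\mathrm{sign}(x)\,h\neq 0$; on the inner branch $\eta_{\pi+Z,\A}(x)=\eta_\text{soft}(x;\alpha_2)$ vanishes exactly on $|x|\leq\alpha_2$; and the outer branch is always nonzero. Hence the zero set is $\{|x|\leq\alpha_2\}$ and $\alpha(\pi,\A)=\alpha_2$. If instead $h\leq 0$, the middle branch contributes $\eta\equiv 0$; the inner branch also gives $\eta_\text{soft}(x;\alpha_2)=0$ throughout the region $|x|<\alpha_2+h\leq\alpha_2$; and the outer branch gives $\eta_\text{soft}(x;\alpha_1)$, which vanishes exactly on $|x|\leq\alpha_1$. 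Hence the zero set is $\{|x|\leq\alpha_1\}$ and $\alpha(\pi,\A)=\alpha_1$. In particular $\alpha(\pi,\A)\in\{\alpha_1,\alpha_2\}$ always, and as a useful byproduct $\alpha=\alpha_2\iff h>0$, consistent with the general fact that the zero-threshold is always a quantile of $\A$.

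For the second claim, I use this byproduct: when $h\leq 0$ (equivalently $\alpha=\alpha_1$), pasting the three branches of \eqref{eq:2-level limiting} together gives $\eta_{\pi+Z,\A}(x)\equiv\eta_\text{soft}(x;\alpha_1)$ pointwise, so the 2-level SLOPE is indistinguishable from the LASSO with penalty $\alpha_1$. The same input–output map is realized by the degenerate 2-level penalty $\A':=\langle\alpha_1,\alpha_1;s\rangle$ for any splitting ratio $s$, whose second level trivially serves as the zero-threshold. Because the right-hand sides of \eqref{eq:tppfdp rewrite2}–\eqref{eq:tppfdp rewrite} depend on the penalty only through the scalar function $\eta_{\pi+Z,\A}$, the pairs $(\pi,\A)$ and $(\pi,\A')$ produce identical $\TPP$ and $\FDP$. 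Therefore every feasible point of the supremum in \eqref{eq:all prior max zero thres_new} with $\alpha=\alpha_1$ has a counterpart in the subclass $\{\A:\alpha(\pi,\A)=\alpha_2\}$ of equal $\FDP$, and restricting to that subclass incurs no loss.

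The main subtlety I anticipate is confirming that the replacement $\A\mapsto\A'$ preserves feasibility for the state evolution equation \eqref{eq:state evolution}, i.e.\ that the same $\tau$ solves \eqref{eq:state evolution}; this is immediate because \eqref{eq:state evolution} depends on $\A$ only through $\eta_{\pi+Z,\A}$, which is unchanged by construction. A secondary technical point is that \Cref{def: zero threshold} does not a priori force $\alpha$ to be a penalty level; the role of the explicit form \eqref{eq:2-level limiting}, and in particular of the precise transition points $\alpha_1+h$ and $\alpha_2+h$, is exactly to make this quantile property transparent in the 2-level case rather than invoking the general SLOPE argument.
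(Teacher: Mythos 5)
Your proof is correct, and it differs from the paper's in a way worth noting. For the first claim, the paper simply quotes \cite[Proposition C.5]{bu2021characterizing}, which states that the zero-threshold of any SLOPE penalty is a quantile of $\A$, and then observes that a two-valued $\A$ forces $\alpha\in\{\alpha_1,\alpha_2\}$. You instead derive this self-containedly from the explicit form \eqref{eq:2-level limiting} by a case split on the sign of $h$; your case analysis is sound, and it yields the sharper byproduct that $\alpha=\alpha_2$ exactly when $h>0$ and $\alpha=\alpha_1$ exactly when $h\leq 0$ (with the caveat that the equivalence degenerates when $\alpha_1=\alpha_2$, where $h$ is not pinned down by \eqref{eq:hcalc}). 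For the second claim both arguments use the same reduction --- replace a penalty whose zero-threshold is $\alpha_1$ by the LASSO penalty $\langle\alpha_1,\alpha_1;s\rangle$, which lies in the subclass $\mathcal{A}_2$ --- but you justify it more carefully: the paper argues only that the two penalties share a zero-threshold and hence an $\FDP$ by \eqref{eq:fdp is function of tpp}, which tacitly requires that the $\TPP=u$ constraint and the state-evolution feasibility are also preserved, whereas you establish the stronger fact that when $h\leq 0$ the 2-level limiting scalar function coincides pointwise with $\eta_\text{soft}(\cdot\,;\alpha_1)$, from which identical $\TPP$, $\FDP$, and the same fixed point $\tau$ of \eqref{eq:state evolution} all follow at once. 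The trade-off is that your route leans on \Cref{thm:1 point mass} and \Cref{lem:q1q2h} (so it applies only to the 2-level case), while the paper's quantile citation is penalty-generic; in exchange your argument is more elementary, avoids the external reference, and closes the small gap in the paper's ``same zero-threshold means same $\FDP$'' step.
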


Using \Cref{lem:second is zero-threshold}, we can simplify \eqref{eq:all prior max zero thres_new} to
\begin{align}
\alpha_\text{2-level}^*(u; \epsilon, \delta):=\sup_{\Pi, \A=\langle\alpha_1,\alpha_2;s\rangle}\{\alpha_2: (\pi,\A) \text{ s.t. } \eqref{eq:state evolution}, \TPP=u, \PP(\pi=0)=1-\epsilon\}.
\label{eq:all prior max zero thres_new2}
\end{align}


The second key ingredient is to optimize over a 2-dimensional space $(t_1,t_2)$, instead of the infinite-dimensional space $\Pi$, for any given $\A=\langle\alpha_1,\alpha_2;s\rangle$. 
In more detail, \cite[Lemma 4.3]{bu2021characterizing} tells us that the $\pi$ that maximizes the zero-threshold in \eqref{eq:all prior max zero thres_new2} is a three-point prior. Namely,
\begin{align}
\pi_{\min}(t_1,t_2)=\begin{cases}
0 &\text{ w.p. } 1-\epsilon,\\
t_1 &\text{ w.p. } \epsilon \rho(t_1, t_2),\\
t_2 &\text{ w.p. } \epsilon(1 - \rho(t_1, t_2)),\\
\end{cases}
\text{ and } \quad 
\pi_{\min}^*(t_1,t_2)=
\begin{cases}
t_1 &\text{ w.p. } \rho(t_1, t_2),\\
t_2 &\text{ w.p. } 1 - \rho(t_1, t_2),
\end{cases}
\label{eq:three point pi}
\end{align}
for some constants $t_1, t_2$, and $0\leq \rho\leq 1$ (which depends on $u$ and $\alpha_2$), where we notice that the prior in \eqref{eq:three point pi} satisfies the constraint $\PP(\pi=0)=1-\epsilon$, and $\pi^*_{\min}:=(\pi_{\min}|\pi_{\min}\neq 0)$. 

Although there are no closed form representations for the optimal $t_1$ and $t_2$ in \eqref{eq:three point pi}, we can derive $\rho$ by leveraging the constraint $\TPP=u$ and using the fact that the zero-threshold $\alpha$ equals $\alpha_2$ for $\A = \langle\alpha_1,\alpha_2;s\rangle$. In particular, the value of $\rho$ is given in \Cref{lem:rho_def} below, whose proof is given in \Cref{sec:remainingproofs}.

\begin{lemma}
\label{lem:rho_def} Given that $\TPP = u$ and the fact that the zero-threshold $\alpha$ equals $\alpha_2$ for $A = \langle\alpha_1,\alpha_2;s\rangle$, for the optimal prior $\pi_{\min}$ from \eqref{eq:three point pi} taking non-zero values $(t_1, t_2)$, we have
\begin{align}
\rho(t_1, t_2; \alpha_2, u)=
\begin{cases}
1 &\text{ if } t_1=t_2,
\\
\frac{u-\left[\Phi\left(t_2-\alpha_2\right)+\Phi\left(-t_2-\alpha_2\right)\right]}{\left[\Phi\left(t_1-\alpha_2\right)+\Phi\left(-t_1-\alpha_2\right)\right]-\left[\Phi\left(t_2-\alpha_2\right)+\Phi\left(-t_2-\alpha_2\right)\right]}&\text{ if } t_1\neq t_2.
\end{cases}
\label{eq:compute p}
\end{align}
\end{lemma}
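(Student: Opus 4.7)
The plan is a direct calculation using the closed-form expression for $\TPP$ in \eqref{eq:tppfdp rewrite2}. Under the hypothesis that the zero-threshold equals $\alpha_2$, that formula reads $u=\TPP=\PP(|\pi^*_{\min}+Z|>\alpha_2)$, where $\pi^*_{\min}$ is the conditional distribution of $\pi_{\min}$ given it is nonzero. From \eqref{eq:three point pi}, this is the two-point distribution that places mass $\rho$ on $t_1$ and mass $1-\rho$ on $t_2$. Conditioning on which atom is drawn reduces the right-hand side to a convex combination of two Gaussian tail probabilities.

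Next, I would compute each tail probability. For $t\in\R$,
$$\PP(|t+Z|>\alpha_2) \;=\; \PP(Z>\alpha_2-t)+\PP(Z<-\alpha_2-t) \;=\; \Phi(t-\alpha_2)+\Phi(-t-\alpha_2),$$
using $1-\Phi(x)=\Phi(-x)$. Denote this function by $g(t;\alpha_2)$. Then the $\TPP$ constraint becomes the affine equation
$$u \;=\; \rho\, g(t_1;\alpha_2) + (1-\rho)\, g(t_2;\alpha_2),$$
which has the unique solution
$$\rho \;=\; \frac{u-g(t_2;\alpha_2)}{g(t_1;\alpha_2)-g(t_2;\alpha_2)}$$
whenever the denominator is nonzero, matching the stated formula for the case $t_1\neq t_2$. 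The degenerate case $t_1=t_2$ collapses $\pi^*_{\min}$ to a single point mass, so any value of $\rho\in[0,1]$ yields the same law for $\pi_{\min}$; by convention we take $\rho=1$, which makes \eqref{eq:three point pi} reduce cleanly to a two-point prior.

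The calculation itself is mechanical, so the closest thing to an obstacle is checking that the denominator $g(t_1;\alpha_2)-g(t_2;\alpha_2)$ is truly nonzero when $t_1\neq t_2$ in the regime relevant to the outer optimization. Since $g(\cdot;\alpha_2)$ is an even function of $t$, this reduces to checking strict monotonicity on $[0,\infty)$, which follows from differentiating and noting that $\phi(t-\alpha_2)>\phi(t+\alpha_2)$ for all $t>0$ when $\alpha_2>0$. Equivalently, one needs $|t_1|\neq|t_2|$; this is automatic in the context of Lemma~\ref{lem:rho_def}, since the optimal three-point prior $\pi_{\min}$ from \eqref{eq:three point pi} uses the two nonzero atoms to represent signal magnitudes with different detectability, so their absolute values differ and the formula for $\rho$ is well defined.
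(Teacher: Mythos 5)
Your proposal is correct and follows essentially the same route as the paper: write $u=\TPP=\mathbb{P}(|\pi_{\min}^*+Z|>\alpha_2)$, decompose over the two atoms $t_1,t_2$ of $\pi_{\min}^*$, evaluate each tail probability as $\Phi(t-\alpha_2)+\Phi(-t-\alpha_2)$, and solve the resulting affine equation for $\rho$. Your additional remark on why the denominator is nonzero when $|t_1|\neq|t_2|$ (via strict monotonicity of $t\mapsto\Phi(t-\alpha_2)+\Phi(-t-\alpha_2)$ on $[0,\infty)$) is a small but welcome supplement that the paper's proof leaves implicit.
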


Employing the optimal $\pi_{\min}$ in \eqref{eq:three point pi} and using the definition of $\rho$ from \eqref{eq:compute p} of \Cref{lem:rho_def}, we further simplify the representation of \eqref{eq:all prior max zero thres_new2} to
\begin{align}
\alpha_\text{2-level}^*(u)=\sup_{(t_1,t_2), \A=\langle\alpha_1,\alpha_2;s\rangle}\{\alpha_2&: (\pi_{\min},\A) \text{ s.t. } \eqref{eq:state evolution}\}.
\label{eq:all prior max zero thres_new3}
\end{align}

Before stating our main results we make one more simplification to the representation in \eqref{eq:all prior max zero thres_new3}. Recall that \eqref{eq:state evolution} says
\[\tau^{2}=\sigma^{2}+\frac{1}{\delta} \, \mathbb{E}\left[\left(\eta_{\Pi+\tau Z,\A\tau}(\Pi+\tau Z)-\Pi\right)^{2}\right],\]
where we note from \Cref{sec:limiting scalar function} that the limiting scalar function of any SLOPE can  be written as an adaptive soft-thresholding function; namely, $\eta_{\pi+Z,\A}(x)=\eta_\text{soft}(x;\widehat\A_\text{eff}(x))$.
Now, define
\begin{align}
F[\pi;\alpha_1,\alpha_2,s]:=\mathbb{E}\left[\left(\eta_{\mathrm{soft}}\left(\pi+Z ; \mathrm{A}_{\mathrm{eff}}(\pi+Z;\alpha_1,\alpha_2,s)\right)-\pi\right)^2\right],
\label{eq:state evolution constraint}
\end{align}
then we can equivalently write the state evolution \eqref{eq:state evolution} by renormalizing and employing \eqref{eq:state evolution constraint} as $\tau^2=\sigma^2+\frac{\tau^2}{\delta}F[\pi]$ and, 
for a pair $(\pi,\A)$, we have that $\tau$ exists whenever $F[\pi]=(1-\frac{\sigma^2}{\tau^2})\delta\leq \delta$. Therefore, we can rewrite the condition that $(\pi_{\min},\A)$ satisfies \eqref{eq:state evolution} in \eqref{eq:all prior max zero thres_new3} by $F[\pi_{\min}(t_1,t_2)]\leq \delta$, and we have arrived at \eqref{eq:alpha star explicit}:
\begin{align}
\alpha^*_\text{2-level}(u;\epsilon,\delta)=\sup_{\A=\langle\alpha_1,\alpha_2;s\rangle} \left\{\alpha_2 \quad \text{ s.t. }  \quad \min_{0\leq t_1\leq t_2} F\left[\pi_{\min}(t_1,t_2;u,\epsilon);\alpha_1,\alpha_2,s \right]\leq \delta \right\}.
\label{eq:alpha star explicit}
\end{align}

In the next section, we state our main results on the asymptotic $\TPP$-$\FDP$ trade-off curve for 2-level SLOPE, which use the representation of $\alpha^*_\text{2-level}$ given in \eqref{eq:alpha star explicit}. Following the statement of the main results, we discuss how $\alpha^*_\text{2-level}$ given in \eqref{eq:alpha star explicit} can be computed explicitly and efficiently using \Cref{alg:F rho min} in \Cref{sec:derivation alpha star}. This essentially boils down to refining the definition of $\Aeff$ in  \Cref{eq: 2level Aeff} and consequently the 2-level SLOPE limiting scalar function in \eqref{eq:2-level limiting} from general $\epsilon$-sparse priors $\pi$ to the specific prior three-point prior in \eqref{eq:three point pi}.
Finally, \Cref{sec:analytic slope} compares the 2-level SLOPE trade-off curve to the general SLOPE trade-off curve and \Cref{sec:remainingproofs} contains the lemma proofs.

\subsection{Main results on the 2-level SLOPE $\TPP$-$\FDP$ trade-off curve}

We now present our main result on the asymptotic $\TPP$-$\FDP$ trade-off curve for 2-level SLOPE.


\begin{theorem}\label{thm:trade-off}
Under the working assumptions (A1)$\sim$(A5) in \Cref{sec:AMP}, the following inequality holds for all $(\Pi,\Lambda)$:
$$
\FDP(\Pi,\Lambda) \geq q_\text{2-level}\left(\TPP(\Pi,\Lambda) ; \delta, \epsilon\right),
$$
where, with $\alpha_\text{2-level}^*$ given in \eqref{eq:alpha star explicit}, the 2-level SLOPE trade-off is
\begin{equation}
q_\text{2-level}(u;\delta,\epsilon)=\frac{2(1-\epsilon) \Phi\left(-\alpha_\text{2-level}^*(u)\right)}{2(1-\epsilon) \Phi\left(-\alpha_\text{2-level}^*(u)\right)+\epsilon u}.
\label{eq:qu}
\end{equation}

Furthermore, the trade-off $q_\text{2-level}$ is tight: for any continuous curve $q(u) \geq q_\text{2-level}(u)$ with strict inequality for some $0\leq u \leq 1$, there exists a pair $(\Pi,\Lambda)$ such that
$$
\FDP(\Pi,\Lambda) < q\left(\TPP(\Pi,\Lambda) ; \delta, \epsilon\right).
$$
\end{theorem}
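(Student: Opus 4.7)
The plan is to split the proof along the two claims—the lower bound and tightness—leveraging the definition of $\alpha^*_\text{2-level}$ developed in \Cref{sec:main ingredients}.

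\textbf{Lower bound.} This step is essentially bookkeeping given the groundwork in Sections~\ref{sec:AMP} and \ref{sec:main ingredients}. From \eqref{eq:tppfdp rewrite}, at fixed $\TPP=u$ the functional $\FDP$ depends on $(\Pi,\Lambda)$ only through the zero-threshold $\alpha$ and is strictly decreasing in $\alpha$. For any 2-level pair $(\Pi,\Lambda)$ with $\TPP(\Pi,\Lambda)=u$, the definition \eqref{eq:alpha star explicit} of $\alpha^*_\text{2-level}$ as a supremum gives $\alpha(\Pi,\Lambda)\leq \alpha^*_\text{2-level}(u;\epsilon,\delta)$, so substituting back into \eqref{eq:tppfdp rewrite} yields $\FDP(\Pi,\Lambda)\geq q_\text{2-level}(u;\delta,\epsilon)$.

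\textbf{Tightness.} For each $u\in(0,1)$, I would exhibit an extremal 2-level pair $(\Pi^\star_u,\Lambda^\star_u)$ attaining $\FDP=q_\text{2-level}(u)$. Combining \Cref{lem:second is zero-threshold} and \Cref{lem:rho_def} with the three-point prior reduction \eqref{eq:three point pi}, the optimization in \eqref{eq:alpha star explicit} collapses to a finite-dimensional problem over $(\alpha_1,\alpha_2,s,t_1,t_2)$ constrained by $F[\pi_{\min};\alpha_1,\alpha_2,s]\leq \delta$. The explicit limiting scalar function from \Cref{thm:1 point mass} together with the implicit characterization of $h$ in \Cref{lem:q1q2h} renders $F$ continuous in these parameters, and after truncating the unbounded coordinates to a compact box the feasible set is compact, so a maximizer $(\alpha_1^\star,\alpha_2^\star,s^\star,t_1^\star,t_2^\star)$ exists and determines the extremal pair $(\Pi^\star_u,\Lambda^\star_u)$. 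Given a continuous curve $q\geq q_\text{2-level}$ with $q(u_0)>q_\text{2-level}(u_0)$, continuity of both $q$ and $q_\text{2-level}$ produces a neighborhood on which strict inequality persists, and the extremal pair there satisfies $\FDP(\Pi^\star_{u_0},\Lambda^\star_{u_0})=q_\text{2-level}(u_0)<q(\TPP(\Pi^\star_{u_0},\Lambda^\star_{u_0}))$.

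\textbf{Principal obstacle.} The real technical burden is justifying attainment of the supremum in \eqref{eq:alpha star explicit} and establishing continuity of $u\mapsto q_\text{2-level}(u)$. Controlling the degenerate limits ($s\to 0$ or $1$, $\alpha_1\to\infty$, where the 2-level SLOPE collapses to a LASSO or the state-evolution constraint becomes inactive) and verifying upper semi-continuity of the objective $\alpha_2$ under $F\leq\delta$ requires care; the crucial continuity of $(\alpha_1,\alpha_2,s,\pi)\mapsto h$ follows from the implicit function theorem applied to \eqref{eq:hcalc}, since the integrand there is strictly monotone in $h$ (as shown inside the proof of \Cref{lem:q1q2h}). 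Continuity of $q_\text{2-level}$ in $u$ in turn reduces to understanding how the constraint $\TPP=u$ slices the feasible parameter set, which is tractable via the explicit formulas \eqref{eq:tppfdp rewrite2} and \eqref{eq:compute p}. Should strict attainment fail at some $u$, a sequential approximation combined with continuity of $q_\text{2-level}$ still yields the conclusion, since the infimum of $\FDP$ is then approached arbitrarily closely by 2-level pairs.
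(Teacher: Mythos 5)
Your proposal is correct and follows essentially the same route as the paper: the lower bound comes from the monotonicity of $\FDP$ in the zero-threshold combined with the definition of $\alpha^*_\text{2-level}$ as a supremum, and tightness from the fact that this supremum is approached by feasible pairs with $\TPP=u$. You actually supply more rigor on the attainment and continuity issues behind tightness than the paper, which dispatches that step in a single sentence; your sequential-approximation fallback (which needs no attainment and, since the approximating pairs have $\TPP$ exactly $u_0$, not even continuity of $q_\text{2-level}$) is the cleanest way to close it.
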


\begin{proof}[Proof of \Cref{thm:trade-off}]
Recall the $\FDP$, defined in \eqref{eq:fdp is function of tpp} and restated below:
$$\FDP(\Pi,\Lambda)=\frac{2(1-\epsilon) \Phi\left(-\alpha(\Pi,\Lambda)\right)}{2(1-\epsilon) \Phi\left(-\alpha(\Pi,\Lambda)\right)+\epsilon \TPP(\Pi,\Lambda)}.$$
Notice that $\FDP$ is decreasing in the zero-threshold, $\alpha$. Therefore, with $\alpha_\text{2-level}^*$ defined as the supreme in 
\eqref{eq:all prior max zero thres}, i.e.,\
\begin{equation}
\alpha_\text{2-level}^*=\sup_{\Pi, \Lambda=\langle\lambda_1,\lambda_2;s\rangle} \left\{\alpha(\Pi, \Lambda) \quad \text{ s.t. } \quad \TPP(\Pi,\Lambda)=u\right\},
\label{eq:alpha_thm}
\end{equation}
we have that, for any $(\Pi,\Lambda)$,
\begin{equation}
\begin{split}
\FDP(\Pi,\Lambda)&=\frac{2(1-\epsilon) \Phi\left(-\alpha(\Pi,\Lambda)\right)}{2(1-\epsilon) \Phi\left(-\alpha(\Pi,\Lambda)\right)+\epsilon \TPP(\Pi,\Lambda)} \\
&\geq \frac{2(1-\epsilon) \Phi\left(-\alpha_\text{2-level}^*(\Pi,\Lambda)\right)}{2(1-\epsilon) \Phi\left(-\alpha_\text{2-level}^*(\Pi,\Lambda)\right)+\epsilon u }=
q_\text{2-level}(\TPP;\delta,\epsilon).
\label{eq:thm_lower_bound}
\end{split}
\end{equation}
The tightness of \eqref{eq:thm_lower_bound} follows naturally as $\alpha_\text{2-level}^*$ is the supreme of all feasible zero-thresholds. \Cref{sec:main ingredients} shows that $\alpha_\text{2-level}^*$ given by the representation in \eqref{eq:alpha star explicit} also satisfies \eqref{eq:alpha_thm}.
\end{proof}

\begin{rem}
\Cref{thm:trade-off} is similar to previous trade-offs in \cite{su2017false,bu2021characterizing} in the sense that $\FDP\geq q_\text{LASSO}(\TPP)$ and $\FDP\geq q_\text{SLOPE}(\TPP)$ for some functions $q$ that are the lower bounds of minimum $\FDP$, although the formulation is different. In other words, the general SLOPE, 2-level SLOPE and LASSO all demonstrate trade-offs that do not allow $\FDP$ to reach 0. We refer to \Cref{app:q details} for a detailed comparison of these $q$ functions.
\end{rem}


We emphasize that the first statement of \Cref{thm:trade-off} characterizes a different $\TPP$-$\FDP$ trade-off than the LASSO trade-off, in the sense that the 2-level SLOPE trade-off overcomes the TPP upper limit present for the LASSO and that there exists a range of $\TPP$ values such that the corresponding $\FDP$ is strictly smaller for 2-level SLOPE than LASSO. In particular, the second statement of \Cref{thm:trade-off} characterizes the tightness of the 2-level SLOPE trade-off, whereas the tightness of the general SLOPE trade-off is unknown. In short, 
$$q_\text{LASSO}(\TPP)\geq q_\text{2-level}(\TPP)\geq q_\text{SLOPE}(\TPP),$$
for all $\TPP$; hence,
\begin{align}
\alpha^*_\text{LASSO}\leq \alpha^*_\text{2-level}\leq \alpha^*_\text{SLOPE}.
\label{eq:alpha order}
\end{align}

\begin{rem}
There exists a range of $\TPP$ values such that the inequalities above hold strictly: we visualize $q_\text{LASSO}(\TPP)>q_\text{2-level}(\TPP)$ in \Cref{fig:summary} and we prove  $q_\text{2-level}(\TPP)>q_\text{SLOPE}(\TPP)$ in \Cref{sec:analytic slope}. The result $q_\text{LASSO}(\TPP) > q_\text{SLOPE}(\TPP)$ is also presented in \cite[Appendix E.3]{bu2021characterizing}.
\end{rem}

\begin{figure}[!htb]
    \centering
    \includegraphics[width=0.475\linewidth]{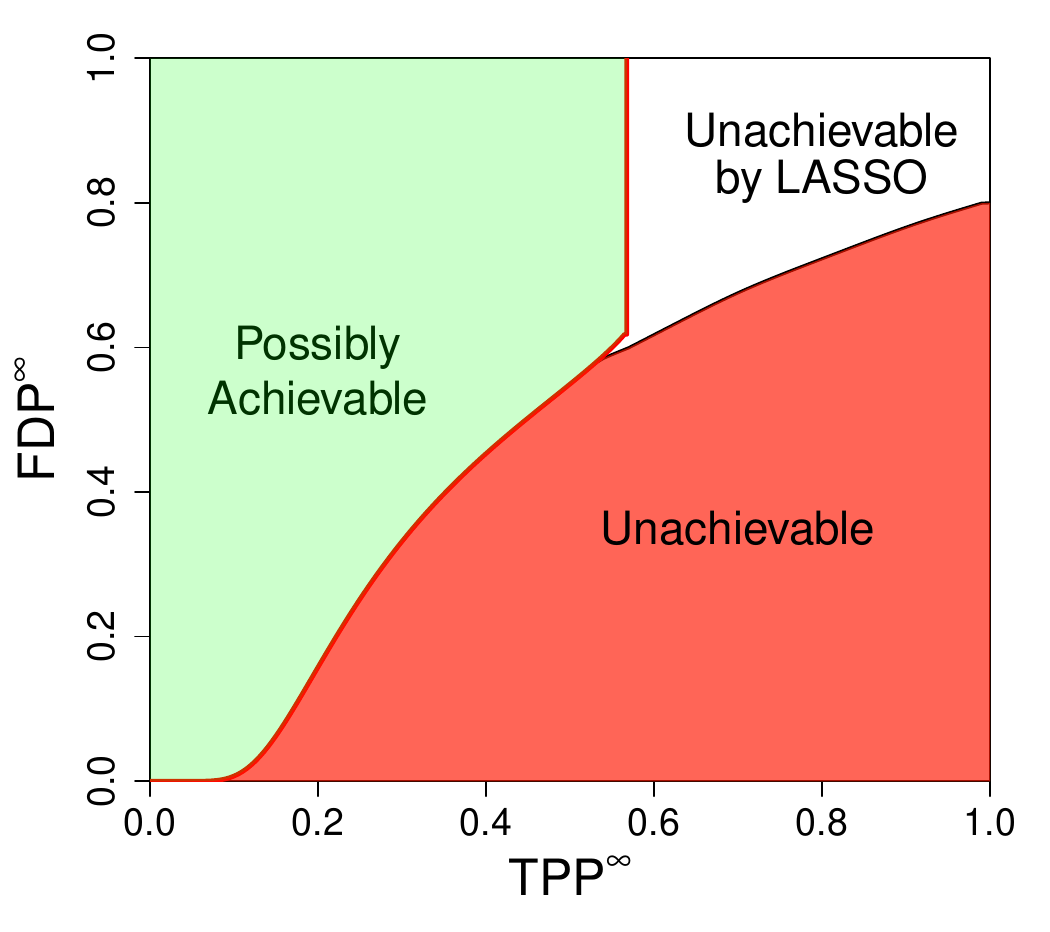}
    \includegraphics[width=0.475\linewidth]{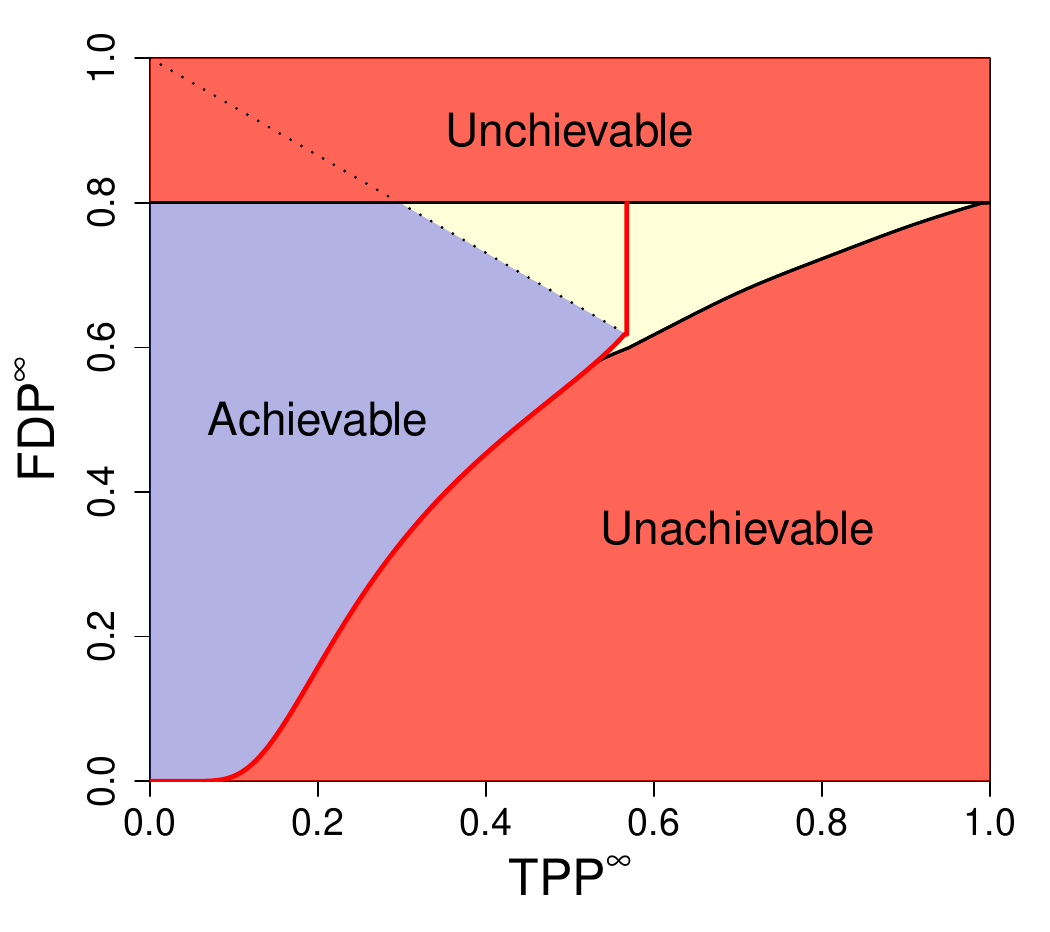}
    \caption{2-level SLOPE $\TPP$–$\FDP$ diagram  by \Cref{thm:trade-off} (left) and \Cref{thm:complete trade-off} (right). The red curve is the LASSO (1-level) trade-off and the black curve is the general 2-level SLOPE trade-off. The red region is $(\TPP,\FDP)$ pairs not achievable by 2-level SLOPE nor by the LASSO, regardless of the prior distribution on the signal and the choice of penalty. On the left sub-plot, the white region is possibly achievable by 2-level SLOPE but not achievable by the LASSO; the green region is possibly achievable by both. On the right sub-plot, the ``possibly achievable'' regions are precisely decomposed: the yellow region is achievable only by 2-level SLOPE but not the LASSO; the purple region is achievable by both. We note that the boundary between the purple region and the yellow one is a line connecting $(0,1)$ and $(u_\text{DT},q_\text{LASSO}(u_\text{DT}))$, given by \cite{wang2020complete}, where $u_\text{DT}\approx 0.5676$. Here we have used $\epsilon=0.2,\delta=0.3$.}
    \label{fig:complete or not trade-off}
\end{figure}

To be more precise, we present the complete 2-level SLOPE TPP-FDP trade-off in the right sub-plot of \Cref{fig:complete or not trade-off}. In contrast to the LASSO, a significantly large range of $(\TPP,\FDP)$ pairs are provably achievable beyond the LASSO's TPP upper limit. This result is presented in \Cref{thm:complete trade-off} following from \Cref{thm:trade-off}.

\begin{theorem}\label{thm:complete trade-off}
Let $\mathcal{D}_{\epsilon, \delta}$ be the region  enclosed by three curves: $\FDP=1-\epsilon, \FDP=q_\text{2-level}\left(\TPP\right)$, and $\TPP=0$. For $\delta<1$ and $0<\epsilon<1$, any $\left(\TPP, \FDP\right)$ in $\mathcal{D}_{\epsilon, \delta}$ is asymptotically achievable by the 2-level SLOPE. 
\end{theorem}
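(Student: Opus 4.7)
The plan is to establish asymptotic achievability of every point in $\mathcal{D}_{\epsilon,\delta}$ in two stages: first argue that each of the three boundary curves is attained (or approached) by an explicit 2-level construction, then fill in the interior by a continuity argument in the parameter space $(\Pi,\Lambda)$. The lower boundary $\FDP=q_\text{2-level}(\TPP)$ is immediate from the tightness half of \Cref{thm:trade-off}: for each feasible $u$, the supremum in \eqref{eq:all prior max zero thres_new2} is approached by a sequence $(\pi_u^{(k)},\A_u^{(k)})$ whose $(\TPP,\FDP)$ image converges to $(u,q_\text{2-level}(u))$ via \eqref{eq:qu}. The left boundary $\TPP=0$ is attained by letting the zero-threshold $\alpha\to\infty$ so that essentially no coordinate is selected, with the limiting $\FDP$ taking any value in $[0,1-\epsilon]$ depending on the relative decay rate of false to total discoveries.

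For the top boundary $\FDP=1-\epsilon$, I would use a 2-level penalty with $\lambda_2$ small combined with a three-point prior whose larger non-zero mass is always detected while the smaller mass is only occasionally detected. As $\lambda_2\downarrow 0$ essentially every coordinate becomes a discovery, so the fraction of false ones approaches $1-\epsilon$, and tuning the mass assignments in the three-point prior of \eqref{eq:three point pi} lets $\TPP$ take any target value in $(0,1)$.

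Having attained both the top and bottom envelopes, I would fill in the interior by continuity. The state evolution \eqref{eq:state evolution} admits a unique solution $(\tau,\A)$ depending continuously on $(\Pi,\Lambda)$ by \cite[Prop.~II.6]{bu2019algorithmic}, and the 2-level limiting scalar function in \Cref{thm:1 point mass} depends continuously on $(\pi,\A)$ via the uniqueness of $h$ in \Cref{lem:q1q2h}. Consequently the map $(\Pi,\Lambda)\mapsto(\TPP,\FDP)$ from \eqref{eq:tppfdp comp} is continuous. For fixed target $\TPP=u_0$, I would exhibit a continuous one-parameter deformation in $(\alpha_1,\alpha_2,s)$ together with the three-point prior parameters $(t_1,t_2,\rho)$ that keeps $\TPP$ equal to $u_0$ while the $\FDP$ sweeps continuously from $q_\text{2-level}(u_0)$ toward $1-\epsilon$; the intermediate value theorem then produces every intermediate $\FDP$ along this path.

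The main obstacle will be constructing this one-parameter path in $(\Pi,\Lambda)$ that simultaneously honors the constraint $\TPP=u_0$, the 2-level restriction on $\A$, and the state evolution \eqref{eq:state evolution}. A plausible workaround is to first relax to general (non-two-level) priors on a connected subset of parameter space and then re-approximate the resulting path by 2-level admissible pairs via density arguments, analogously to the LASSO complete-diagram construction in \cite{wang2020complete}. Boundary effects near the corner $(0,1-\epsilon)$ and at the upper-right endpoint where the three-point prior in \eqref{eq:three point pi} collapses to a two-point or one-point distribution will likely require approximation by degenerate limits rather than direct constructions.
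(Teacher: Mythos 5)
Your overall architecture (realize the lower envelope by the tightness half of \Cref{thm:trade-off}, realize the upper envelope by an explicit degenerate family, connect them by a continuous one-parameter deformation plus an intermediate-value argument) matches the paper's, which invokes the homotopy lemma of \cite{wang2020complete}. However, two concrete steps fail or are missing. First, your construction of the top boundary $\FDP=1-\epsilon$ via $\lambda_2\downarrow 0$ does not work: the zero-threshold of a 2-level penalty is $\alpha_1$ or $\alpha_2$ (\Cref{lem:second is zero-threshold}), and as it tends to $0$ one gets $\TPP=\PP(|\pi^*+Z|>\alpha)\to\PP(|\pi^*+Z|>0)=1$ for \emph{every} choice of the three-point masses, since $\pi^*+Z$ has a density; so this limit reaches only the corner $(1,1-\epsilon)$, not the segment $\{(u,1-\epsilon):0<u<1\}$. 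Indeed, setting $\FDP=1-\epsilon$ in \eqref{eq:fdp is function of tpp} forces $2\Phi(-\alpha)=u$, i.e.\ $\TPP=\PP(|\pi^*+Z|>\alpha)=\PP(|Z|>\alpha)$, and since $t\mapsto\Phi(t-\alpha)+\Phi(-t-\alpha)$ is strictly increasing in $|t|$ this requires the normalized signal $\pi^*=\Pi^*/\tau$ to vanish. The top edge is therefore reached only in the infinite-noise (equivalently, vanishing normalized signal) limit, which is exactly the paper's mechanism: it sends $\sigma=\tan(\pi s/2)\to\infty$.

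Second, the interior-filling step, which you flag as ``the main obstacle,'' is left unconstructed, and that is where the content of the proof lies. The paper's resolution is that the noise level itself is the deformation parameter: with $(\Pi_*(u),\Lambda_*(u))$ chosen by the tightness of \Cref{thm:trade-off} to realize $(u,q_\text{2-level}(u))$, the map $f(u,s)=\bigl(u,\FDP(\Pi_*(u),\Lambda_*(u),\tan\tfrac{\pi s}{2})\bigr)$ is continuous on $[0,1]^2$, its four edge images are the trade-off curve, the line $\FDP=1-\epsilon$, the line $\TPP=0$, and the degenerate point $(1,1-\epsilon)$, and Lemma 3.7 of \cite{wang2020complete} then yields every interior point without needing to hold $\TPP$ fixed along the path or to verify a monotone sweep. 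Your proposed path through $(\alpha_1,\alpha_2,s,t_1,t_2)$ at fixed finite noise cannot substitute for this, because by the computation above no such path attains $\FDP=1-\epsilon$ at $\TPP=u_0<1$.
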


\begin{rem}
In \Cref{thm:complete trade-off}, the first curve $\FDP=1-\epsilon$ is the worst $\FDP$ that is achieved by a naive estimator that selects all predictors, i.e.,\ applying $\alpha(\Pi,\Lambda)=0$ (and consequently $\TPP=1$) in \eqref{eq:fdp is function of tpp} arrives at this boundary. The second curve, $\FDP=q_\text{2-level}\left(\TPP\right)$, is derived from \Cref{thm:trade-off},  which shows that the region below this curve is not achievable. The last curve, $\TPP=0$, is trivial because all $\TPP$ must be non-negative.
\end{rem}


\subsection{Computing $\alpha^*_\text{2-level}$}
\label{sec:derivation alpha star}
In this section, we discuss how to efficiently and explicitly compute  $\alpha^*_\text{2-level}$ defined in \eqref{eq:alpha star explicit}.  This is done via a search over $\alpha_1 \geq \alpha_2 \geq 0$ and $0 \leq s \leq 1$ as detailed in \Cref{alg:tpp fdp all pi}, where the main computational challenge is in computing $F[\pi_{\min}]$ from \eqref{eq:state evolution constraint}. We first discuss \Cref{alg:tpp fdp all pi} and the rest of the section details how to compute $F[\pi_{\min}]$.
\begin{algorithm}
\caption{Minimizing $\FDP$ at $\TPP=u$ over all penalties $\Lambda$ and all priors $\Pi$ }
\begin{algorithmic}[1]
\State Input: the lower and upper bounds, $\alpha_\text{(lower/upper)}$, of the maximum zero-threshold $\alpha^*_\text{2-level}$;
constants $(\epsilon,\delta)$
\For{$\alpha_2\in [\alpha_\text{(lower)}(u),\alpha_\text{(upper)}(u)]$}
\For{$\alpha_1\in [\alpha_2,\infty)$}
\For{$s\in[0,1]$}
\For{$0\leq t_1\leq t_2$}
\State Compute $F[\pi_{\min}(t_1,t_2;u,\epsilon);\alpha_1,\alpha_2,s]$ from \Cref{alg:F rho min}.
\If{$F\leq\delta$} 
\State $\alpha_2$ is feasible, break the loops and move to larger $\alpha_2$
\EndIf
\EndFor
\EndFor
\EndFor
\EndFor
\State Output: the maximum feasible zero-threshold $\alpha_2$ and the minimum $\FDP$ by \eqref{eq:fdp is function of tpp}
\end{algorithmic}
\label{alg:tpp fdp all pi}
\end{algorithm}

Overall, \Cref{alg:tpp fdp all pi} allows us to derive the maximum zero-threshold $\alpha_2$ in the 2-level SLOPE, in sharp contrast to the general SLOPE. The difference between the two settings, which we detail below, is that for 2-level SLOPE we are able refine the representation of $\Aeff$ in  \eqref{eq: 2level Aeff} -- and consequently the  2-level SLOPE limiting scalar function in \eqref{eq:2-level limiting} -- from general $\epsilon$-sparse priors $\pi$ to the specific three-point prior in \eqref{eq:three point pi} and this allows for efficient computation. In contrast, for general SLOPE, because the representation of $\Aeff$ is not known, one must leverage quadratic programming to find the optimal $\Aeff$. 

In particular, \Cref{alg:tpp fdp all pi} can be made very efficient using two tricks in line 2: (1) searching a narrow domain for $\alpha_2$ by \Cref{rem:bisection}; (2) using the bisection method to quickly converge to the maximum feasible $\alpha_2$.

\begin{rem}
\label{rem:bisection}
In \Cref{alg:tpp fdp all pi}, we can naively set $\alpha_\text{(lower)}(u)=0$ and $\alpha_\text{(upper)}(u)$ to be a sufficiently large number; though, the algorithm may be quite slow if the searching domain is large. Alternatively, we can set a much narrower domain by leveraging \eqref{eq:alpha order}: we can set $\alpha_\text{(lower)}(u)=\alpha^*_\text{LASSO}(u)$ and $\alpha_\text{(upper)}(u)=\alpha^*_\text{SLOPE}(u)$.
\end{rem}

Computation of $F[\pi_{\min}]$ is detailed in \Cref{alg:F rho min}, which references \Cref{lem:3 point} and \Cref{lem:expand E}. We will first discuss the lemmas and then formally present them afterwards.

\begin{algorithm}
\caption{Computing $F[\pi_{\min}]$}
\begin{algorithmic}[1]
\State Input: Constants $t_1, t_2$; $\TPP=u$; 2-level SLOPE penalty $\alpha_1,\alpha_2, s$; the sparsity $\epsilon$
\State Compute the probability $\rho(t_1,t_2;u, \alpha_2)$ from \eqref{eq:compute p} in \Cref{lem:rho_def}. This requires the input values $\alpha_2$ and $u$.
\State Compute the shared magnitude $h$ and quantiles $q_1$ and $q_2$ from \Cref{lem:3 point}. This requires the input values $t_1, t_2,\alpha_1,\alpha_2,s, \epsilon$ and $\rho$ from the previous step.
\State Define the function $\mathrm{A}_{\mathrm{eff}}(x)$ using \eqref{eq: 2level Aeff}. This requires the input values $\alpha_1,\alpha_2, \epsilon, t_1, t_2$, as well as $h, q_1, q_2$ and $\rho$ from the previous steps.
\State Define the function $\mathcal{E}(t)$ in \eqref{eq:mathcal E} using \Cref{lem:expand E} for $t \in \{0, t_1, t_2\}$. This requires input values $\alpha_1$ and $\alpha_2$, as well as the shared magnitude $h$ calculated in a previous step.
\State Output: $F[\pi_{\min}]$ in \eqref{eq:SE t1t2}. This requires input value $\epsilon$, as well as $\mathcal{E}(0), \mathcal{E}(t_1), \mathcal{E}(t_2)$, and $\rho$  from the previous steps.
\end{algorithmic}
\label{alg:F rho min}
\end{algorithm}

\Cref{lem:3 point}, used in step 3 of the algorithm, derives the shared magnitude $h$ and the values $q_1$ and $q_2$ following \Cref{lem:q1q2h} given $( \alpha_1,\alpha_2,s, \epsilon)$ and $\pi_{\min}$, namely, given $(t_1, t_2, \rho)$.
Recall that from \Cref{lem:q1q2h}, we know that $h$ is unique, and we can further make the formula of $h$ easily computable when $\pi$ is the three-point prior with the bisection method, thanks to the monotonicity of $\S_{|\pi+Z|}(x)=\PP(|\pi+Z|<x)$, proven in \Cref{lem:q1q2h}.

Once we have calculated the shared magnitude $h$ and the values $q_1$ and $q_2$ with \Cref{lem:3 point}, one can define the effective penalty $\Aeff$ using \eqref{eq: 2level Aeff}. Armed with the $\Aeff$ function, we are in a place to compute $F[\pi_{\min}]$ efficiently. Doing so boils down to the following. We first notice that given $(t_1, t_2, \rho)$ for the three-point prior, for any $(\alpha_1,\alpha_2, s)$, we can represent $F[\pi_{\min}]$ explicitly by substituting $\pi_{\min}$ from \eqref{eq:three point pi} into \eqref{eq:state evolution constraint} and deriving
\begin{align}
\begin{split}
F[\pi_{\min}(t_1,t_2;u);\alpha_1,\alpha_2,s]
&=(1-\epsilon) \mathcal{E}(0)+\epsilon  \rho  \mathcal{E}(t_1)+\epsilon  (1-\rho) \mathcal{E}(t_2),
\end{split}
\label{eq:SE t1t2}
\end{align}
where we have defined
\begin{align}
\mathcal{E}(t):=\mathbb{E}\left[\left(\eta_{\mathrm{soft}}\left(t+Z ; \mathrm{A}_{\mathrm{eff}}(t+Z)\right)-t\right)^2\right].    
\label{eq:mathcal E}
\end{align}
In fact, each of $\mathcal{E}(0), \mathcal{E}(t_1)$, and $\mathcal{E}(t_2)$ in \eqref{eq:SE t1t2} can be computed exactly and efficiently using \Cref{lem:expand E} (on the other hand, this is not true for general SLOPE), so as to make the TPP-FDP trade-off of 2-level SLOPE explicit. While we leave the details of the \Cref{lem:expand E} proof  for the appendix, we highlight that a key quantity in the derivation of $\mathcal{E}(t)$ in \eqref{eq:mathcal E} is the shared magnitude $h$ calculated with \Cref{lem:3 point}. We mention that the essential tools to  prove \Cref{lem:expand E} are simple facts about truncated Gaussian expectations given in Fact \ref{fact:truncated normal}.

In what follows, we formally present the lemmas, whose proofs are given in \Cref{sec:remainingproofs} and \Cref{sec:proof56}, respectively.

\begin{lemma}
\label{lem:3 point}
Given $\pi_{min}$ in \eqref{eq:three point pi} (namely, given $(t_1, t_2, \rho)$) and given $(\alpha_1,\alpha_2,s, \epsilon)$, one can compute the unique shared magnitude $h$ from the equation
\begin{align}
\begin{split}
(1-s) (\alpha_1-\alpha_2)
&=(1-\epsilon)\int_{h+\alpha_2}^{h+\alpha_1}\PP(|Z|<x) \, dx +\epsilon \rho\int_{h+\alpha_2}^{h+\alpha_1}\PP(|t_1+Z|<x) \, dx \\
& \qquad +\epsilon(1-\rho)\int_{h+\alpha_2}^{h+\alpha_1}\PP(|t_2+Z|<x) \, dx,
\end{split}
\label{eq:substitute}
\end{align}
where $Z$ is a standard normal and each of the integrals can be expanded as
\begin{equation}
\begin{split}
\label{eq:int_expand}
\int_{v_2}^{v_1}\PP(|t+Z|<x) \, dx &=\sum_{v \in \{v_1, -v_1\}}(v-t)\Phi(v-t)+\phi(v-t) \\
& \qquad - \sum_{v' \in \{v_2, -v_2\}}(v'-t)\Phi(v'-t)+\phi(v'-t).
\end{split}
\end{equation}
In the above, $\Phi$ is the cumulative distribution function and $\phi$ is the probability density function of the standard normal. 
Finally,  we have 
$$q_1=\S_{|\pi_{\min}+Z|}(h+\alpha_1),\quad q_2=\S_{|\pi_{\min}+Z|}(h+\alpha_2)$$ 
where
\begin{equation}
\begin{split}
\S_{|\pi_{\min}+Z|}(x)=\PP(|\pi_{\min}+Z|<x) &=(1-\epsilon)\left[\Phi(x)-\Phi(-x)\right]  \\
&\qquad +\epsilon \rho\left[\Phi(x-t_1)-\Phi(-x-t_1)\right]\\
&\qquad +\epsilon(1-\rho)\left[\Phi(x-t_2)-\Phi(-x-t_2)\right].
\label{eq:Sexpand}
\end{split}
\end{equation}
\end{lemma}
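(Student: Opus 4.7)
The plan is to specialize \Cref{lem:q1q2h} to the three-point prior $\pi_{\min}$ in \eqref{eq:three point pi}, then unpack the resulting expressions into the explicit closed forms \eqref{eq:substitute}, \eqref{eq:int_expand}, and \eqref{eq:Sexpand}. Since \Cref{lem:q1q2h} already proves the uniqueness of $h$ via the intermediate value theorem (the only ingredients used there being monotonicity and continuity of $\S_{|\pi+Z|}$, both of which manifestly hold for the mixture of Gaussians $\pi_{\min}+Z$), and also already supplies the formulas $q_i = \S_{|\pi_{\min}+Z|}(h+\alpha_i)$, the work here is purely to give computable expressions for the CDF $\S_{|\pi_{\min}+Z|}$ and for the integral $\int_{h+\alpha_2}^{h+\alpha_1} \S_{|\pi+Z|}(x)\,dx = (1-s)(\alpha_1-\alpha_2)$.

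First I would establish \eqref{eq:Sexpand}. Because $\pi_{\min}$ takes values in $\{0,t_1,t_2\}$ with probabilities $1-\epsilon$, $\epsilon\rho$, and $\epsilon(1-\rho)$ respectively, the law of total probability gives $\PP(|\pi_{\min}+Z|<x) = (1-\epsilon)\PP(|Z|<x) + \epsilon\rho\,\PP(|t_1+Z|<x) + \epsilon(1-\rho)\,\PP(|t_2+Z|<x)$. For each term, I would use the standard identity $\PP(|t+Z|<x) = \PP(-x-t<Z<x-t) = \Phi(x-t) - \Phi(-x-t)$, which yields \eqref{eq:Sexpand} directly (the $t=0$ case collapsing to $\Phi(x)-\Phi(-x)$). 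Substituting this mixture form into the defining integral from \Cref{lem:q1q2h} and distributing the integral across the three mixture components immediately produces \eqref{eq:substitute}.

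Second I would derive \eqref{eq:int_expand}. The key antiderivative is $\int \Phi(u)\,du = u\Phi(u) + \phi(u) + C$, which is verified by noting $\phi'(u) = -u\phi(u)$ and differentiating: $\tfrac{d}{du}[u\Phi(u)+\phi(u)] = \Phi(u) + u\phi(u) - u\phi(u) = \Phi(u)$. Applying this to $\int \Phi(x-t)\,dx$ (with substitution $u=x-t$) and to $\int \Phi(-x-t)\,dx$ (with $u=-x-t$, so $du=-dx$, which introduces the sign flips $(x+t)\Phi(-x-t) - \phi(-x-t)$), then subtracting the two and evaluating between $v_2$ and $v_1$, yields
\begin{align*}
\int_{v_2}^{v_1}\!\!\big[\Phi(x-t)-\Phi(-x-t)\big]\,dx &= \big[(v_1{-}t)\Phi(v_1{-}t)+\phi(v_1{-}t) + (-v_1{-}t)\Phi(-v_1{-}t)+\phi(-v_1{-}t)\big] \\
&\qquad - \big[(v_2{-}t)\Phi(v_2{-}t)+\phi(v_2{-}t) + (-v_2{-}t)\Phi(-v_2{-}t)+\phi(-v_2{-}t)\big],
\end{align*}
which is exactly the compact sum-over-$\{v_1,-v_1\}$ and $\{v_2,-v_2\}$ form \eqref{eq:int_expand}.

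The main obstacle, such as it is, is really just sign bookkeeping in the $u=-x-t$ substitution: one must track that $-\phi(-x-t)$ appears with opposite sign from $\phi(x-t)$ before recognizing that the final expression can be symmetrized by indexing over $v \in \{v_1,-v_1\}$, which is what makes the formula \eqref{eq:int_expand} appear so clean. All deeper content (existence, uniqueness, and the defining integral equation for $h$) has already been done in \Cref{lem:q1q2h}, so no new analytic machinery is required.
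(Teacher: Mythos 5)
Your proposal is correct and follows essentially the same route as the paper: both reduce \eqref{eq:substitute} to \Cref{lem:q1q2h} by substituting the three-point mixture and appealing to total probability, both obtain \eqref{eq:Sexpand} from $\PP(|t+Z|<x)=\Phi(x-t)-\Phi(-x-t)$, and both derive \eqref{eq:int_expand} from the antiderivative $\int\Phi(u)\,du=u\Phi(u)+\phi(u)$ (the paper phrases this as $\int\Phi(a+bx)\,dx=\tfrac{1}{b}[(a+bx)\Phi(a+bx)+\phi(a+bx)]$, which is the same computation). Your sign bookkeeping in the $u=-x-t$ substitution matches the paper's evaluated antiderivative, so no gaps remain.
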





\begin{lemma}
\label{lem:expand E}
For $\Aeff(x) = \Aeff(x;h,\alpha_1,\alpha_2,s)$ in \eqref{eq: 2level Aeff} and any $t\in\R$, we have
\begin{align*}
&\mathcal{E}(t) = \mathbb{E}\left\{\left(\eta_{\mathrm{soft}}\left(t+Z ; \mathrm{A}_{\mathrm{eff}}(t+Z)\right)-t\right)^2\right\} 
\\
&=\left(\alpha_1^2+1\right)\left[1-\Phi\left(\alpha_1+h-t\right)+\Phi\left(-\alpha_1-h-t\right)\right] +t^2\left[\Phi(\alpha_2-t)-\Phi(-\alpha_2-t)\right]
\\
&+(h-t)^2\left[\Phi(\alpha_1+h-t)-\Phi(\alpha_2+h-t)\right]+(h+t)^2\left[\Phi(-\alpha_2-h-t)-\Phi(-\alpha_1-h-t)\right]
\\
&+\left(\alpha_2^2+1\right)\left[\Phi\left(-\alpha_2-t\right)-\Phi\left(-\alpha_2-h-t\right)+\Phi\left(\alpha_2+h-t\right)-\Phi\left(\alpha_2-t\right)\right]\\
&-(\alpha_1-h+t)\phi(\alpha_1+h-t)-(\alpha_1-h-t)\phi(-\alpha_1-h-t)+(\alpha_2-h-t) \phi(-\alpha_2-h-t) \\
&+(\alpha_2-h+t)\phi(\alpha_2+h-t) -(\alpha_2-t)\phi(-\alpha_2-t)-(\alpha_2+t)\phi(\alpha_2-t).
\end{align*}
\end{lemma}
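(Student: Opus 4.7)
The strategy is a region-by-region decomposition of the expectation defining $\mathcal{E}(t)$, relying on the explicit piecewise form \eqref{eq:2-level limiting} of $\eta_{\pi+Z,\mathrm{A}}(x) = \eta_{\mathrm{soft}}(x;\mathrm{A}_{\mathrm{eff}}(x))$ together with the standard truncated-Gaussian moment identities stated in Fact~\ref{fact:truncated normal}. The random variable inside the expectation depends on $Z$ only through which piece of \eqref{eq:2-level limiting} the argument $x=t+Z$ falls in, so the whole problem reduces to expanding a finite sum of integrals against the standard normal density.

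Concretely, I would split the real line (the support of $x=t+Z$) into the seven intervals defined by the thresholds $\pm\alpha_2$, $\pm(\alpha_2+h)$, and $\pm(\alpha_1+h)$. In each interval the residual has a clean closed form: on $\{t+Z>\alpha_1+h\}$ and $\{t+Z<-\alpha_1-h\}$ it equals $(Z\mp\alpha_1)^2$; on $\{\alpha_2+h<t+Z<\alpha_1+h\}$ and $\{-\alpha_1-h<t+Z<-\alpha_2-h\}$ it is the constant $(h-t)^2$ and $(h+t)^2$, respectively; on $\{\alpha_2<t+Z<\alpha_2+h\}$ and its mirror it is $(Z\mp\alpha_2)^2$; and on $\{|t+Z|<\alpha_2\}$ it equals $t^2$.

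Next I would compute the expectation of each of the seven pieces against $Z\sim N(0,1)$. Using $\mathbb{E}[\mathbb{1}_{a<Z<b}]=\Phi(b)-\Phi(a)$, $\mathbb{E}[Z\,\mathbb{1}_{a<Z<b}]=\phi(a)-\phi(b)$, and $\mathbb{E}[Z^2\mathbb{1}_{a<Z<b}]=\Phi(b)-\Phi(a)+a\phi(a)-b\phi(b)$, each contribution becomes a linear combination of $\Phi$ and $\phi$ evaluated at the shifted endpoints $\pm\alpha_1\pm h-t$, $\pm\alpha_2\pm h-t$, and $\pm\alpha_2-t$. For example, the $\{t+Z>\alpha_1+h\}$ piece expands to $(\alpha_1^2+1)[1-\Phi(\alpha_1+h-t)]-(\alpha_1-h+t)\phi(\alpha_1+h-t)$ after combining the $Z^2$, $Z$ and constant terms, which is exactly the pair of summands that appears on the first and fifth-from-bottom lines of the claimed formula. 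The symmetric left-tail piece produces the mirror terms, the two flat regions produce the two constant-times-$\Phi$ differences, the two small-soft-threshold annuli produce the $(\alpha_2^2+1)[\cdot]$ and the $\phi$-at-$\pm(\alpha_2+h)$ remainders, and the central zero region contributes the $t^2[\cdot]$ term.

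The proof is therefore essentially bookkeeping, and the main obstacle will be collecting the seven expansions into the six-line form of the lemma without making a sign error: the $\pm t$ shifts and the $\mp\alpha_i$ in the quadratic $(Z\mp\alpha_i)^2$ both enter the formulas for $\mathbb{E}[Z\,\mathbb{1}_{a<Z<b}]$ and $\mathbb{E}[Z^2\mathbb{1}_{a<Z<b}]$, and one must verify that the cross-terms $-2\alpha_i\phi(\cdot)$ combine correctly with the $a\phi(a)-b\phi(b)$ terms from the second-moment identity to produce the stated $\phi$-coefficients. A minor subtlety, which I would treat as a side remark, is the case $h<0$: the middle clause of \eqref{eq:2-level limiting} then returns $0$ instead of $\mathrm{sign}(x)h$, and one should either restrict to $h\geq 0$ (where 2-level SLOPE genuinely has a shared magnitude) or note that the formula continues to agree by the same decomposition once the regions are re-ordered.
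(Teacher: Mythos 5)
Your proposal is correct and follows essentially the same route as the paper's proof: both decompose the expectation over the regions determined by the thresholds $\pm\alpha_2$, $\pm(\alpha_2+h)$, $\pm(\alpha_1+h)$, identify the closed-form residual on each piece of the limiting scalar function \eqref{eq:2-level limiting}, and reduce everything to truncated-Gaussian moment identities (your unconditional indicator moments are the same computations as the conditional expectations in Fact~\ref{fact:truncated normal}); your sample expansion of the $\{t+Z>\alpha_1+h\}$ piece matches the paper's term exactly. The only cosmetic difference is that the paper first groups the seven intervals into four integrals by $|t+z|$ before splitting signs, and your side remark about the $h<0$ case is a subtlety the paper's proof glosses over (it writes $h\,\mathrm{sign}(t+z)$ rather than $\max\{h,0\}\,\mathrm{sign}(t+z)$ in the flat region), so flagging it is a small improvement rather than a deviation.
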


\subsection{Comparing to the analytic solution of general SLOPE}
\label{sec:analytic slope}

\begin{figure}[!ht]
    \centering
\includegraphics[width=0.45\textwidth]{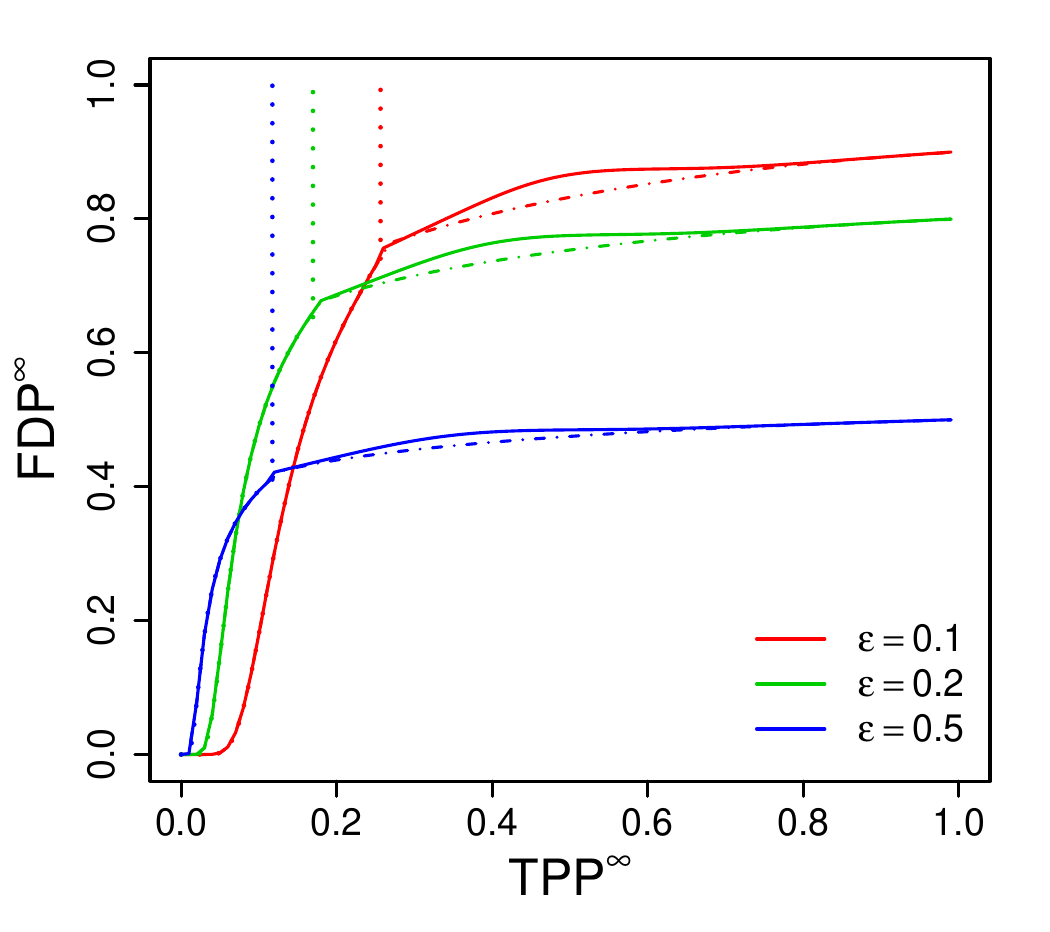}
\includegraphics[width=0.45\textwidth]{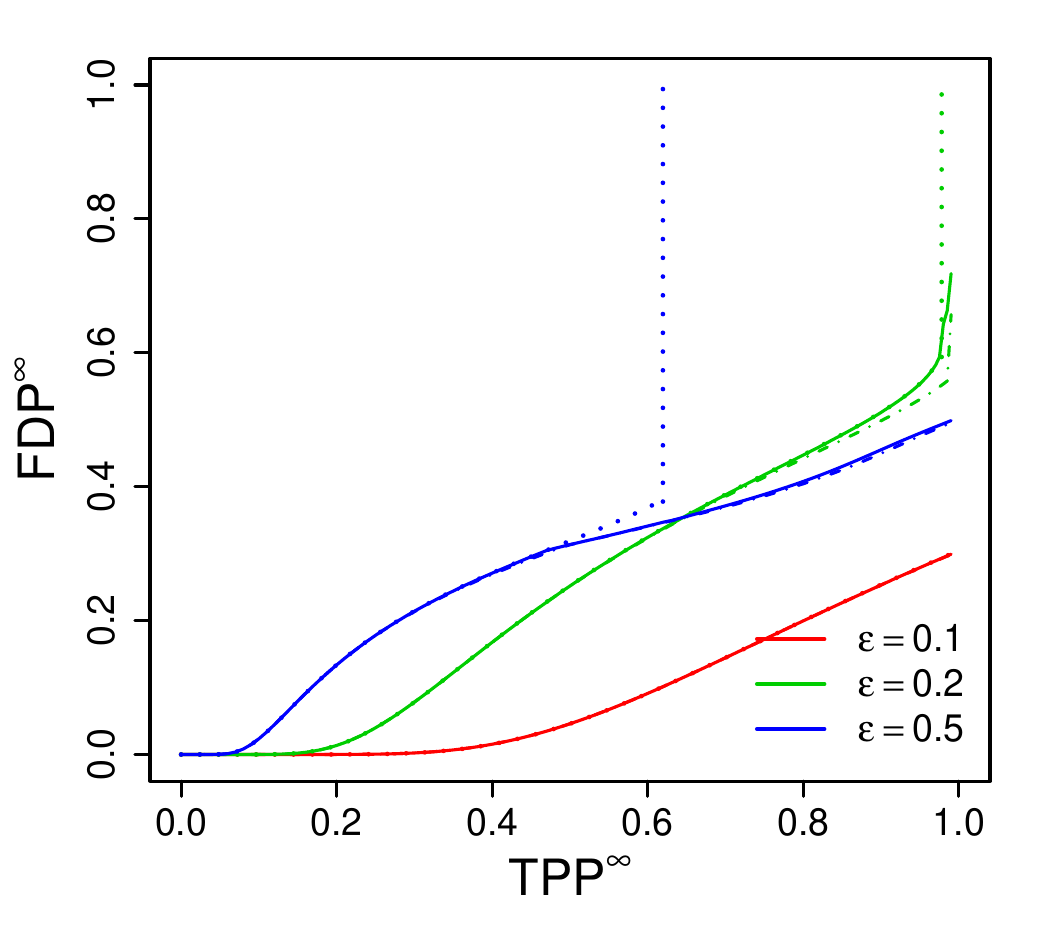}
\includegraphics[width=0.45\textwidth]{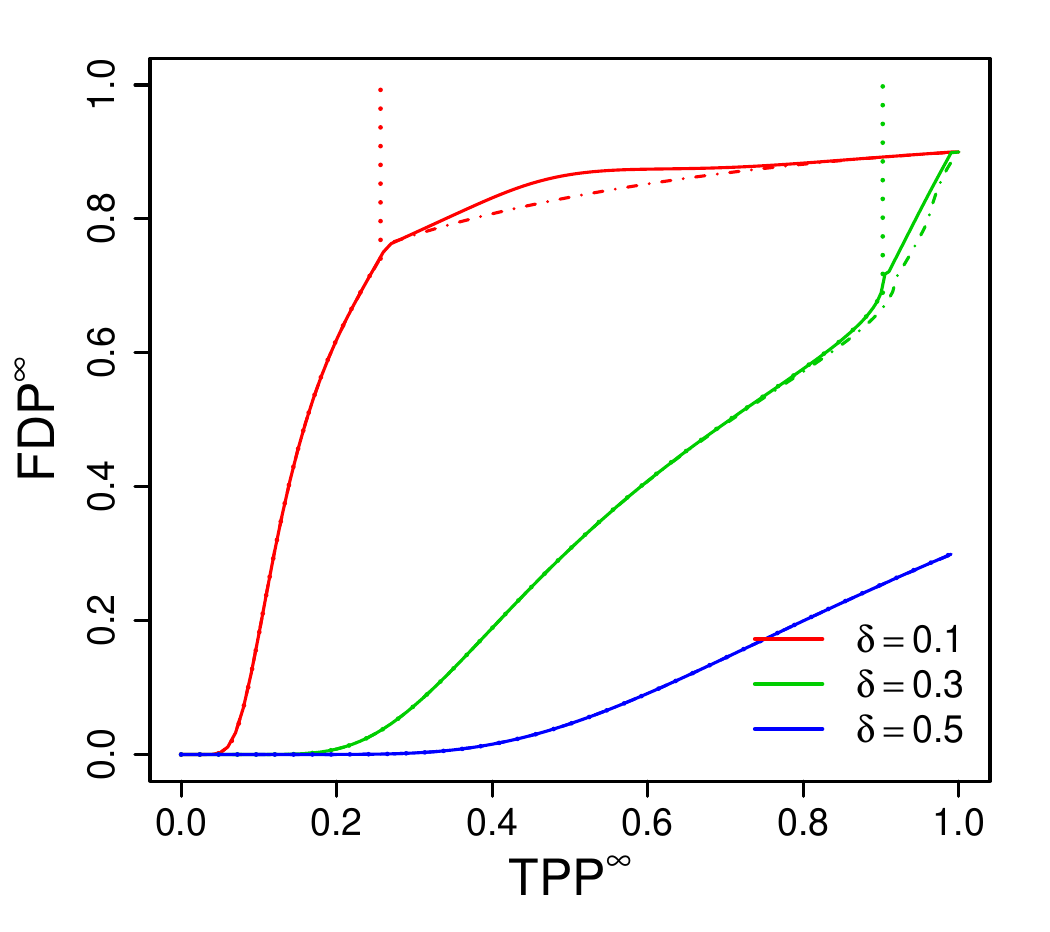}
\includegraphics[width=0.45\textwidth]{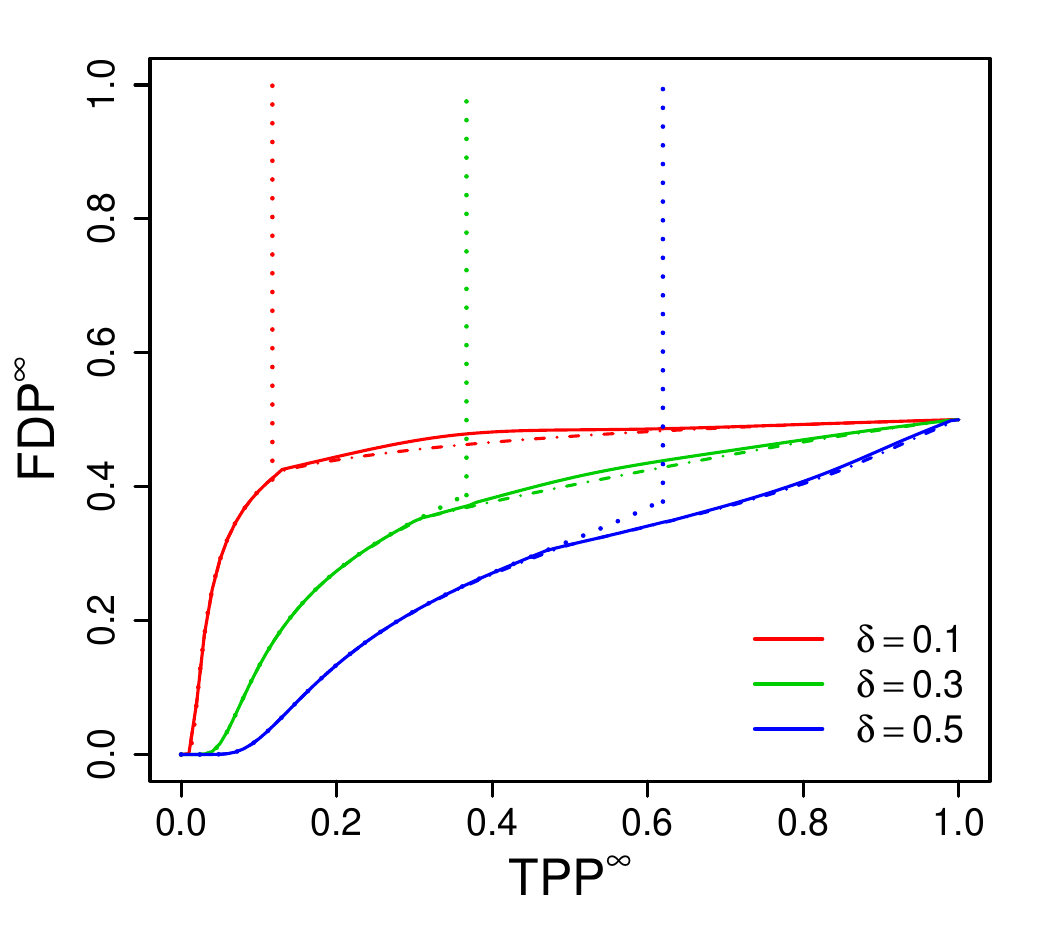}
    \caption{Examples of the SLOPE trade-off bounds $q_\text{LASSO}$, $q_\text{2-level}$ and $q_\text{SLOPE}$ for various $(\epsilon,\delta)$ settings. Top-left:
$\delta= 0.1$; top-right: $\delta= 0.5$; bottom-left: $\epsilon= 0.1$; bottom-right: $\epsilon= 0.5$. In the plots, the solid lines are the 2-level SLOPE trade-off, the dotted lines are the LASSO trade-off, and the dot-dashed lines are the non-tight lower bound of general SLOPE trade-off.}
    \label{fig:trade-off curves}
\end{figure}

We plot the TPP-FDP trade-offs $q_\text{LASSO} 
 (\text{dotted})\geq q_\text{2-level} (\text{solid})\geq q_\text{SLOPE} (\text{dot-dashed})$ in \Cref{fig:trade-off curves}, with respect to various $\epsilon$ and $\delta$ combinations.

For the general SLOPE, because the limiting scalar function is implicit,
the state evolution and the penalty function cannot be precisely evaluated. This means one must use quadratic programming to numerically find the optimal penalty function $\Aeff$, in contrast to the analytic construction of $\Aeff$ described in \Cref{sec:derivation alpha star} for 2-level SLOPE.

Nevertheless, it is possible to analytically construct $\Aeff$ for the general SLOPE under special cases, e.g., for three-point prior $\pi_{\min}(t_1,t_2)$:
from \cite[Appendix E.1]{bu2021characterizing}, if the function in \eqref{eq:AHH} is monotonically increasing, then it is the uniquely optimal SLOPE penalty minimizing the $\FDP$,
\begin{align}
\Aeff(x;\alpha_\text{SLOPE}^*)=
\begin{cases}
-H'(x)/H(x), &\text{ if } |x|>\alpha_\text{SLOPE}^*,
\\
\alpha_\text{SLOPE}^*, &\text{ if } |x|\leq\alpha_\text{SLOPE}^*,
\end{cases}
\label{eq:AHH}
\end{align}
where, with $t_1,t_2$ defined in \eqref{eq:three point pi} and $\rho$ defined in \eqref{eq:compute p}, 
$$H(x):=4(1-\epsilon) \phi(x)+2 \epsilon\rho\left[\phi(x-t_1)+\phi(x+t_1)\right]+2 \epsilon(1-\rho)\left[\phi(x-t_2)+\phi(x+t_2)\right].$$


However, from \eqref{eq: 2level Aeff}, the 2-level SLOPE has $\Aeff(x)=|x|-h$ whenever it is non-constant.
Clearly the linear shape of $|x|-h$ does not match that of $-H'(x)/H(x)$. Thus, 2-level SLOPE is not the optimal SLOPE in terms of the TPP-FDP trade-off. On the other hand, 2-level SLOPE is similar to optimal SLOPE. This can be viewed in \Cref{fig:optimal SLOPE} and from the TPP-FDP trade-offs in \Cref{fig:summary,fig:complete or not trade-off,fig:trade-off curves} and later in \Cref{fig:fixed prior TPPFDP,fig:fixed prior vary}.

\subsection{Remaining lemma proofs} \label{sec:remainingproofs}

\begin{proof}[Proof of \Cref{lem:second is zero-threshold}]
We first quote \cite[Proposition C.5]{bu2021characterizing}, which shows that the zero-threshold $\alpha(\pi,\A)$ must be some quantile of $\A$ for any SLOPE penalty $\A$. Given that 2-level SLOPE penalty $\A=\langle\alpha_1,\alpha_2;s\rangle$ only takes two values, it is clear that $\alpha(\pi,\A)=\alpha_1$ or $\alpha(\pi,\A)=\alpha_2$. Hence the full set is defined as $\{\A=\langle\alpha_1,\alpha_2;s\rangle \, \text{ s.t. } \, \alpha(\pi,\A)=\alpha_2 \text{ or } \alpha_1\}$.

Next, we define $\mathcal{A}_2:=\{\langle \alpha_1,\alpha_2,s\rangle:\alpha_1\geq \alpha_2,\alpha(\pi,\langle \alpha_1,\alpha_2;s\rangle)=\alpha_2\}$, i.e.,\ $\mathcal{A}_2$ is a subset containing all the 2-level SLOPE penalties whose second level serves as the zero-threshold, and denote $\A^*=\langle a^*_1,a^*_2,s^*\rangle$ as the minimizer of \eqref{eq:fdp is function of tpp} in the normalized regime under 2-level SLOPE. We aim to prove $\A^*\in \mathcal{A}_2$.

Suppose $\A^*\not\in\mathcal{A}_2$, then it must hold that the zero-threshold of $\A^*$ is its first level $\alpha^*_1$, i.e., $\alpha(\pi,\A^*)= \alpha_1^*$. In addition, we can define a LASSO penalty $\A'=\langle \alpha^*_1,\alpha^*_1,s^*\rangle\in \mathcal{A}_2$ with the same zero-threshold $\alpha(\pi,\A')=\alpha(\pi,\A^*)=\alpha^*_1$. Recall from \eqref{eq:fdp is function of tpp} that the same zero-threshold means the same $\FDP$; hence, we can conclude that $\A'\in \mathcal{A}_2$ also minimizes the $\FDP$. Therefore, to find the minimum of $\FDP$, it is sufficient to consider the smaller subset $\mathcal{A}_2$, as required.
\end{proof}

\begin{proof}[Proof of \Cref{lem:rho_def}]

First notice that for the optimal three-point prior in \eqref{eq:three point pi}, 
using the definition of $\TPP$ in \eqref{eq:tppfdp comp},
we have
\begin{equation}
\begin{split}
\label{eq:rho_def}
u = \TPP&=\mathbb{P}\left(\eta_{\pi+Z, \A}\left(\pi_{\min}^*+Z\right) \neq 0\right)\\
&=\mathbb{P}\left(|\pi_{\min}^*+Z| \geq \alpha_2\right)
\\
&=\rho\cdot \mathbb{P}\left(|t_1+Z| \geq \alpha_2\right)+(1-\rho)\cdot \mathbb{P}\left(|t_2+Z| \geq \alpha_2\right)
\\
&=
\rho\cdot[\Phi(t_1-\alpha_2)+\Phi(-t_1-\alpha_2)]+(1-\rho)\cdot[\Phi(t_2-\alpha_2)+\Phi(-t_2-\alpha_2)],
\end{split}
\end{equation}
in which we have used that
\begin{align*}
\mathbb{P}\left(|t+Z| \geq \alpha_2\right) &= \mathbb{P}\left(\{t+Z \geq \alpha_2\} \cup \{t+Z \leq -\alpha_2\}\right) \\
&= \mathbb{P}\left(\{Z \geq -t + \alpha_2\} \cup \{Z \leq -t -\alpha_2\}\right) \\
&= \mathbb{P}\left(\{Z \geq -t + \alpha_2\}\right) + \mathbb{P}\left( \{Z \leq -t -\alpha_2\}\right) \\
&= \Phi(t - \alpha_2) + \Phi(-t -\alpha_2),
\end{align*}
Notice that the value of $\rho$ in \eqref{eq:compute p} follows from rearrangement of \eqref{eq:rho_def}.
\end{proof}

\begin{proof}[Proof of \Cref{lem:3 point}] The representation in 
\eqref{eq:substitute} follows immediately by replacing $\pi$ in \eqref{eq:hcalc} of \Cref{lem:q1q2h} with the three-point prior $\pi_{\min}$ in \eqref{eq:three point pi}.

Next, we derive the formula of $\int_{v_2}^{v_1}\PP(|t+Z|<x) \, dx$ given in \eqref{eq:int_expand}. We first notice that
\begin{equation}
\begin{split}
\PP(|t+Z|<x) =\PP(-x<t+Z<x)&=\PP(-x-t<Z<x-t) \\
&= \PP(Z<x-t)-\PP(Z<-x-t) \\
&= \Phi(x-t)-\Phi(-x-t). 
\label{eq:prob_expand}
\end{split}
\end{equation}
Therefore, using that
$$\int \Phi(a+bx)dx=\frac{1}{b}\left((a+bx)\Phi(a+bx)+\phi(a+bx)\right) + \text{constant},$$
we have
\begin{align*}
&\int_{v_2}^{v_1}\PP(|t+Z|<x) \, dx
=\int_{v_2}^{v_1}\left[\Phi(x-t)-\Phi(-x-t) \right]dx\\
&=\left[\left((x-t)\Phi(x-t)+\phi(x-t)\right)+\left((-x-t)\Phi(-x-t)+\phi(-x-t)\right)\right]^{v_1}_{v_2},
\end{align*}
which expands to the desired formula. 

We finally notice that the result in \eqref{eq:Sexpand} follows from \eqref{eq:prob_expand}:
\begin{align*}
\PP(|\pi_{\min}+Z|<x) &= (1-\epsilon)\PP(|Z|<x) +\epsilon \rho\PP(|t_1+Z|<x)+\epsilon(1-\rho)\PP(|t_2+Z|<x) \\
&= (1-\epsilon)\left[\Phi(x)-\Phi(-x)\right] +\epsilon \rho\left[\Phi(x-t_1)-\Phi(-x-t_1)\right] \\
&\qquad +\epsilon(1-\rho)\left[\Phi(x-t_2)-\Phi(-x-t_2)\right].
\end{align*}
\end{proof}

\section{TPP-FDP of 2-level SLOPE under a fixed prior}
\label{sec:6}
In this section, we aim to construct the optimal asymptotic $\TPP$-$\FDP$ trade-off curve when the prior distribution on the signal is fixed and the optimization is only over the penalty sequence $\Lambda$ (or equivalently, $\A$). In particular, as discussed previously in \eqref{eq:fixed prior max zero thres}, to find the infimum $\FDP$ in \eqref{eq:fdp is function of tpp}, we will find the supremum over zero-thresholds; hence, we wish to solve 
\begin{align}
\alpha^*_\text{2-level}(u; \epsilon, \delta, \pi):=\sup_{ \A=\langle\alpha_1,\alpha_2;s\rangle}\{\alpha(\pi,\A): (\pi,\A) \text{ s.t. } \eqref{eq:state evolution}, \TPP=u\},
\label{eq:fixed prior max zero thres_new}
\end{align}
where now the prior $\Pi$ (or equivalently, the non-zero component $\Pi^*$) is fixed and given, but $\pi=\Pi/\tau(\A)$ is dependent on $\A$. As in \eqref{eq:all prior max zero thres_new} in \Cref{sec:tpp fdp all priors},
we now have made explicit in the optimization that we are considering only 2-level SLOPE penalty sequences $\A=\langle\alpha_1,\alpha_2;s\rangle$ and that $(\pi, \A)$ must satisfy the state evolution equations \eqref{eq:state evolution}.  We work under the same conditions as \Cref{sec:tpp fdp all priors}, namely under assumptions (A1)$\sim$(A5) in \Cref{sec:AMP}, including that $\PP(\Pi=\Pi^*)=\epsilon=1-\PP(\Pi=0)$. In \Cref{sec:TPP FDP fixed priors}, we discuss in more detail the optimization in \eqref{eq:fixed prior max zero thres_new} and in \Cref{sec:constant_or_nothing} we consider a specific type of prior for which the trade-off curve can be derived explicitly.

\subsection{Minimizing FDP given TPP for general, fixed priors}
\label{sec:TPP FDP fixed priors}

We begin with a discussion of the optimization in  \eqref{eq:fixed prior max zero thres_new} for general priors, but we highlight that the derivation in this section is pedagogical. In particular,  in \Cref{sec:constant_or_nothing} we consider a special subclass of fixed priors, in which case the derivation of $\alpha^*$ from \eqref{eq:fixed prior max zero thres_new} can be made explicit.
We consider the TPP-FDP trade-off in \eqref{eq:fdp is function of tpp}:
$$q^{(\Pi)}(u)=\min_\Lambda \{\FDP(\Pi,\Lambda):\TPP(\Pi,\Lambda)=u\}.$$
Clearly, we have $q^{(\Pi)}(u)\geq q(u)\equiv \min_{\Pi,\Lambda} \{\FDP(\Pi,\Lambda):\TPP(\Pi,\Lambda)=u\}$ 
for all $u$, where $q(u)$ denotes the trade-off curve in \eqref{eq:qu}, because $q(u)\equiv \min_\Pi q^{(\Pi)}(u)$.\footnote{Comparing \Cref{fig:fixed prior TPPFDP} and \Cref{fig:trade-off curves}, we see that $q^{(\Pi)}$ in general has a different shape than $q$ for the general SLOPE, 2-level SLOPE, and LASSO. For example, at $(\epsilon,\delta)=(0.5,0.5)$, the $\TPP$-$\FDP$ curves in the right column of \Cref{fig:fixed prior TPPFDP} have a sharp increase near $\TPP=0$, whereas the blue curves in \Cref{fig:trade-off curves} are almost flat.}

We now describe the algorithm to derive $q_\text{2-level}^{(\Pi)}$ for the 2-level SLOPE, as summarized in \Cref{alg:fixed prior}. Given $\TPP=u$, from \eqref{eq:tppfdp rewrite2}, for a fixed $\A=\langle\alpha_1,\alpha_2;s\rangle$ this means
\begin{align}
\PP\left(\left|\Pi^*/\tau(\alpha_1,\alpha_2,s)+Z \right|>\alpha_2\right)=u.
\label{eq:tpp fixed prior}
\end{align}
Recall that \Cref{lem:second is zero-threshold} allows us to only consider $\alpha_2$ as the zero-threshold. Therefore, as discussed previously, to obtain the minimum $\FDP$, we need to search over all feasible $(\alpha_1,\alpha_2,s)$, where feasible means that $(\pi,\A)$ satisfies \eqref{eq:state evolution} subject to \eqref{eq:tpp fixed prior}, and determine the largest zero-threshold $\alpha^*_\text{2-level}$. This is summarized by the optimization program in  \eqref{eq:fixed prior max zero thres_new}. 

\begin{algorithm}[!htb]
\caption{Minimizing $\FDP$ at $\TPP=u$ over 2-level SLOPE penalty with fixed prior $\Pi$}
\begin{algorithmic}[1]
\State Find LASSO zero-threshold $\alpha_\text{LASSO}(u)$ such that $\PP\left(|\Pi^*/\tau(\alpha_\text{LASSO})+Z|>\alpha_\text{LASSO}\right)=u$ by \eqref{eq:tpp fixed prior}
\For{$\alpha_2\in [\alpha_\text{LASSO}(u),\infty)$}
\For{$\alpha_1\in [\alpha_2,\infty)$ and $s\in[0,1]$}
\State Compute $\tau(\alpha_1,\alpha_2,s)$ iteratively by \eqref{eq:state evolution}
\If{$\TPP=u$ by \eqref{eq:tpp fixed prior}}
\State $\alpha_2$ is feasible, break the loop and move to larger $\alpha_2$
\EndIf
\EndFor
\EndFor
\State Output: the maximum feasible zero-threshold $\alpha_2$ and the minimum $\FDP$ by \eqref{eq:fdp is function of tpp}
\end{algorithmic}
\label{alg:fixed prior}
\end{algorithm}

We emphasize that for any LASSO penalty $\alpha_\text{LASSO}$ and any $\TPP$, one can always find a 2-level SLOPE penalty with the same $\TPP$ but a larger zero-threshold $\alpha_2$; hence, a smaller $\tau$. Equivalently, 2-level SLOPE enjoys a smaller $\FDP$ -- because of a larger zero-threshold, and a smaller estimation error for free -- because of a smaller $\tau$ and the fact that $\text{MSE}=\delta(\tau^2-\sigma^2)$ (c.f.,\ Corollary 3.4 in \cite{bu2019algorithmic}).

In fact, a stronger statement by Theorem 3 in \cite{bu2021characterizing} shows that the general SLOPE strictly outperforms any LASSO in terms of higher $\TPP$, lower $\FDP$, and smaller MSE simultaneously (c.f.,\ their Appendix F where the statement is actually proven by constructing 2-level SLOPE to outperform LASSO). We visualize such outperformance of 2-level SLOPE in \Cref{fig:fixed prior TPPFDP}, where we also observe the same behavior as in \Cref{fig:summary}: the 2-level SLOPE overcomes the $\TPP$ upper limit that restricts the LASSO, and enjoys strictly smaller $\FDP$ than the LASSO for a wide range of $\TPP$ values.

\begin{figure}[!htb]
    \centering
    \includegraphics[width=0.32\textwidth]{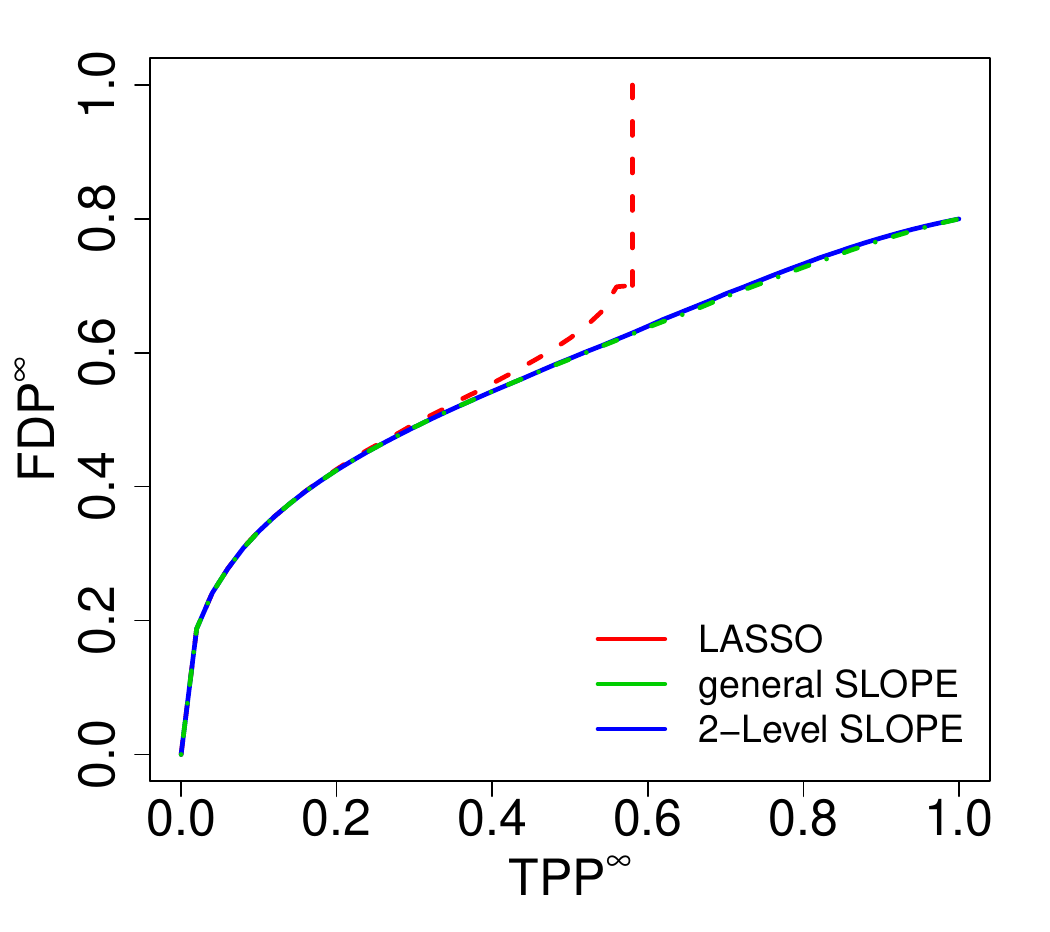}
    \includegraphics[width=0.32\textwidth]{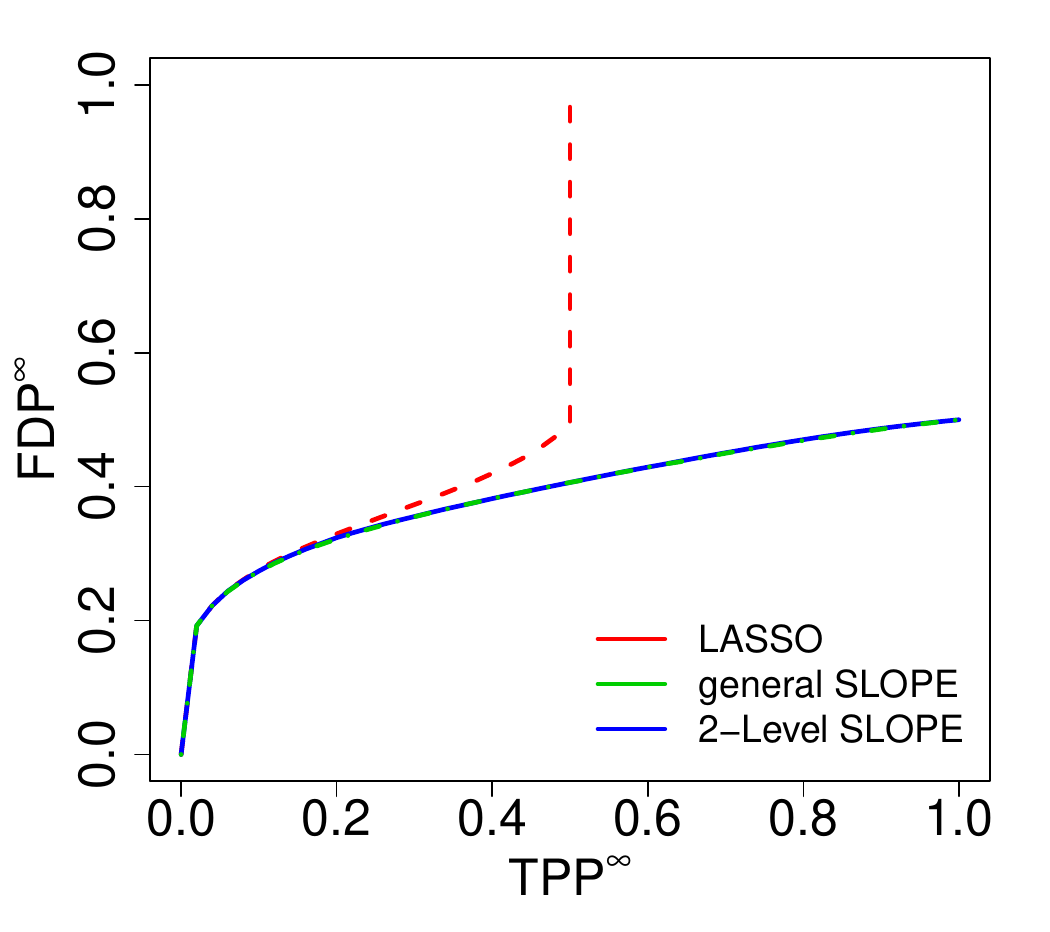}
        \includegraphics[width=0.32\textwidth]{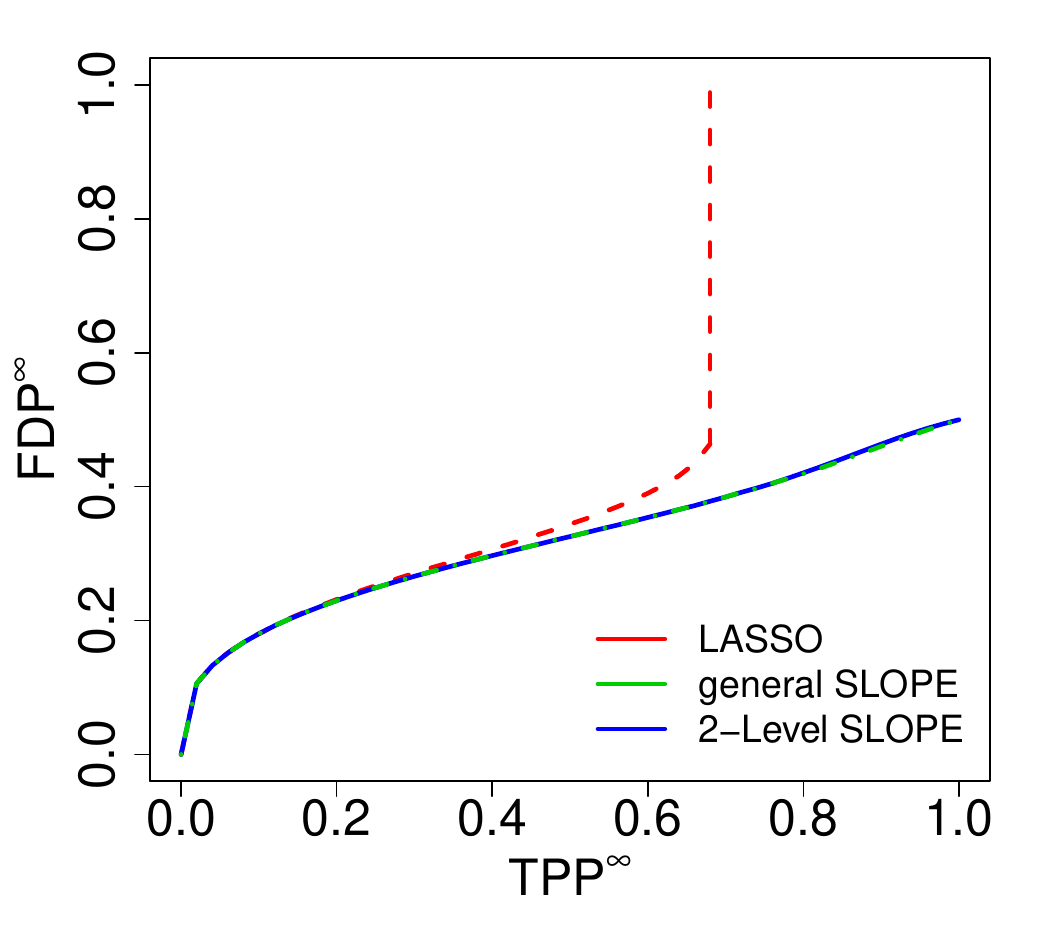}
    \includegraphics[width=0.32\textwidth]{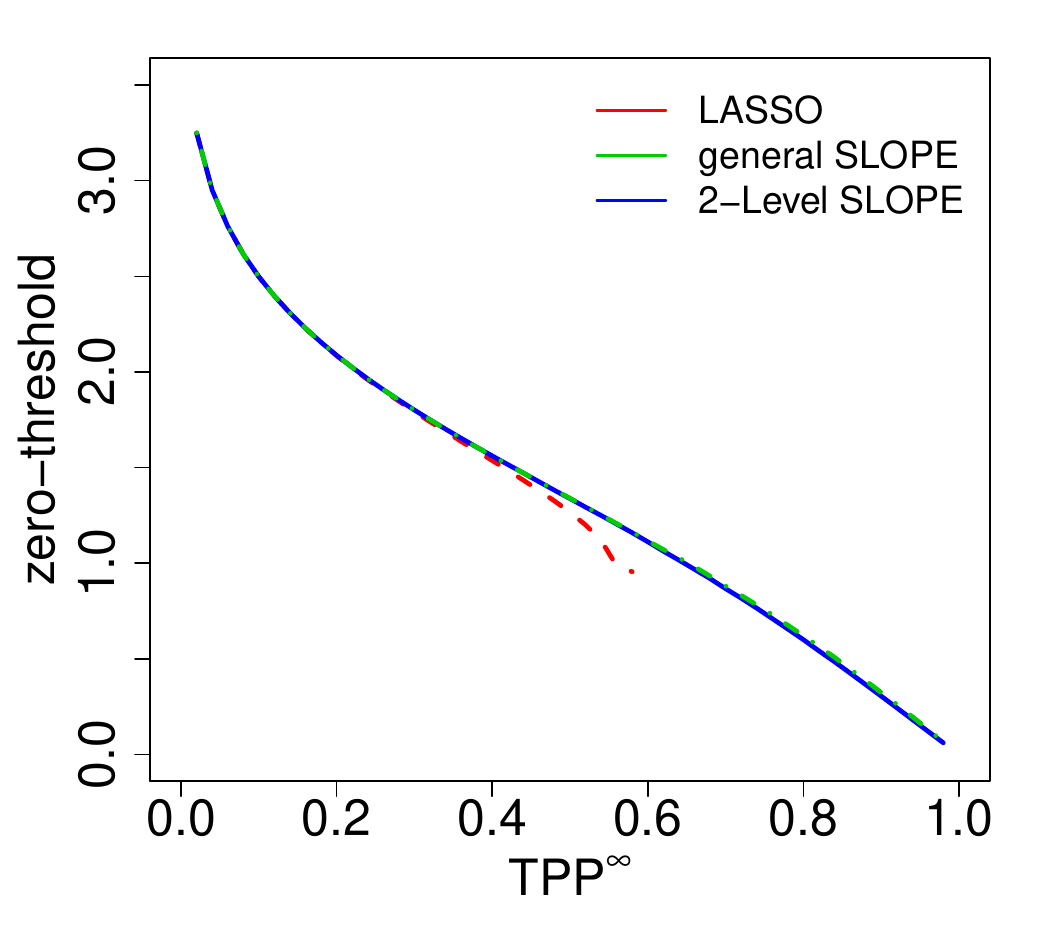}
    \includegraphics[width=0.32\textwidth]{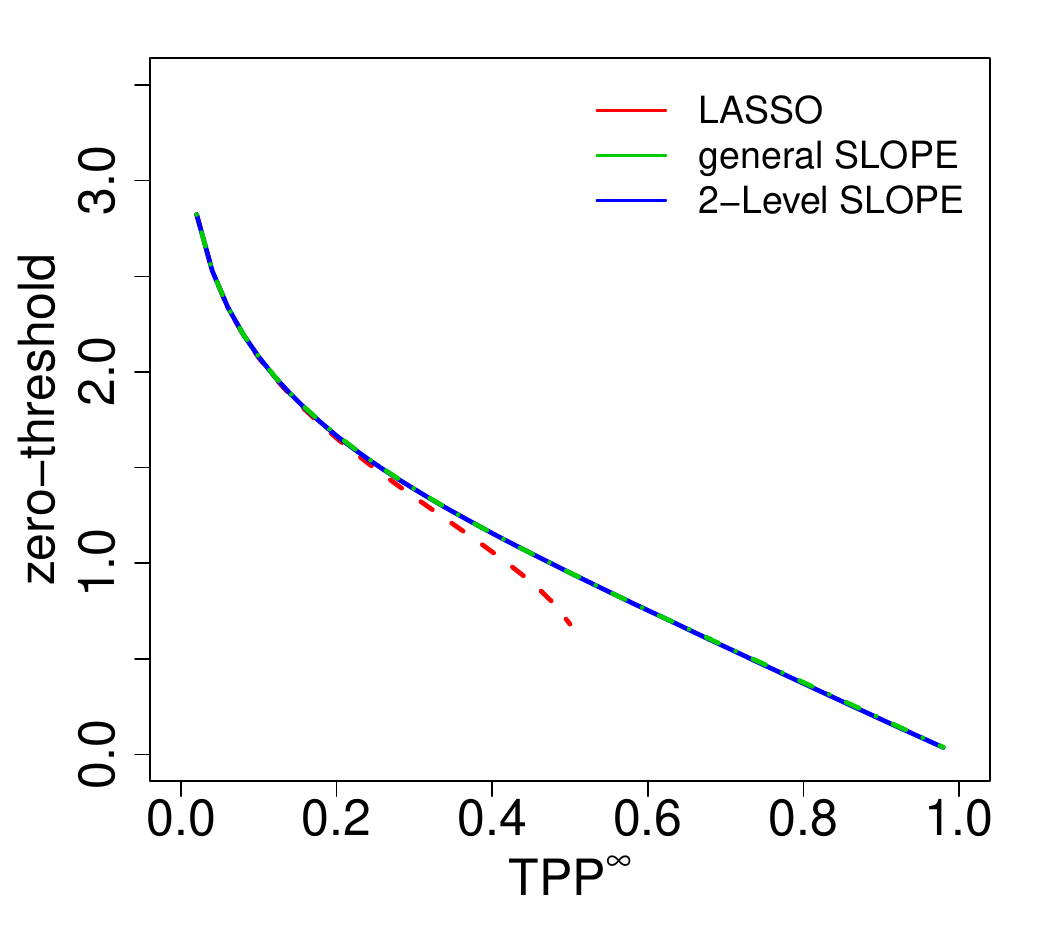}
    \includegraphics[width=0.32\textwidth]{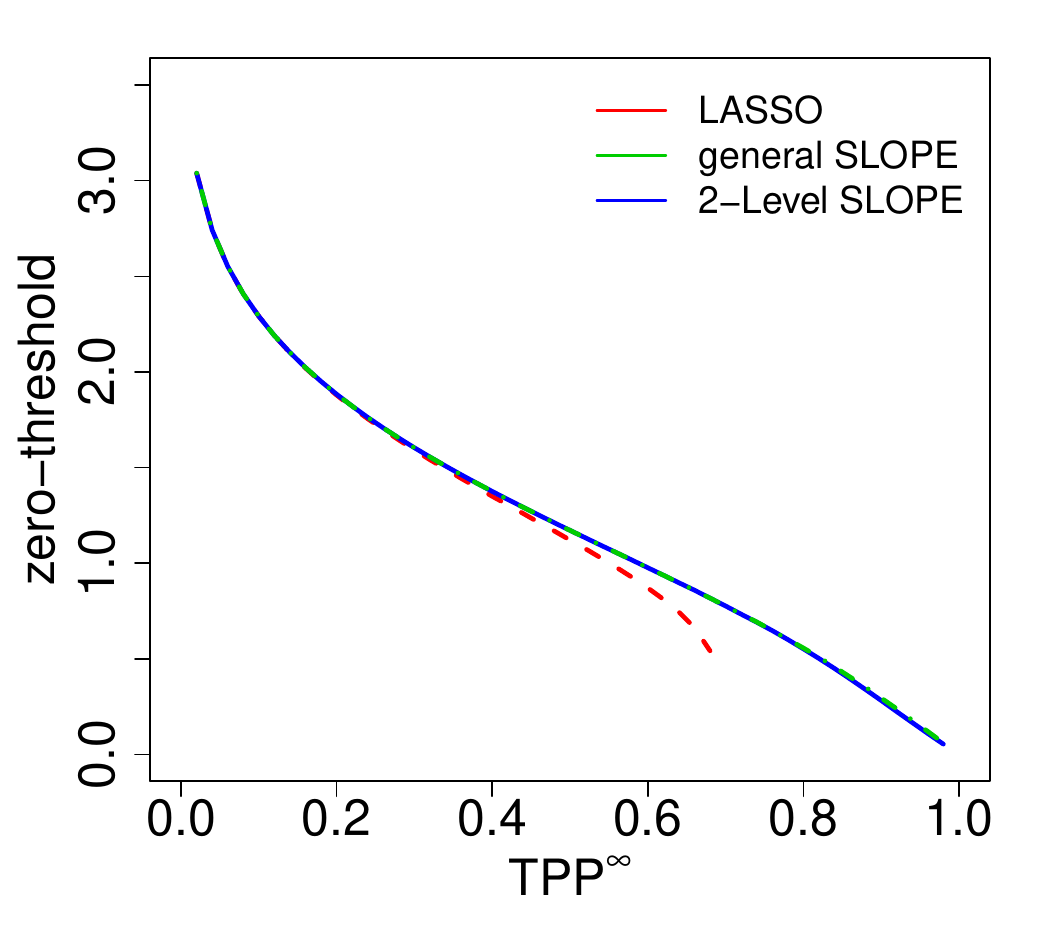}
    \includegraphics[width=0.32\textwidth]{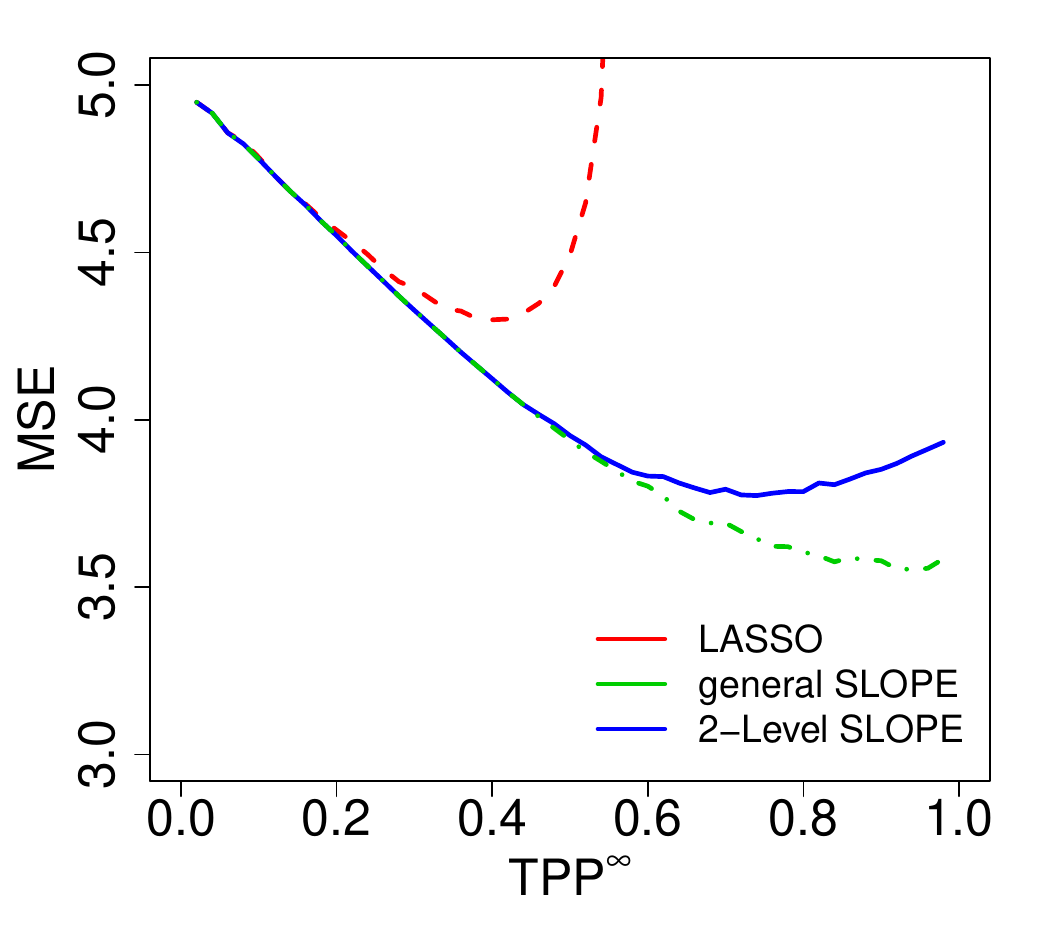}
    \includegraphics[width=0.32\textwidth]{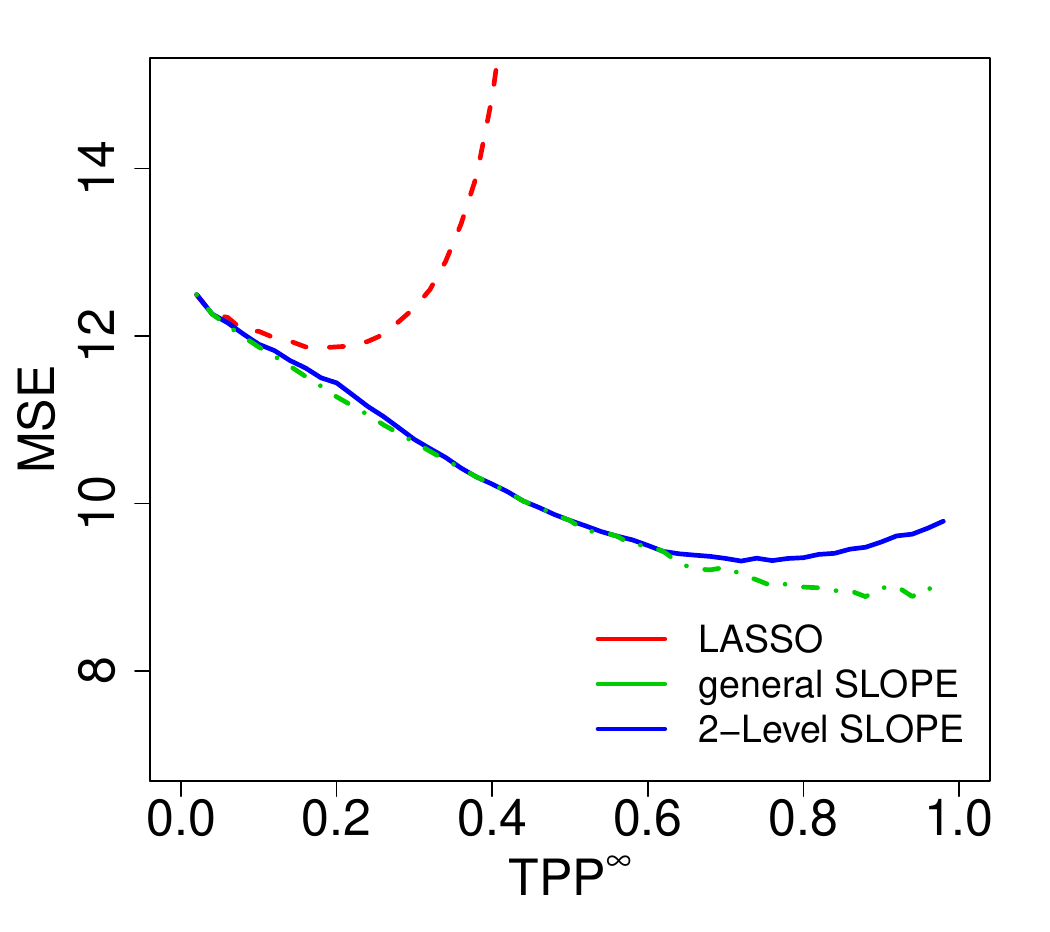}    \includegraphics[width=0.32\textwidth]{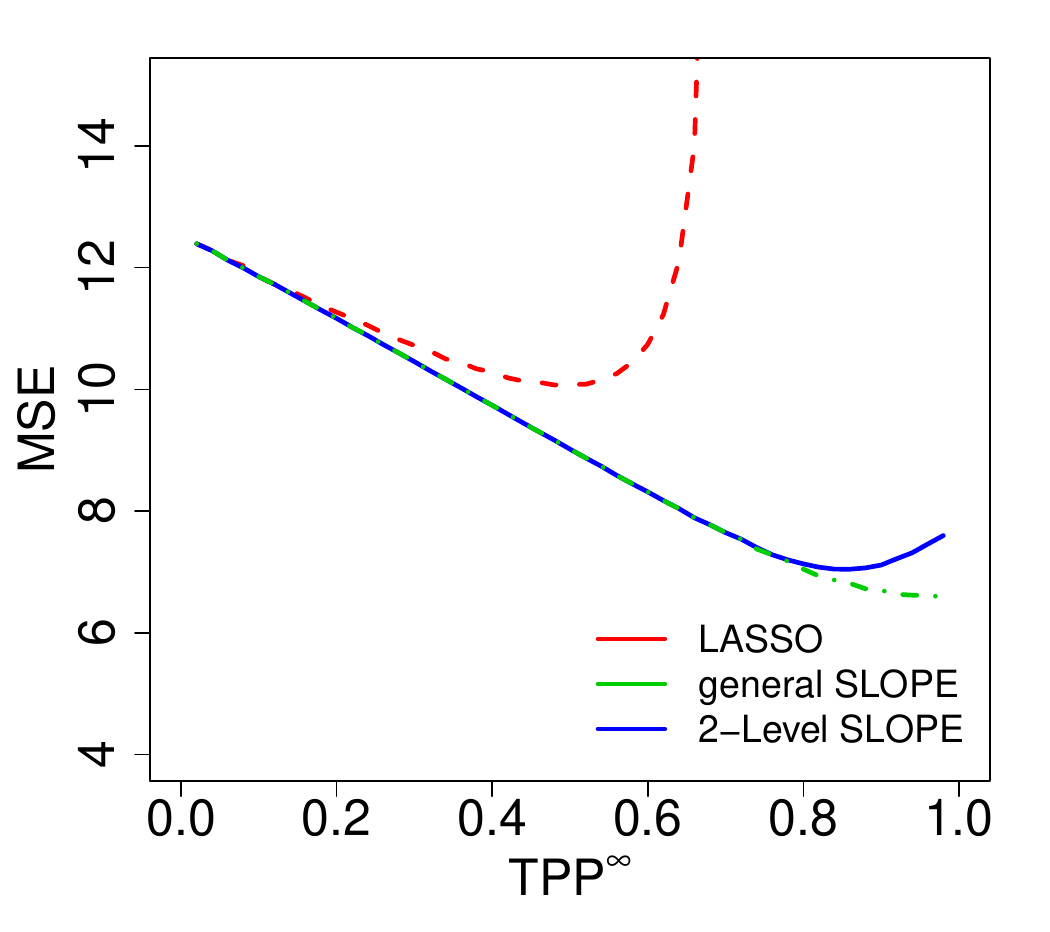}
   \caption{Examples of the $\TPP$-$\FDP$ trade-off curve, zero-threshold, and MSE of different SLOPE methods under a constant-or-nothing prior $\Pi^*=5$ for various $(\epsilon,\delta)$ settings. Left column: $\epsilon=0.2,\delta=0.3$. Middle column: $\epsilon=0.5,\delta=0.3$. Right column: $\epsilon=0.5,\delta=0.5$. 
   For the LASSO, we sweep through $\alpha_\text{LASSO}\in (0,\infty)$, each corresponding to one pair of $(\TPP,\FDP)$. The 2-level SLOPE $\FDP$ is given by \Cref{alg:fixed prior} and the general SLOPE $\FDP$ is computed by the analytic penalty in \eqref{eq:AHH}.
   }
    \label{fig:fixed prior TPPFDP}
\end{figure}

\subsection{Minimizing FDP given TPP when prior is constant-or-nothing} \label{sec:constant_or_nothing}
In this section, we consider a special family of fixed priors, namely, the constant-or-nothing priors, meaning $\Pi^*=T$ for some $T\in\R$, or in other words $\PP(\Pi=T)=\epsilon=1-\PP(\Pi=0)$. Equivalently, the normalized prior satisfies $\PP(\pi=t)=\epsilon=1-\PP(\pi=0)$, where $t \equiv t(\A):=T/\tau(\A)$. Thanks to the simple form of these priors, for each constant-or-nothing prior (i.e.,\ for each given $t$) we can derive the closed form for the SLOPE trade-off curves while following \Cref{alg:fixed prior}.

For the general $\p$-level SLOPE, we recall that the optimal SLOPE penalty is given in \eqref{eq:AHH}; moreover, in the case of constant-or-nothing prior, the format of $H(x)$ in \eqref{eq:AHH} simplifies to $t_1=t_2=t$: 
$$H(x;t)=4(1-\epsilon) \phi(x)+2 \epsilon\left[\phi(x-t)+\phi(x+t)\right] \Longrightarrow \Aeff \Longrightarrow\tau(\Aeff),$$
where $t$ is determined by $t(\Aeff)\equiv T/\tau(\Aeff)$. In this way, although the form of $q_\text{SLOPE}$ is not explicit, we can derive the explicit form of $q_\text{SLOPE}^{(\Pi)}$ if $\Pi$ is a constant-or-nothing prior and if \eqref{eq:AHH} is monotonically increasing under some $(\epsilon,\delta)$.

For 2-level SLOPE, including the LASSO, in the case of the constant-or-nothing prior, we translate the constraint on the $\TPP$ in \eqref{eq:tpp fixed prior} into
\begin{align}
\PP\left(|t(\A)+Z|>\alpha_2(\A)\right)=u  \Longrightarrow  1-\Phi(\alpha_2(\A)-t(\A))+\Phi(-\alpha_2(\A)-t(\A))=u,
\label{eq:s2}
\end{align}
and translate the constraint given by the state evolution \eqref{eq:state evolution} into
\begin{align}
\delta\geq F\left[\pi_{\min}(t,t;u)\right]=(1-\epsilon)\mathcal{E}(0)+\epsilon\mathcal{E}(t)
\label{eq:s3}    
\end{align}
following the discussion between \eqref{eq:state evolution constraint}-\eqref{eq:alpha star explicit}, where $\mathcal{E}(t)$ is defined in \eqref{eq:mathcal E}. Consequently, we modify \Cref{alg:fixed prior} to \Cref{alg:fixed prior simpli} for the constant-or-nothing priors, whereas the latter is more precise because the state evolution constraint is verified analytically in \eqref{eq:s3} but only numerically in \eqref{eq:state evolution}.

\begin{algorithm}[!htb]
\caption{Minimizing $\FDP$ at $\TPP=u$ over 2-level SLOPE penalty with constant-or-nothing prior $\Pi^*=T$}
\begin{algorithmic}[1]
\State Find LASSO zero-threshold $\alpha_\text{LASSO}(u)$ such that $\PP\left(|\Pi^*/\tau(\alpha_\text{LASSO})+Z|>\alpha_\text{LASSO}\right)=u$ by \eqref{eq:tpp fixed prior}
\State Find the maximum zero-threshold for the general SLOPE $\alpha^*_\text{SLOPE}(u)$ 
\For{$\alpha_2\in [\alpha_\text{LASSO}(u),\alpha^*_\text{SLOPE}(u)]$}
\State Determine $t$ by \eqref{eq:s2}
\For{$\alpha_1\in [\alpha_2,\infty)$}
\For{$s\in[0,1]$}
\State Determine the effective 2-level SLOPE penalty $\Aeff$ and compute $\mathcal{E}(t)$
\If{$\delta\geq (1-\epsilon)\mathcal{E}(0)+\epsilon\mathcal{E}(t)$ in \eqref{eq:s3}}
\State $\alpha_2$ is feasible, break the loop and move to larger $\alpha_2$
\EndIf
\EndFor
\EndFor
\EndFor
\State Output: the maximum feasible zero-threshold $\alpha_2$ and the minimum $\FDP$ by \eqref{eq:fdp is function of tpp}
\end{algorithmic}
\label{alg:fixed prior simpli}
\end{algorithm}

Leveraging \Cref{alg:fixed prior simpli}, we compare the $\FDP$ and MSE of the general SLOPE, 2-level SLOPE, and LASSO in \Cref{fig:fixed prior vary}, varying $\epsilon$ or $\delta$. Generally speaking, 2-level SLOPE achieves similar $\FDP$ and MSE to the general SLOPE, especially outperforming the LASSO when $\delta$ is small or $\epsilon$ is large.

\begin{figure}[!htb]
    \centering
                    \includegraphics[width=0.4\textwidth]{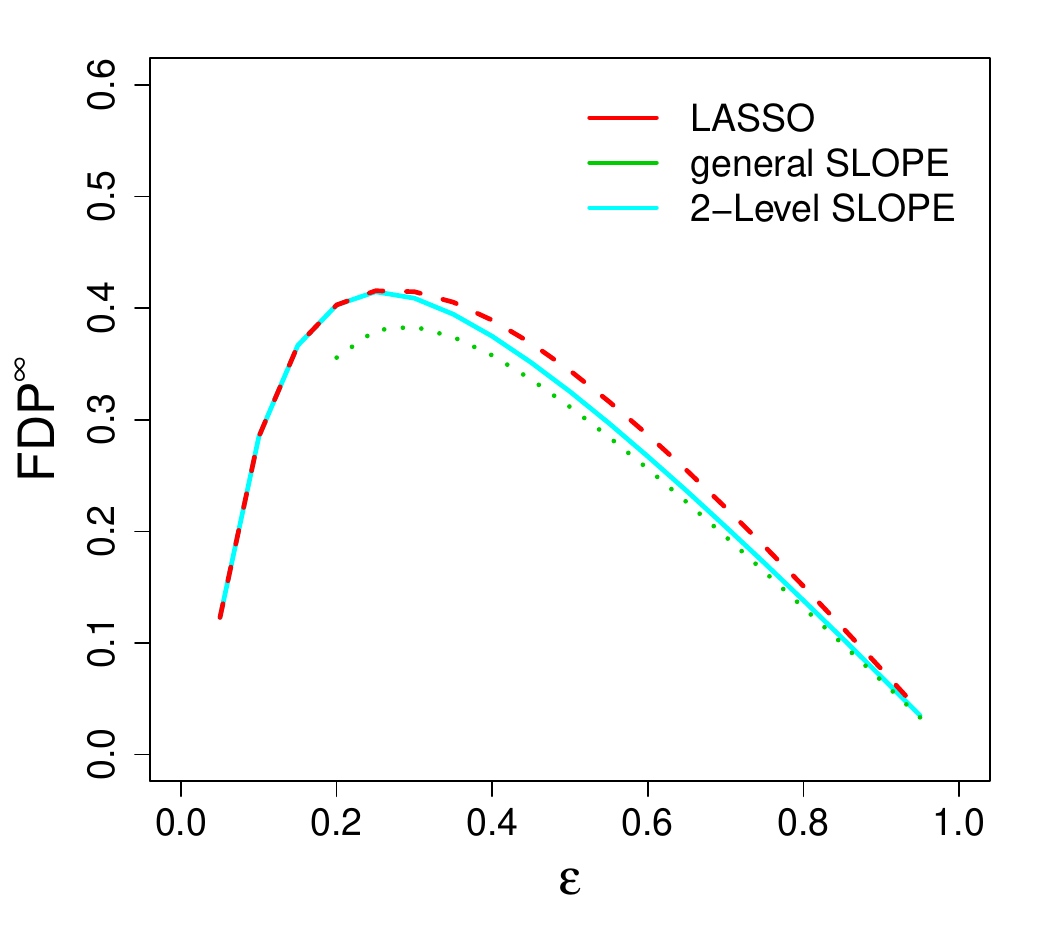}
        \includegraphics[width=0.4\textwidth]{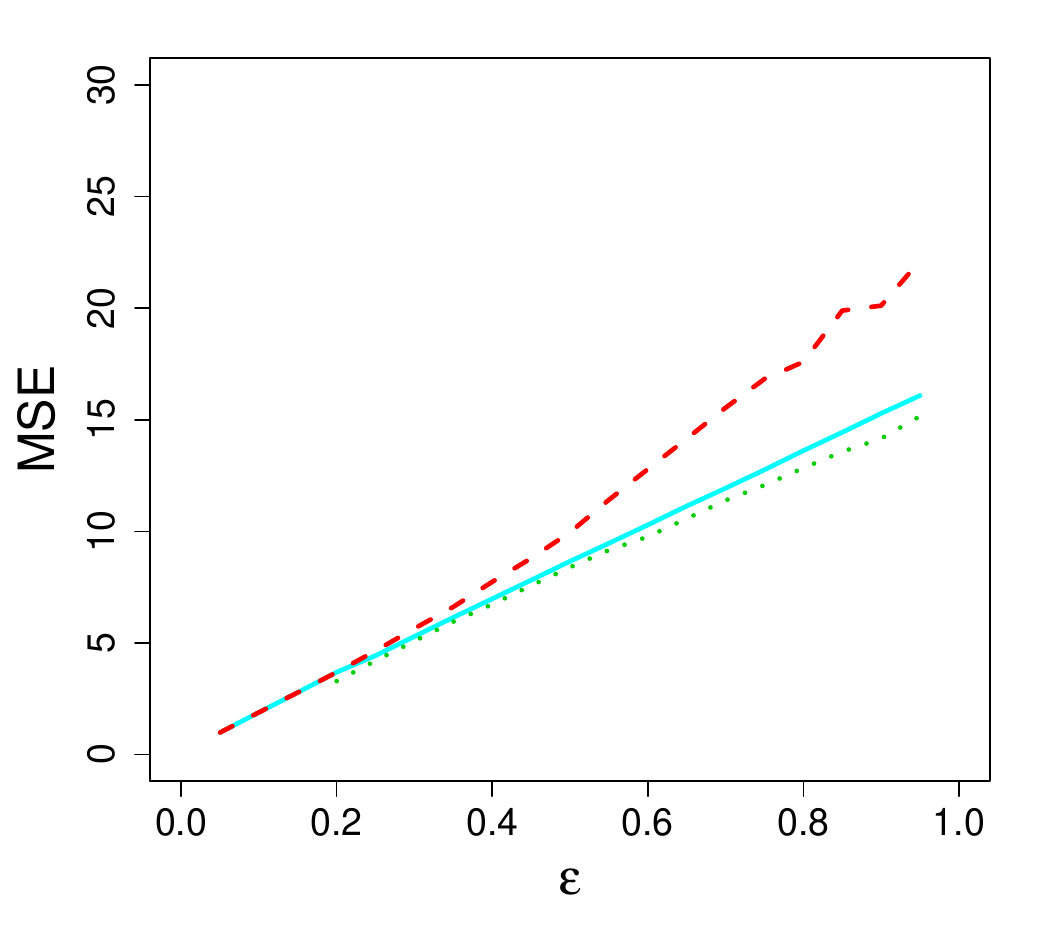}
                            \includegraphics[width=0.4\textwidth]{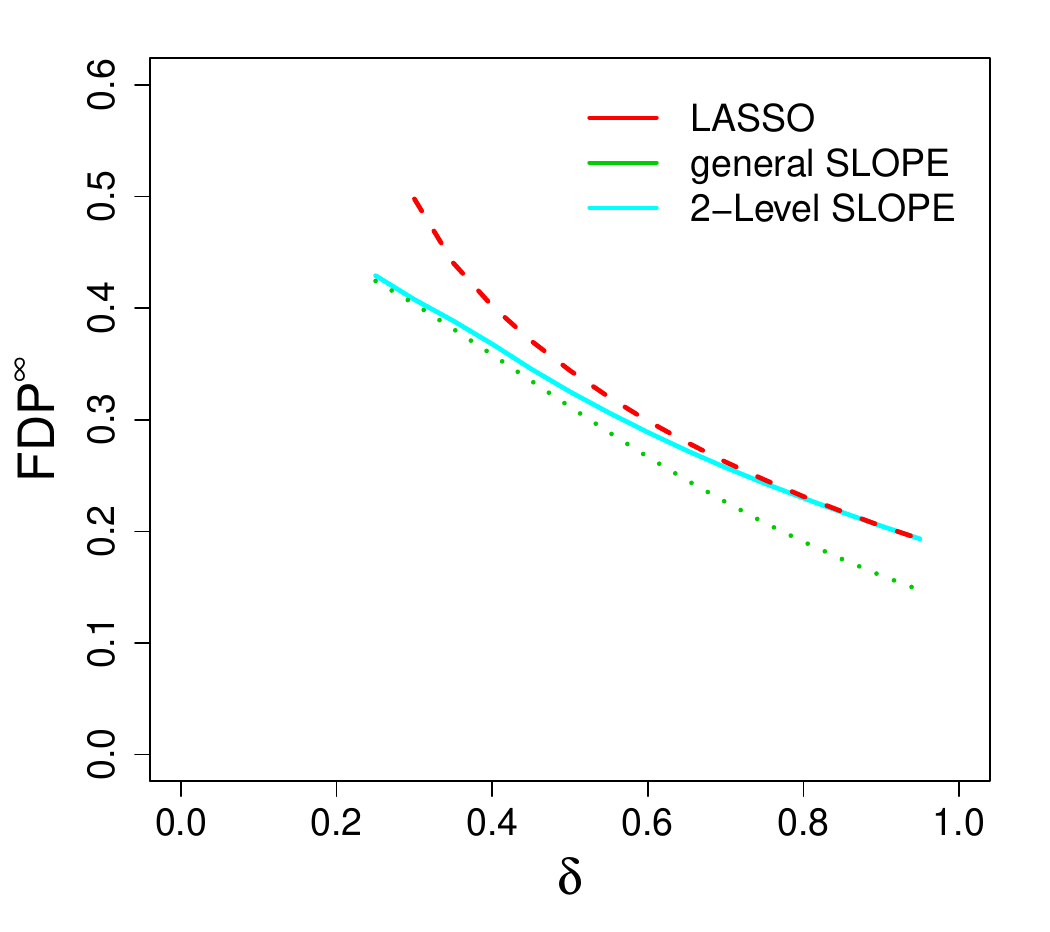}
        \includegraphics[width=0.4\textwidth]{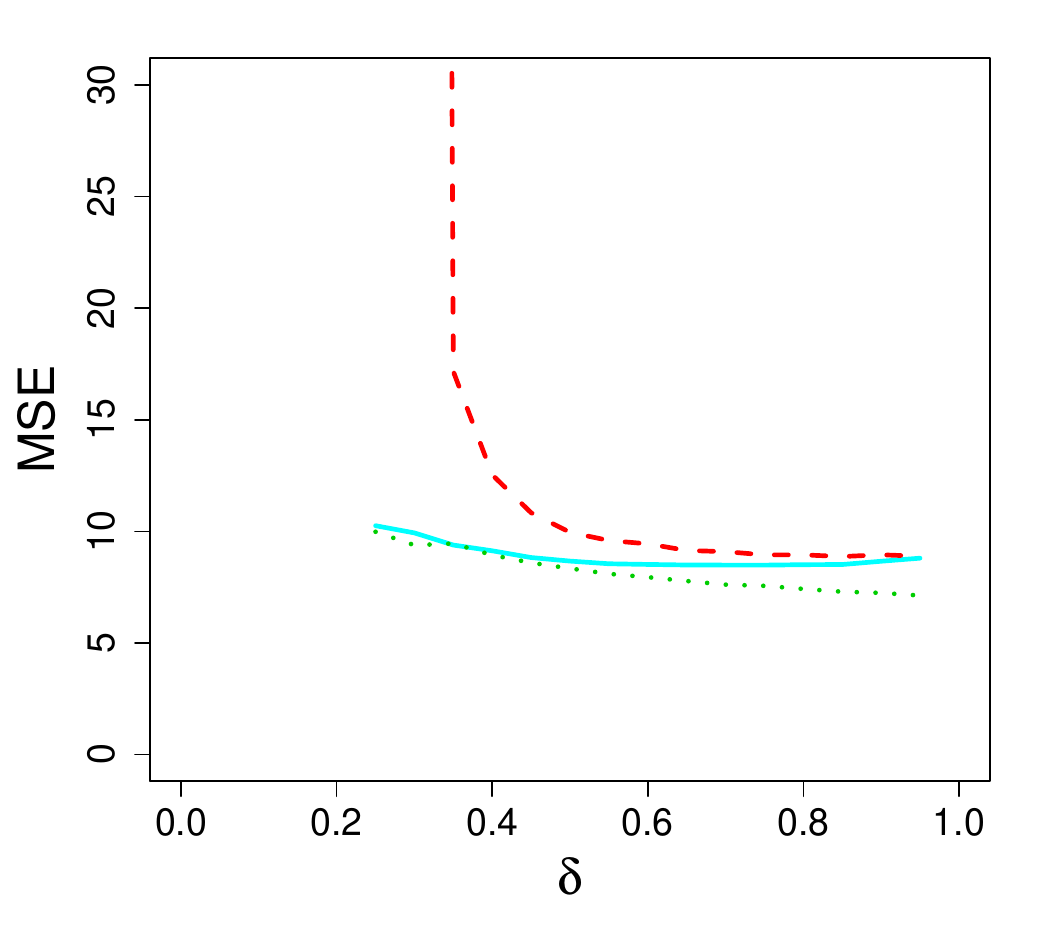}
    \caption{Examples of $\FDP$ and MSE of different SLOPE methods, under a constant-or-nothing prior $\Pi^*=5$ and fixed $\TPP=0.5$. We vary one of $(\epsilon,\delta)$ and fix the other at $\epsilon=0.5$ or $\delta=0.5$. 
    Here the general SLOPE trade-off $q_\text{SLOPE}^{(\Pi)}$ is computed by the analytic penalty in \eqref{eq:AHH}, which may not exist for sufficiently small $\delta$. That is, while $q_\text{SLOPE}^{(\Pi)}(\TPP=0.5)$ does exist for all $(\epsilon,\delta)$ settings, its form may not always be captured by \eqref{eq:AHH}.
    }
    \label{fig:fixed prior vary}
\end{figure}

\newpage
\section{Estimation error of 2-level SLOPE under a fixed prior}
\label{sec:error fixed priors}
In this section, we focus on minimizing the estimation error $\text{MSE}(\Lambda)=\plim_\p\frac{1}{\p}\|\widehat\bet(\Lambda)-\bet\|^2$ over SLOPE penalty distributions $\Lambda$ for a general, fixed prior $\Pi$.



Notice that this is different from \Cref{sec:TPP FDP fixed priors} and \Cref{sec:constant_or_nothing}, where our main goal is to minimize $\FDP$ for every given $\TPP=u$, and the fact that 2-level SLOPE can achieve a smaller MSE than LASSO is a by-product. In \Cref{fig:arian}, our main goal is to minimize the MSE without the constraints on $\TPP$ and $\FDP$, for which we have employed \Cref{alg:MSE}. 

\begin{algorithm}[!htb]
\caption{Minimizing MSE  over 2-level SLOPE penalty with fixed prior $\Pi$}
\begin{algorithmic}[1]
\State Initialize a small $\gamma>0$ and $M_{\min}=\infty$
\For{$a\in [1,\infty)$}
\For{$s\in[0,1]$}
\State Construct $\blam=\langle\gamma a,\gamma;s\rangle
$ by \eqref{eq:K-level SLOPE seq} and solve the 2-level SLOPE problem \eqref{eq:SLOPE}
\State Compute the MSE$(\widehat\bet)$ of the SLOPE solution
\If{MSE$(\widehat\bet)>M$} 
    \State Break the loops and stop
\Else
    \State Set $M_{\min}=\text{MSE}(\widehat\bet)$ and increase $\gamma$
\EndIf
\EndFor
\EndFor
\State Output: the minimum MSE $M_{\min}$
\end{algorithmic}
\label{alg:MSE}
\end{algorithm}

Following the setting in \cite{wang2022does}, we consider
\begin{itemize}
    \item Data: $\X\in\R^{n\times \p}$ is i.i.d. Gaussian $\X_{iid}=\mathcal{N}(0,1/n\mathbf{I}_n)$, or heavy-tailed $\X_{ht}$ which is t-distributed with 3 degrees of freedom, normalized by $\sqrt{3n}$. Furthermore, we consider the corelation of data by multiplying $\X_{iid}$ and $\X_{ht}$ with $\Sigma(\rho)^{1/2}$, where $\Sigma$ is the auto-correlation matrix such that $\Sigma_{i,j}=\rho^{|i-j|}$ for $\rho=0.8$ (with correlation) or $\rho=0$ (without correlation). In short, we have four choices of $\X$: $\X_{iid}\Sigma(0)^{1/2}$, $\X_{iid}\Sigma(0.8)^{1/2}$, $\X_{ht}\Sigma(0)^{1/2}$, $\X_{ht}\Sigma(0.8)^{1/2}$.
    \item Noise: $\w\sim\mathcal{N}(0,\sigma^2\mathbf{I}_n)$.
    \item Prior: $\bet\sim\Pi$ where $\PP(\Pi\neq 0)=1-\epsilon$ and $\epsilon=0.5$. The non-zero component $\Pi^*$ is i.i.d. Uniform[0, 5], or the constant 5 so that $\Pi$ is Bernoulli. We term the uniform $\Pi^*$ as the non-tied prior, and the constant one as the tied prior.
    \item Dimension: Here $\p=500, \delta=0.9, n=\delta \p$.
    \item SLOPE penalty: LASSO uses a scalar penalty $\lambda$. Our 2-level SLOPE uses a triplet $(\lambda_1,\lambda_2,s)$. The uniform SLOPE (SLOPE:unif) uses $\lambda_i=\gamma\cdot (1-0.99(i-1)/\p)$, and the Benjamini–Hochberg SLOPE (SLOPE:BH) uses $\lambda_i=\gamma\cdot \Phi^{-1}(1-iq/2\p)/\Phi^{-1}(1-q/2\p)$, where $q=0.5$ and $\gamma$ is a regularization parameter introduced in Equation 2.2 of \cite{wang2022does}, which should be tuned.
    \item Computation: For each setting, we run 10 independent simulations and report the mean and the standard deviation. We note that searching for the optimal 2-level SLOPE penalty takes more time than the other SLOPE penalty, as 2-level SLOPE has 3 degrees of freedom (in $\lambda_1,\lambda_2,s$) while the others have only 1 degree of freedom. Technically speaking, SLOPE:BH has 2 degrees of freedom (in $q$ and $\gamma$) but $q$ is fixed in \cite{wang2022does}. We highlight that all SLOPE problems are solved efficiently by the fast proximal gradient descent \cite{su2016slope}.
\end{itemize}

\begin{figure}[!htb]
 \begin{subfigure}{0.48\textwidth}
\includegraphics[width=0.48\textwidth]{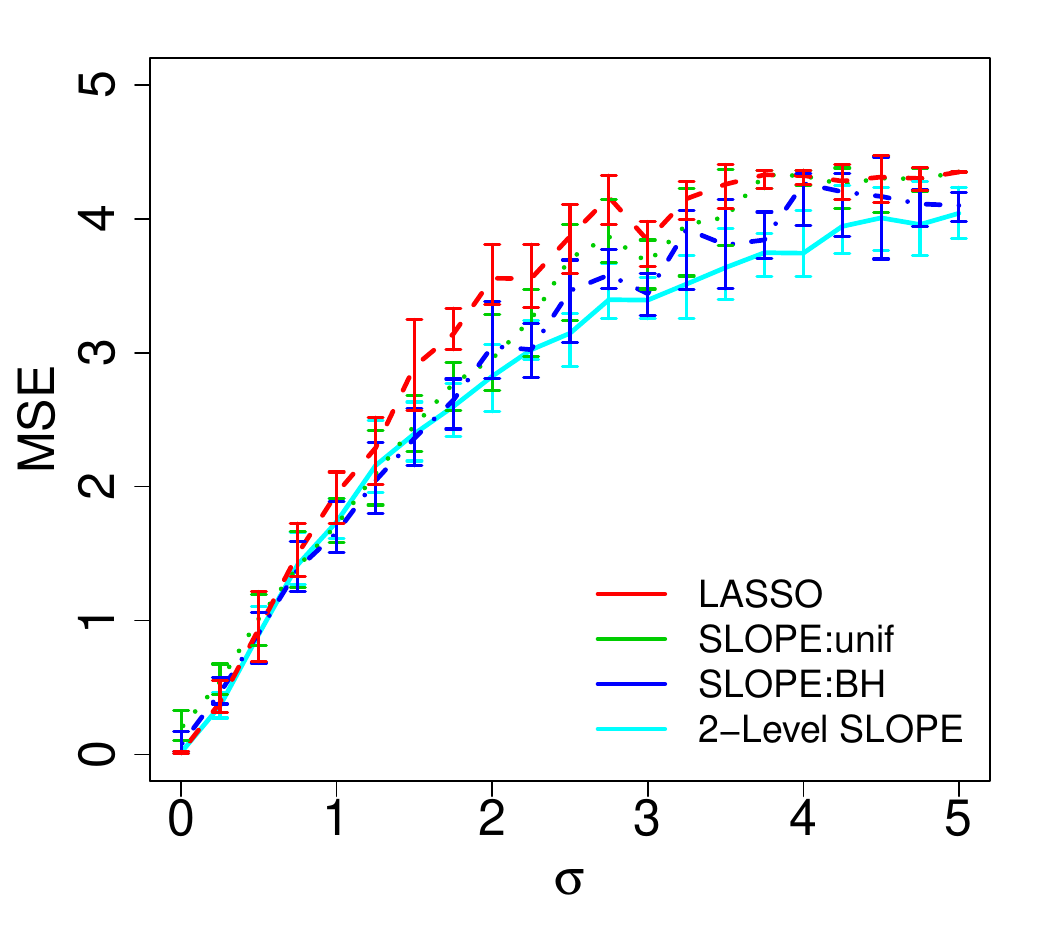}
\includegraphics[width=0.48\textwidth]{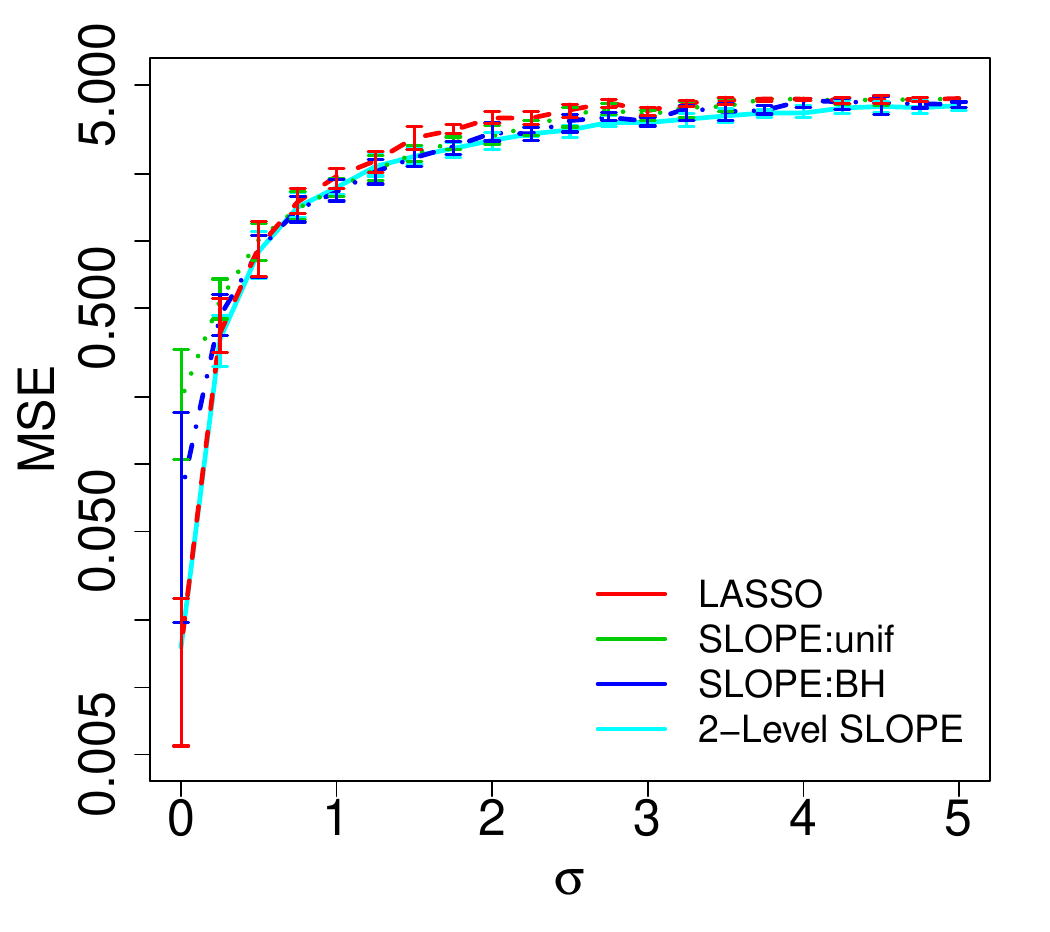}
\caption{Gaussian + iid $\X$, non-tied $\bet$}  
\label{fig:a}
 \end{subfigure}
 \begin{subfigure}{0.48\textwidth}
 \includegraphics[width=0.48\textwidth]{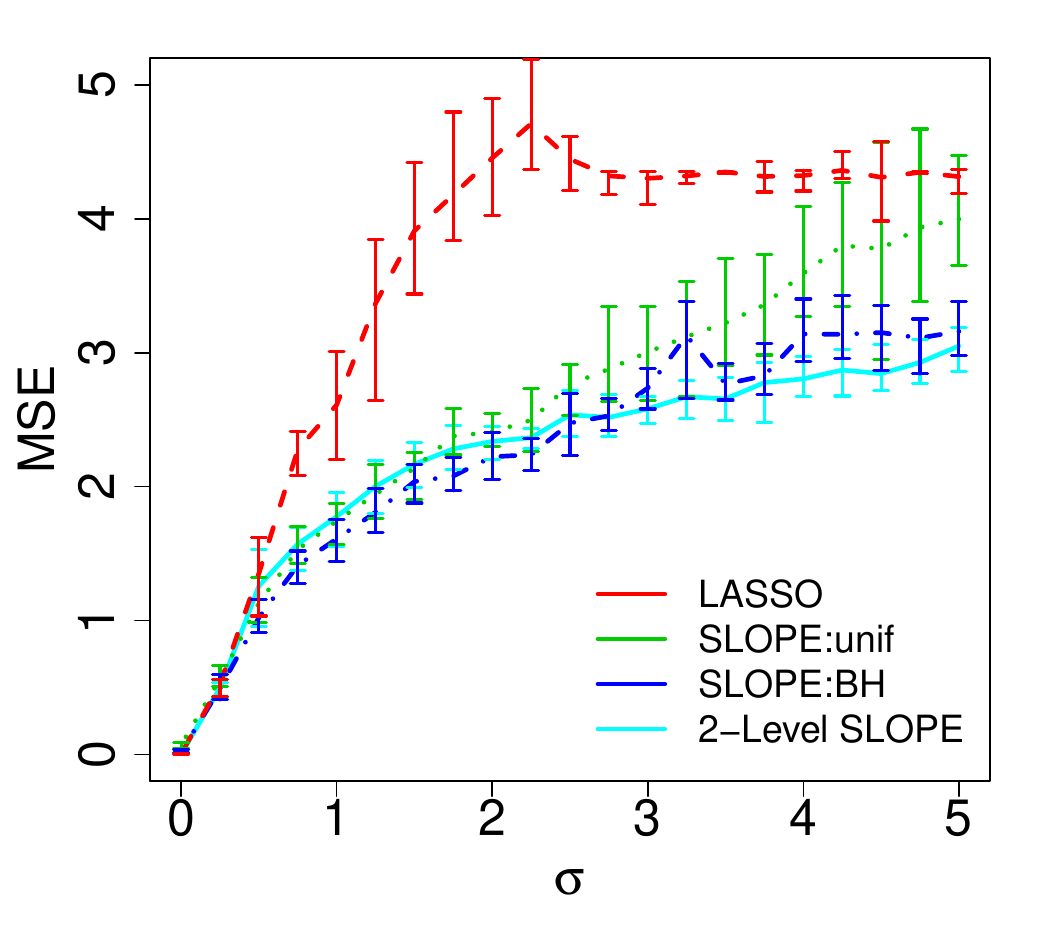}
 \includegraphics[width=0.48\textwidth]{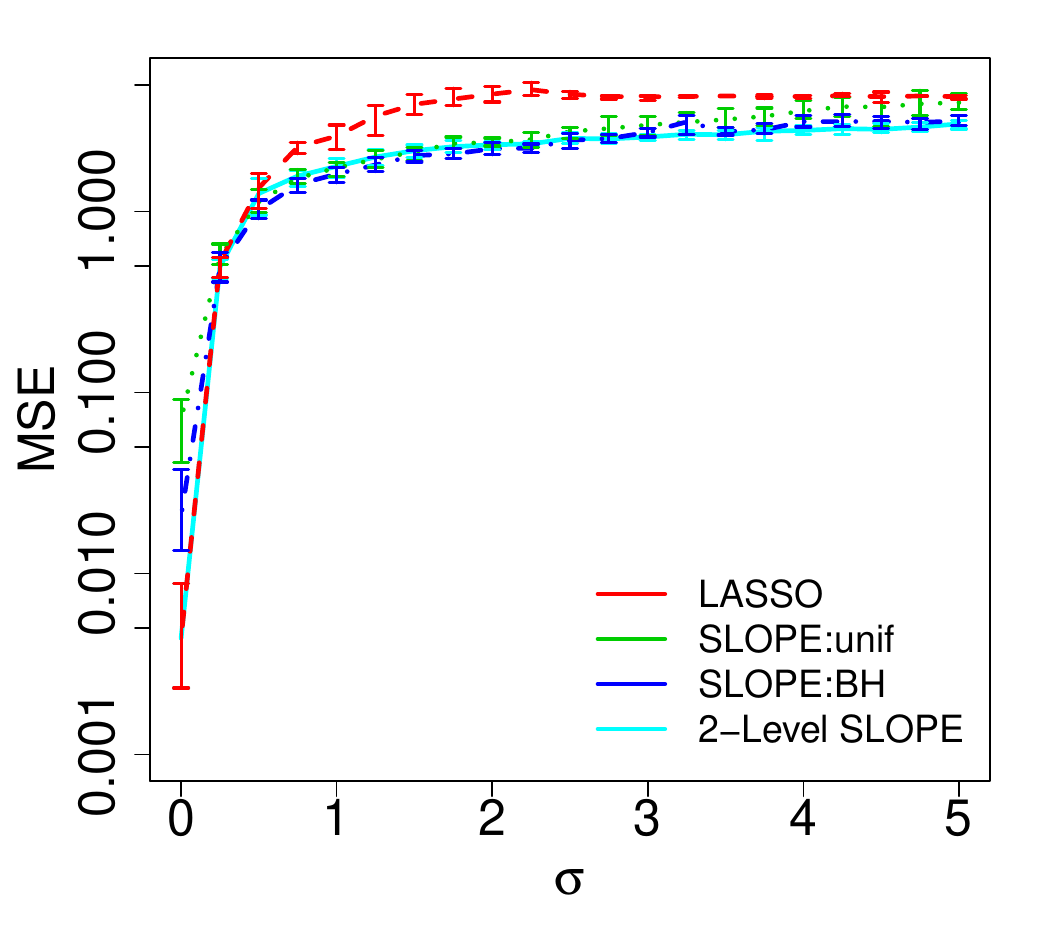}
\caption{Gaussian + correlated $\X$, non-tied $\bet$}  
\label{fig:b}
 \end{subfigure}
\\
 \begin{subfigure}{0.48\textwidth}
 \includegraphics[width=0.48\textwidth]{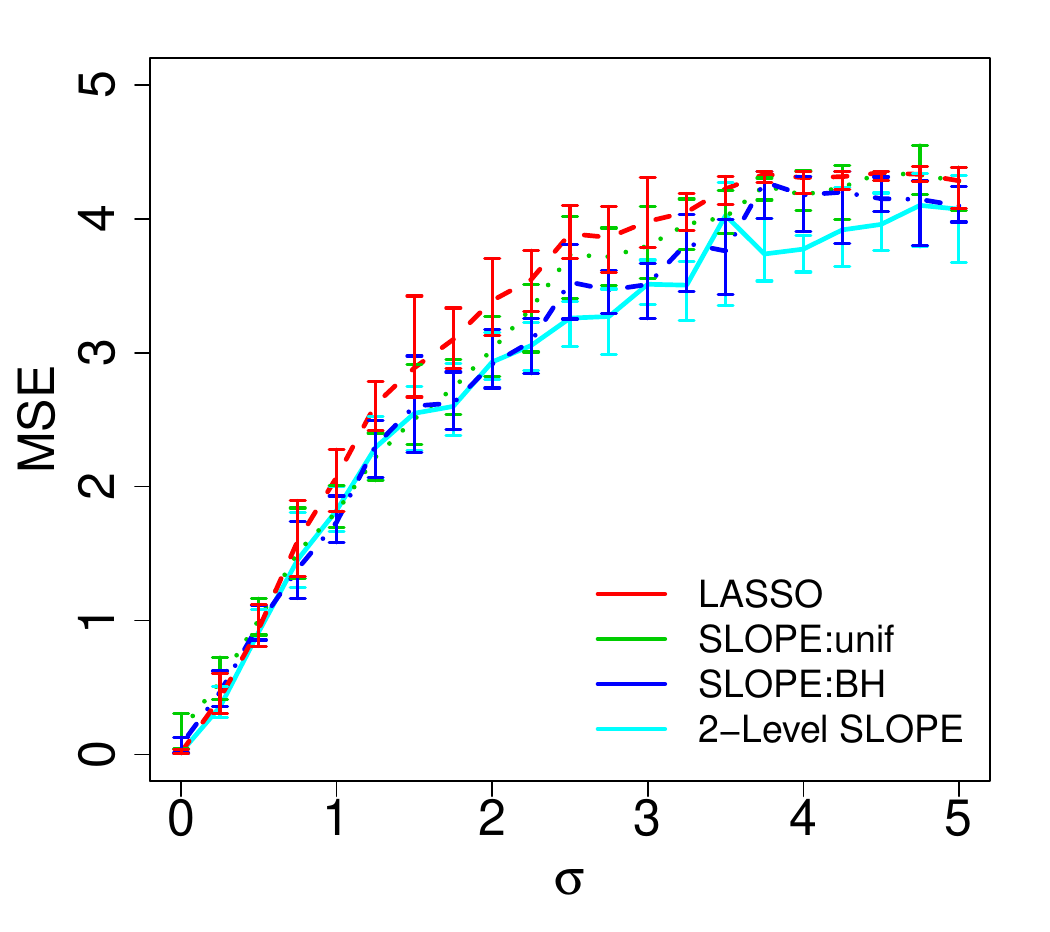}
 \includegraphics[width=0.48\textwidth]{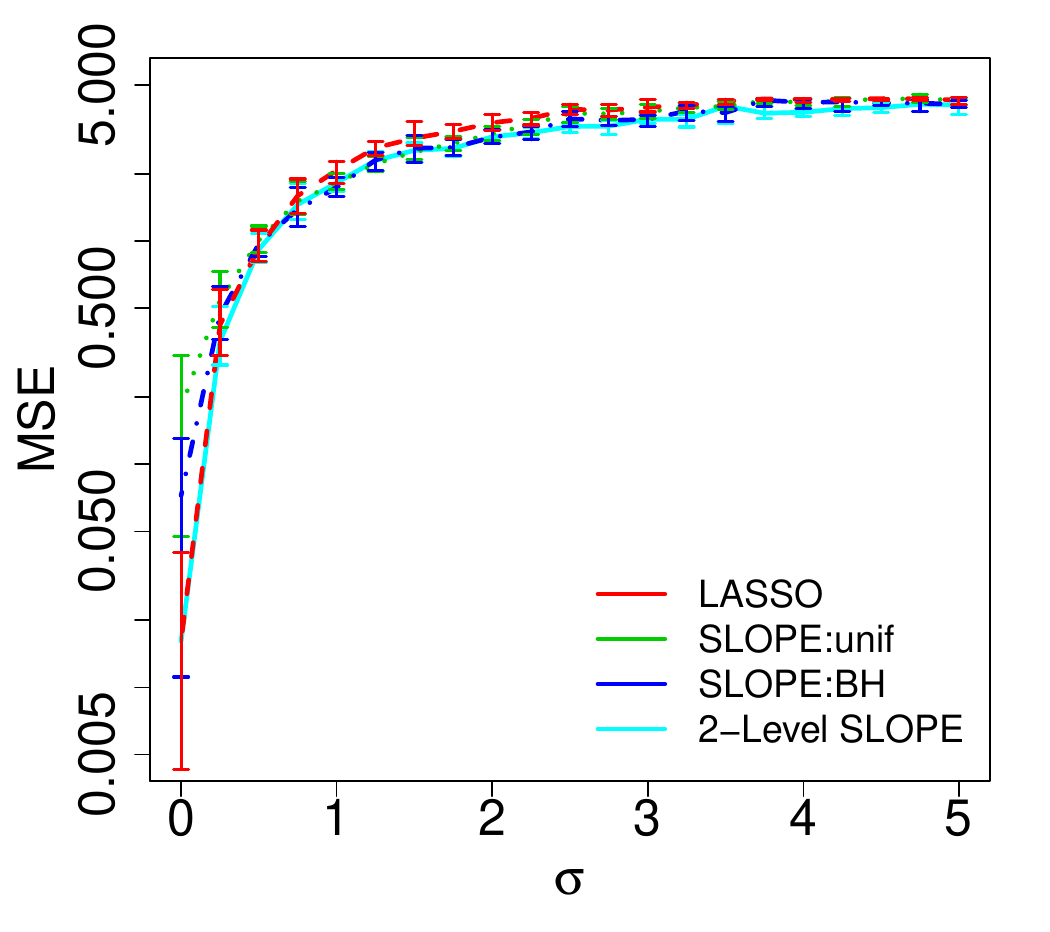}
\caption{heavy-tail + iid $\X$, non-tied $\bet$}  
\label{fig:c}
 \end{subfigure}
 \begin{subfigure}{0.48\textwidth}
     \includegraphics[width=0.48\textwidth]{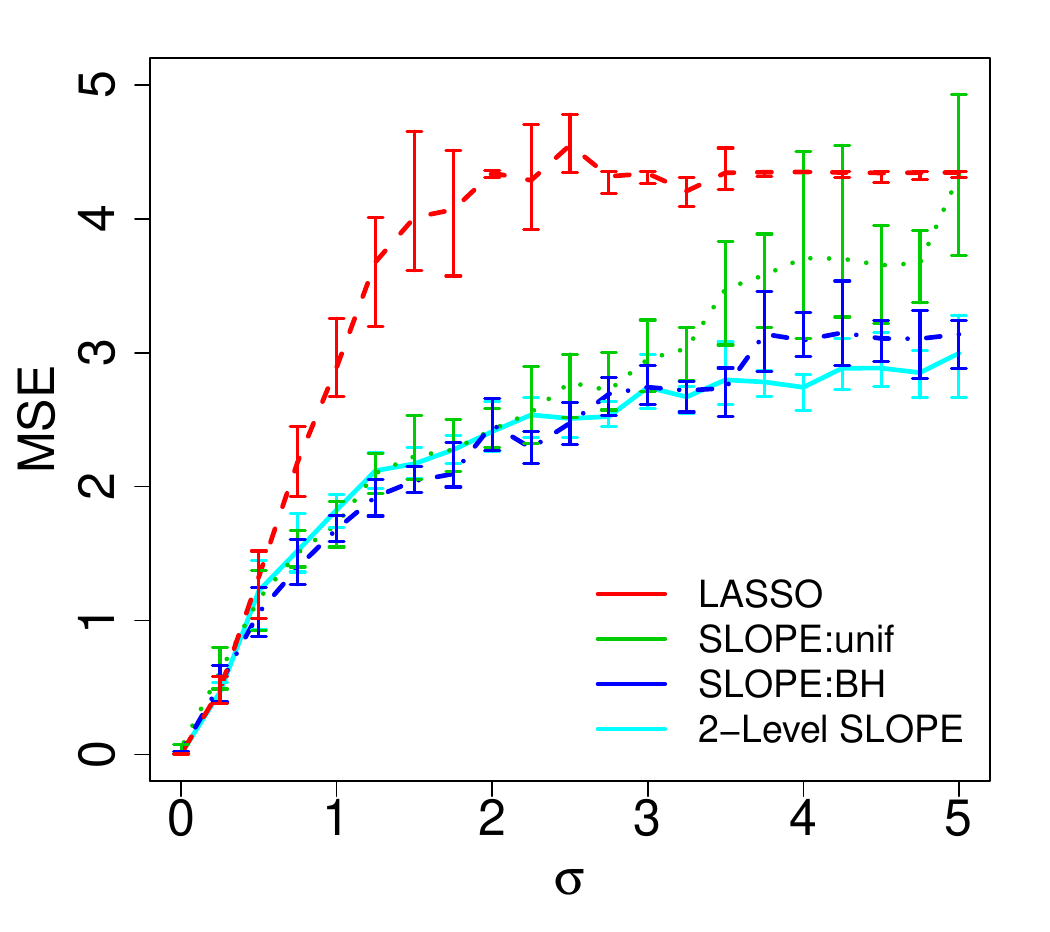}
     \includegraphics[width=0.48\textwidth]{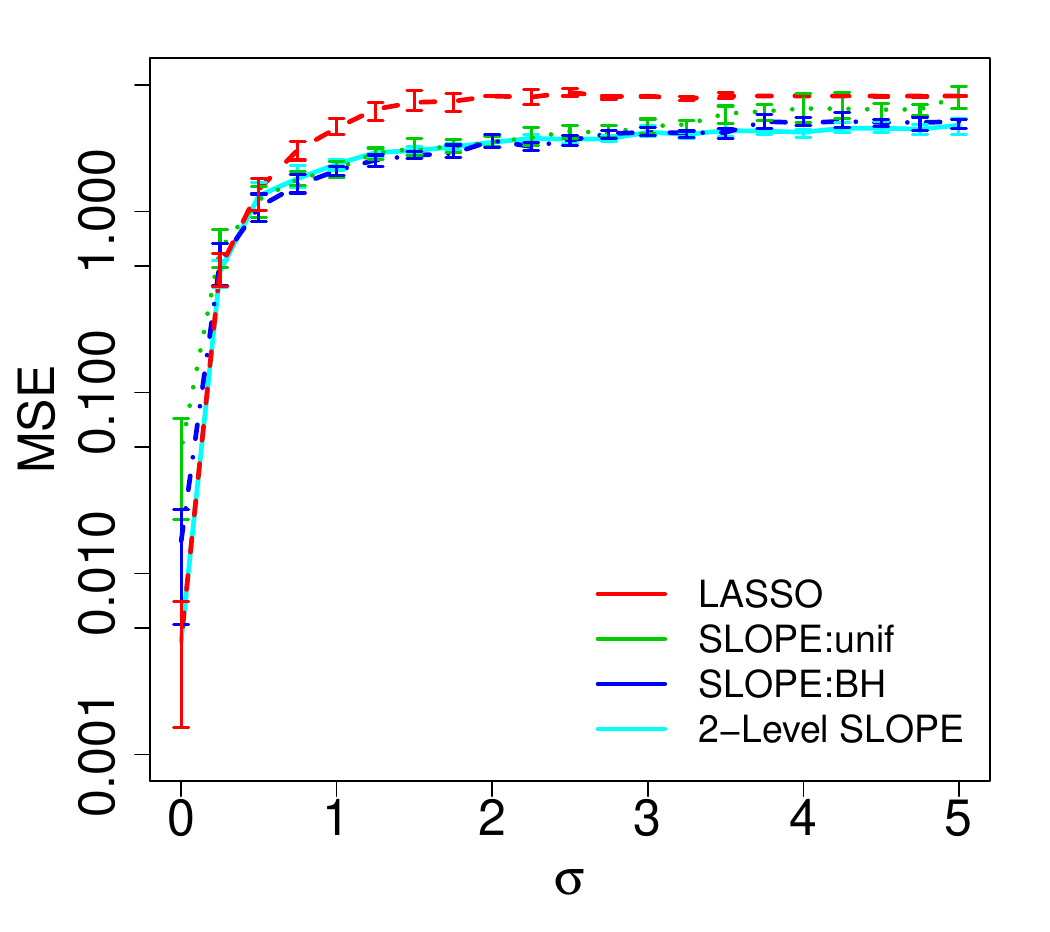}\caption{heavy-tail + correlated $\X$, non-tied $\bet$}  
     \label{fig:d}
 \end{subfigure}
\\
 \begin{subfigure}{0.48\textwidth}
\includegraphics[width=0.48\textwidth]{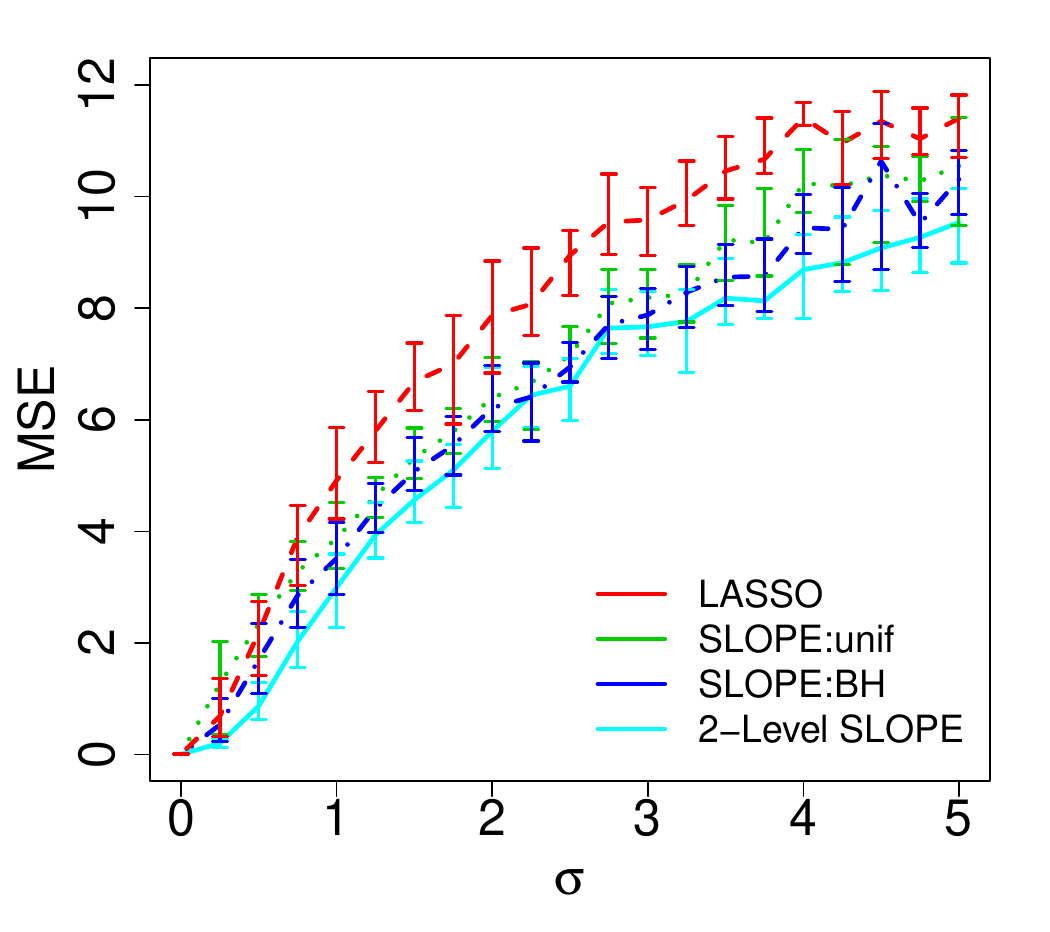}
\includegraphics[width=0.48\textwidth]{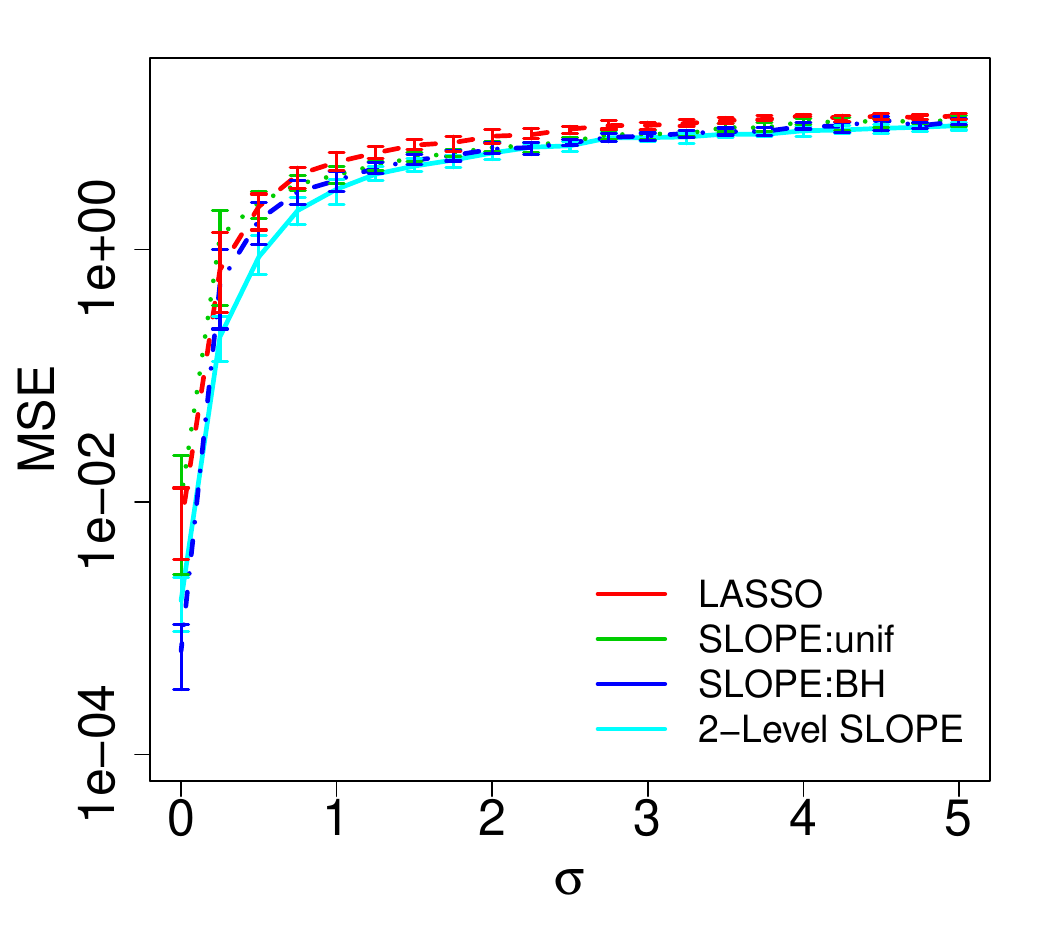}
\caption{Gaussian + iid $\X$, tied $\bet$}     
\label{fig:i}
 \end{subfigure}
 \begin{subfigure}{0.48\textwidth}
     \includegraphics[width=0.48\textwidth]{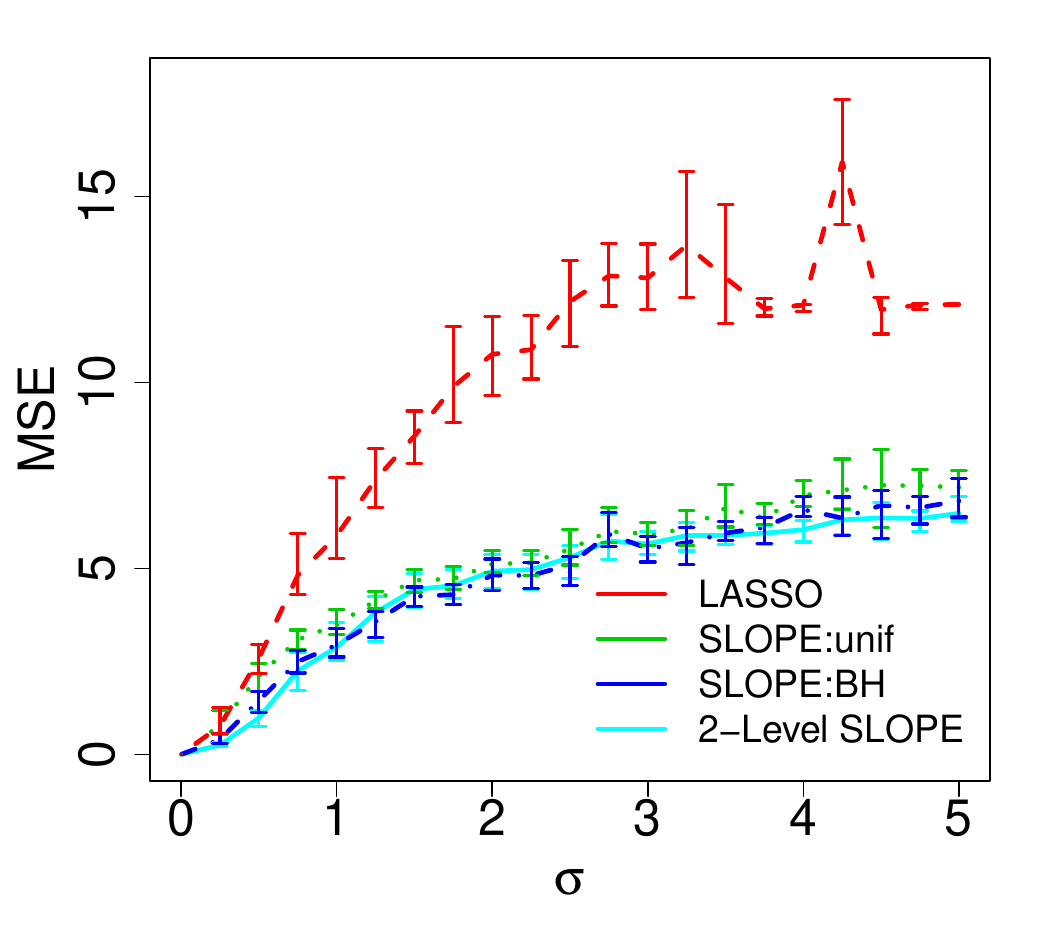}
     \includegraphics[width=0.48\textwidth]{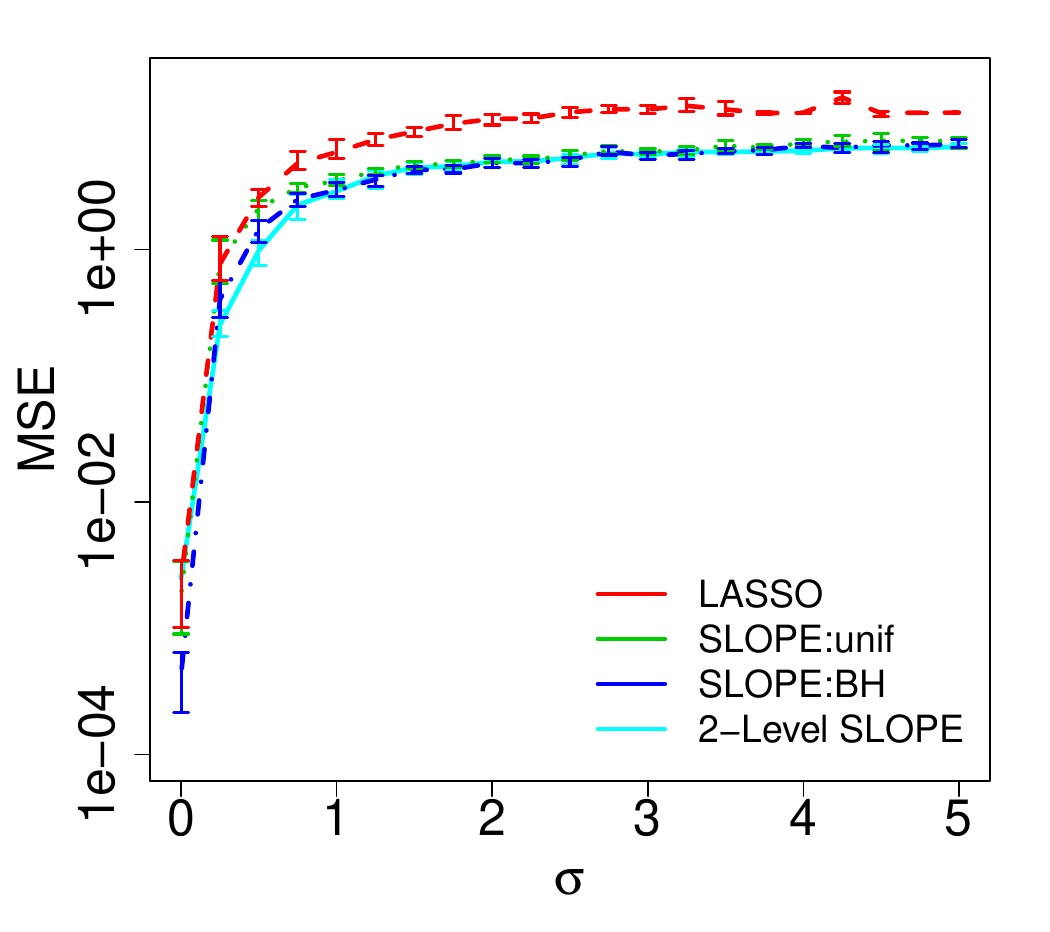}
\caption{Gaussian + correlated $\X$, tied $\bet$}  
\label{fig:j}
 \end{subfigure}
 \begin{subfigure}{0.48\textwidth}
 \includegraphics[width=0.48\textwidth]{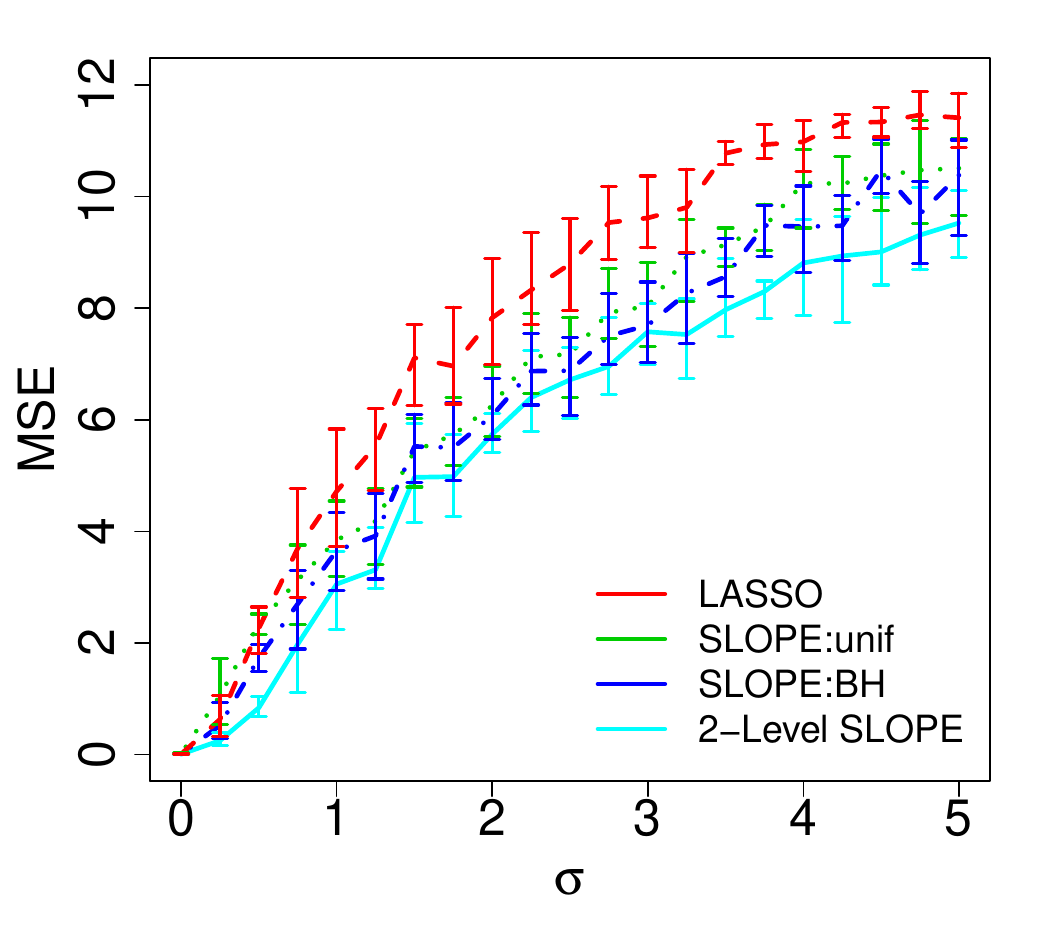}
\includegraphics[width=0.48\textwidth]{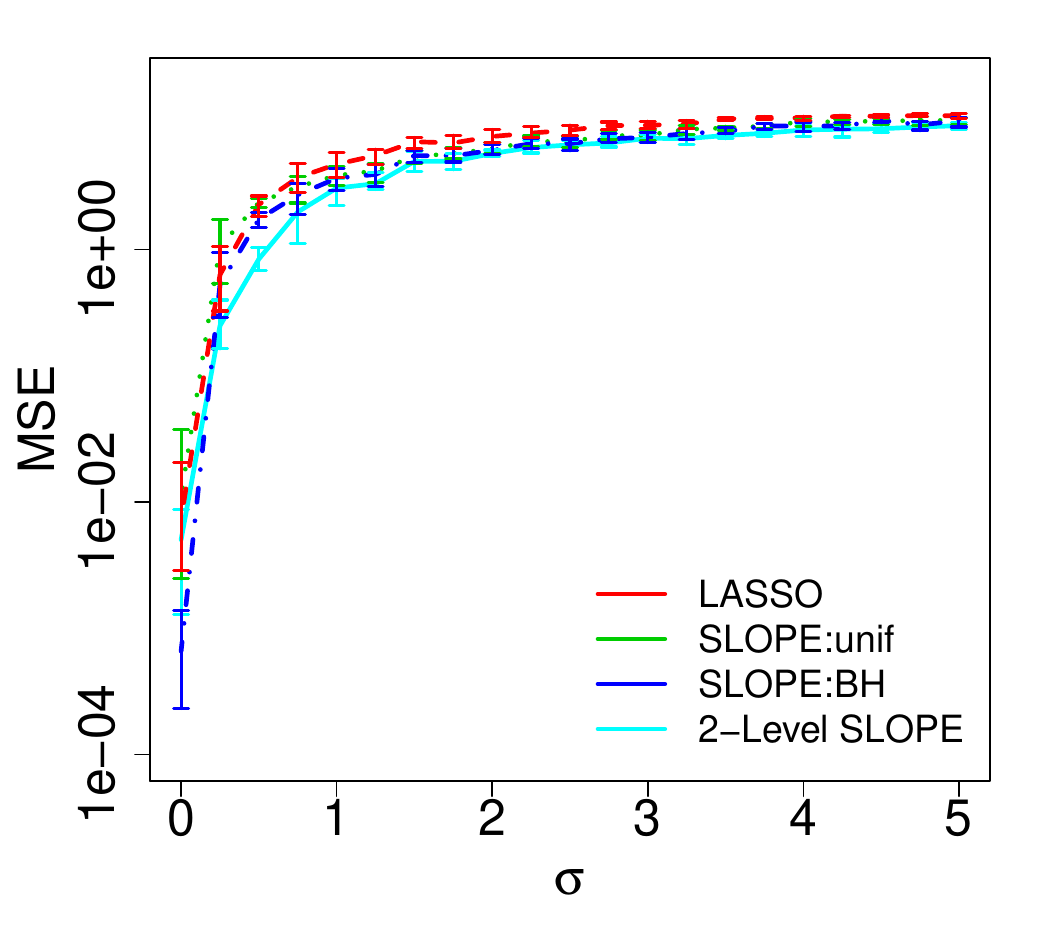}
\caption{heavy-tail + iid $\X$, tied $\bet$}  
\label{fig:k}
 \end{subfigure}
 \begin{subfigure}{0.48\textwidth}
     \includegraphics[width=0.48\textwidth]{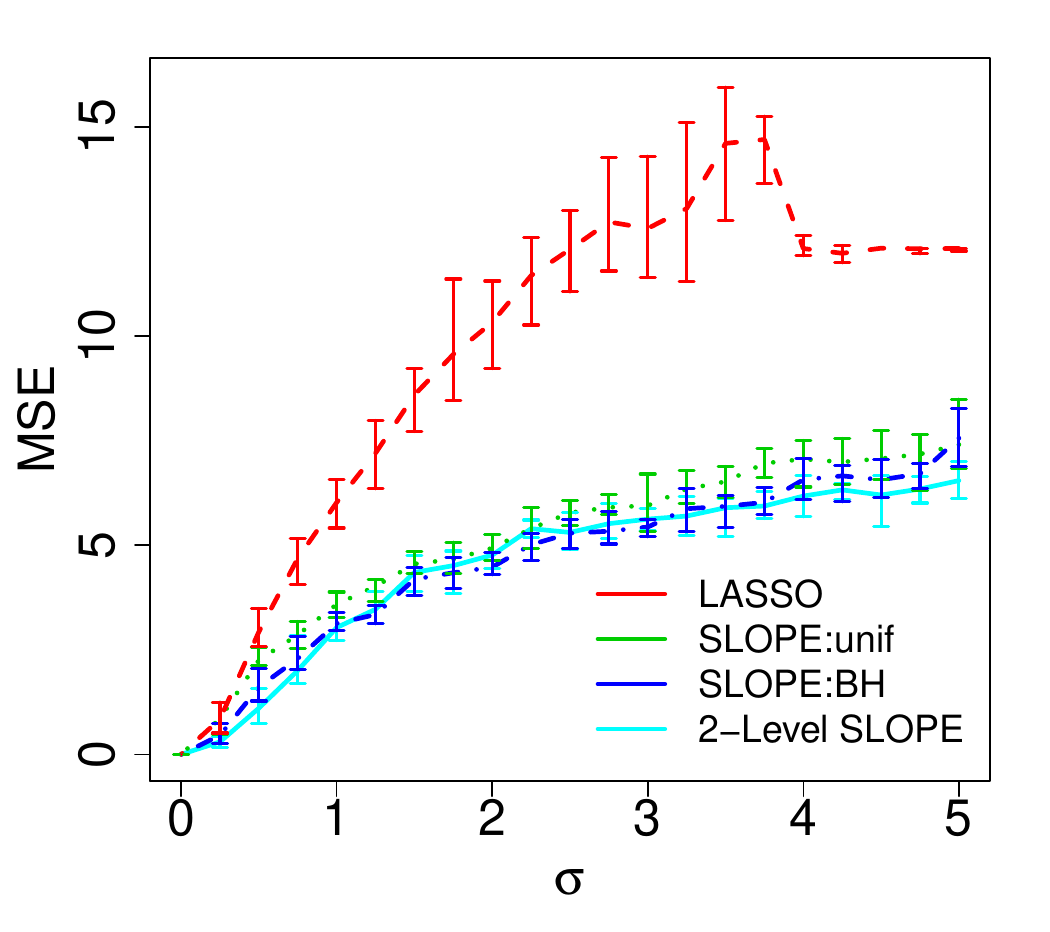}
     \includegraphics[width=0.48\textwidth]{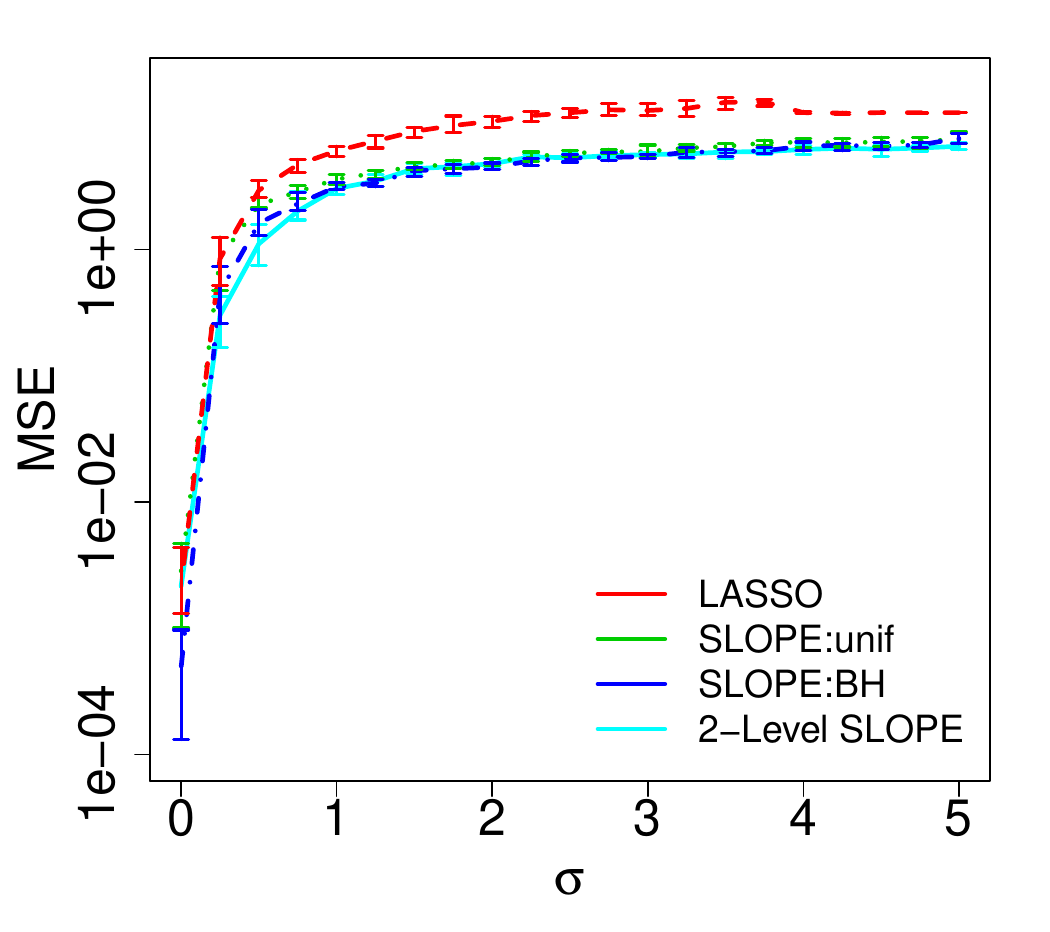}
\caption{heavy-tail + correlated $\X$, tied $\bet$}  
     \label{fig:l}
 \end{subfigure}
 \caption{MSE of LASSO, 2-level SLOPE and other SLOPE penalties, under Gaussian and heavy-tailed data, i.i.d. and correlated distributions, tied and non-tied priors. For each sub-plot, the left figure shows the uniform scale while the right figure shows the logarithm scale.}
 \label{fig:arian}
\end{figure}

In fact, Theorem 2.2 and Proposition 2.3 in \cite{wang2022does} have theoretically shown that LASSO is the best SLOPE when (1) the data $\X$ is Gaussian i.i.d., (2) the prior $\bet$ is non-tied, and (3) the noise $\sigma$ is small. This setting is visualized in the sub-plot (a) of \Cref{fig:arian}, where 2-level SLOPE reduces to the LASSO under small noise. Empirically speaking, the optimality of LASSO extends to correlated or heavy-tailed $\X$, as we visualize in sub-plots (b--d), where the red curves have the smallest MSE under small noise. However, (2) and (3) cannot be relaxed: for (2), we see in sub-plots (e--h) that 2-level SLOPE does not reduce to LASSO even under small noise; for (3), we see in sub-plots (a--h) that red curves are worse than others under large noise.
In particular,
the LASSO suffers from high MSE when $\X$ is correlated as in right column of \Cref{fig:arian}. 

In summary, 2-level SLOPE achieves remarkably small MSE under all settings: the 2-level SLOPE reduces to the LASSO when LASSO is indeed the best SLOPE, and remains robust to large noise and correlated $\X$, regardless of whether the priors are tied or not.

\section{Discussion}
\label{sec:discussion}

In this work, we study the 2-level SLOPE as the simplest extension of LASSO that adaptively applies a larger penalty to the larger elements of the estimator. From the perspective of practitioners, 2-level SLOPE is much easier to design in comparison to the general SLOPE, only requiring 3 hyper-parameters even for large $p$-dimension problems. This allows practitioners to use grid search to tune the penalty, which is infeasible for the general SLOPE.

Surprisingly, 2-level SLOPE bears a strong resemblance to the general SLOPE, in terms of TPP-FDP trade-off and estimation error. Specifically, 2-level SLOPE has explicit TPP-FDP trade-off, whereas the general SLOPE does not, and the 2-level SLOPE tradeoff curve matches closely to the general SLOPE tradeoff curve lower bound from \cite{bu2021characterizing}.

In addition, the 2-level SLOPE regularization is amazingly generalizable to other problems and models. From the model perspective, the 2-level SLOPE regularization can be applied whenever the LASSO regularization is applicable: besides the linear regression model \eqref{eq:K-level}, we can apply 2-level SLOPE regularization on the linear classification model (i.e.,\ the logistic regression), matrix completion, and deep learning models via neural networks. 
From the optimization problem perspective, one can easily formulate the 2-level group SLOPE as a subclass of the group SLOPE and further study the performance in combination with the above-mentioned models.

We finally mention that there remain many open questions about the relationships between asymptotic model selection and estimation error properties of SLOPE, 2-level SLOPE, and the LASSO in a setting where the measurement matrix $\mathbf{X}$ is not assumed to be Gaussian. By using the theory for non-separable AMP algorithms~\cite{berthier2020state}, one could potentially extend this work the settings where $\mathbf{X}$ remains Gaussian, but with correlated columns. Many recent works have extended AMP algorithms beyond the regime of i.i.d.\ Gaussian  $\mathbf{X}$, see for example~\cite{chen2021universality, dudeja2023universality, wang2024universality, rangan2019vector} and these works could be used to avoid the Gaussianity assumption on $\mathbf{X}$ altogether.




	
	\begin{acks}[Acknowledgments]
		Rush acknowledges support from the National Science Foundation under DMS-2413828. Klusowski acknowledges support from the National Science Foundation under CAREER DMS-2239448.
	\end{acks}

	\clearpage
	\bibliographystyle{imsart-nameyear}
	\bibliography{ref}

\begin{thebibliography}{30}

\bibitem[\protect\citeauthoryear{Bayati and Montanari}{2011a}]{bayati2011lasso}
\begin{barticle}[author]
\bauthor{\bsnm{Bayati},~\bfnm{Mohsen}\binits{M.}} \AND
  \bauthor{\bsnm{Montanari},~\bfnm{Andrea}\binits{A.}}
(\byear{2011}a).
\btitle{The LASSO risk for Gaussian matrices}.
\bjournal{IEEE Transactions on Information Theory}
\bvolume{58}
\bpages{1997--2017}.
\end{barticle}
\endbibitem

\bibitem[\protect\citeauthoryear{Bayati and
  Montanari}{2011b}]{bayati2011dynamics}
\begin{barticle}[author]
\bauthor{\bsnm{Bayati},~\bfnm{Mohsen}\binits{M.}} \AND
  \bauthor{\bsnm{Montanari},~\bfnm{Andrea}\binits{A.}}
(\byear{2011}b).
\btitle{The dynamics of message passing on dense graphs, with applications to
  compressed sensing}.
\bjournal{IEEE Transactions on Information Theory}
\bvolume{57}
\bpages{764--785}.
\end{barticle}
\endbibitem

\bibitem[\protect\citeauthoryear{Berthier, Montanari and
  Nguyen}{2020}]{berthier2020state}
\begin{barticle}[author]
\bauthor{\bsnm{Berthier},~\bfnm{Raphael}\binits{R.}},
  \bauthor{\bsnm{Montanari},~\bfnm{Andrea}\binits{A.}} \AND
  \bauthor{\bsnm{Nguyen},~\bfnm{Phan-Minh}\binits{P.-M.}}
(\byear{2020}).
\btitle{State evolution for approximate message passing with non-separable
  functions}.
\bjournal{Information and Inference: A Journal of the IMA}
\bvolume{9}
\bpages{33--79}.
\end{barticle}
\endbibitem

\bibitem[\protect\citeauthoryear{Bogdan et~al.}{2015}]{bogdan2015slope}
\begin{barticle}[author]
\bauthor{\bsnm{Bogdan},~\bfnm{Ma{\l}gorzata}\binits{M.}}, \bauthor{\bsnm{Van
  Den~Berg},~\bfnm{Ewout}\binits{E.}},
  \bauthor{\bsnm{Sabatti},~\bfnm{Chiara}\binits{C.}},
  \bauthor{\bsnm{Su},~\bfnm{Weijie}\binits{W.}} \AND
  \bauthor{\bsnm{Cand{\`e}s},~\bfnm{Emmanuel~J}\binits{E.~J.}}
(\byear{2015}).
\btitle{SLOPE—adaptive variable selection via convex optimization}.
\bjournal{The annals of applied statistics}
\bvolume{9}
\bpages{1103}.
\end{barticle}
\endbibitem

\bibitem[\protect\citeauthoryear{Bondell and
  Reich}{2008}]{bondell2008simultaneous}
\begin{barticle}[author]
\bauthor{\bsnm{Bondell},~\bfnm{Howard~D}\binits{H.~D.}} \AND
  \bauthor{\bsnm{Reich},~\bfnm{Brian~J}\binits{B.~J.}}
(\byear{2008}).
\btitle{Simultaneous regression shrinkage, variable selection, and supervised
  clustering of predictors with OSCAR}.
\bjournal{Biometrics}
\bvolume{64}
\bpages{115--123}.
\end{barticle}
\endbibitem

\bibitem[\protect\citeauthoryear{Brunk et~al.}{1972}]{brunk1972statistical}
\begin{barticle}[author]
\bauthor{\bsnm{Brunk},~\bfnm{HD}\binits{H.}},
  \bauthor{\bsnm{Barlow},~\bfnm{Richard~E}\binits{R.~E.}},
  \bauthor{\bsnm{Bartholomew},~\bfnm{Daniel~J}\binits{D.~J.}} \AND
  \bauthor{\bsnm{Bremner},~\bfnm{James~M}\binits{J.~M.}}
(\byear{1972}).
\btitle{Statistical inference under order restrictions.(the theory and
  application of isotonic regression)}.
\bjournal{International Statistical Review}
\bvolume{41}
\bpages{395}.
\end{barticle}
\endbibitem

\bibitem[\protect\citeauthoryear{Bu et~al.}{2020}]{bu2019algorithmic}
\begin{barticle}[author]
\bauthor{\bsnm{Bu},~\bfnm{Zhiqi}\binits{Z.}},
  \bauthor{\bsnm{Klusowski},~\bfnm{Jason~M}\binits{J.~M.}},
  \bauthor{\bsnm{Rush},~\bfnm{Cynthia}\binits{C.}} \AND
  \bauthor{\bsnm{Su},~\bfnm{Weijie~J}\binits{W.~J.}}
(\byear{2020}).
\btitle{Algorithmic Analysis and Statistical Estimation of SLOPE via
  Approximate Message Passing}.
\bjournal{IEEE Transactions on Information Theory}
\bvolume{67}
\bpages{506--537}.
\end{barticle}
\endbibitem

\bibitem[\protect\citeauthoryear{Bu et~al.}{2023}]{bu2021characterizing}
\begin{barticle}[author]
\bauthor{\bsnm{Bu},~\bfnm{Zhiqi}\binits{Z.}},
  \bauthor{\bsnm{Klusowski},~\bfnm{Jason}\binits{J.}},
  \bauthor{\bsnm{Rush},~\bfnm{Cynthia}\binits{C.}} \AND
  \bauthor{\bsnm{Su},~\bfnm{Weijie~J}\binits{W.~J.}}
(\byear{2023}).
\btitle{Characterizing the SLOPE Trade-off: A Variational Perspective and the
  Donoho-Tanner Limit}.
\bjournal{The Annals of Statistics}
\bvolume{51}
\bpages{33--61}.
\end{barticle}
\endbibitem

\bibitem[\protect\citeauthoryear{Candes, Romberg and
  Tao}{2006}]{candes2006stable}
\begin{barticle}[author]
\bauthor{\bsnm{Candes},~\bfnm{Emmanuel~J}\binits{E.~J.}},
  \bauthor{\bsnm{Romberg},~\bfnm{Justin~K}\binits{J.~K.}} \AND
  \bauthor{\bsnm{Tao},~\bfnm{Terence}\binits{T.}}
(\byear{2006}).
\btitle{Stable signal recovery from incomplete and inaccurate measurements}.
\bjournal{Communications on Pure and Applied Mathematics: A Journal Issued by
  the Courant Institute of Mathematical Sciences}
\bvolume{59}
\bpages{1207--1223}.
\end{barticle}
\endbibitem

\bibitem[\protect\citeauthoryear{Candes and Tao}{2005}]{candes2005decoding}
\begin{barticle}[author]
\bauthor{\bsnm{Candes},~\bfnm{Emmanuel~J}\binits{E.~J.}} \AND
  \bauthor{\bsnm{Tao},~\bfnm{Terence}\binits{T.}}
(\byear{2005}).
\btitle{Decoding by linear programming}.
\bjournal{IEEE transactions on information theory}
\bvolume{51}
\bpages{4203--4215}.
\end{barticle}
\endbibitem

\bibitem[\protect\citeauthoryear{Chen, Bu and Xu}{2021}]{chen2021asymptotic}
\begin{binproceedings}[author]
\bauthor{\bsnm{Chen},~\bfnm{Kan}\binits{K.}},
  \bauthor{\bsnm{Bu},~\bfnm{Zhiqi}\binits{Z.}} \AND
  \bauthor{\bsnm{Xu},~\bfnm{Shiyun}\binits{S.}}
(\byear{2021}).
\btitle{Asymptotic Statistical Analysis of Sparse Group LASSO via Approximate
  Message Passing}.
In \bbooktitle{Joint European Conference on Machine Learning and Knowledge
  Discovery in Databases}
\bpages{510--526}.
\bpublisher{Springer}.
\end{binproceedings}
\endbibitem

\bibitem[\protect\citeauthoryear{Chen and Lam}{2021}]{chen2021universality}
\begin{barticle}[author]
\bauthor{\bsnm{Chen},~\bfnm{Wei-Kuo}\binits{W.-K.}} \AND
  \bauthor{\bsnm{Lam},~\bfnm{Wai-Kit}\binits{W.-K.}}
(\byear{2021}).
\btitle{Universality of approximate message passing algorithms}.
\bjournal{Electric Journal of Probability}.
\end{barticle}
\endbibitem

\bibitem[\protect\citeauthoryear{Donoho, Maleki and
  Montanari}{2009}]{donoho2009message}
\begin{barticle}[author]
\bauthor{\bsnm{Donoho},~\bfnm{David~L}\binits{D.~L.}},
  \bauthor{\bsnm{Maleki},~\bfnm{Arian}\binits{A.}} \AND
  \bauthor{\bsnm{Montanari},~\bfnm{Andrea}\binits{A.}}
(\byear{2009}).
\btitle{Message-passing algorithms for compressed sensing}.
\bjournal{Proceedings of the National Academy of Sciences}
\bvolume{106}
\bpages{18914--18919}.
\end{barticle}
\endbibitem

\bibitem[\protect\citeauthoryear{Dudeja, M.~Lu and
  Sen}{2023}]{dudeja2023universality}
\begin{barticle}[author]
\bauthor{\bsnm{Dudeja},~\bfnm{Rishabh}\binits{R.}},
  \bauthor{\bsnm{M.~Lu},~\bfnm{Yue}\binits{Y.}} \AND
  \bauthor{\bsnm{Sen},~\bfnm{Subhabrata}\binits{S.}}
(\byear{2023}).
\btitle{Universality of approximate message passing with semirandom matrices}.
\bjournal{The Annals of Probability}
\bvolume{51}
\bpages{1616--1683}.
\end{barticle}
\endbibitem

\bibitem[\protect\citeauthoryear{Feng et~al.}{2022}]{feng2022unifying}
\begin{barticle}[author]
\bauthor{\bsnm{Feng},~\bfnm{Oliver~Y}\binits{O.~Y.}},
  \bauthor{\bsnm{Venkataramanan},~\bfnm{Ramji}\binits{R.}},
  \bauthor{\bsnm{Rush},~\bfnm{Cynthia}\binits{C.}},
  \bauthor{\bsnm{Samworth},~\bfnm{Richard~J}\binits{R.~J.}} \betal{et~al.}
(\byear{2022}).
\btitle{A unifying tutorial on approximate message passing}.
\bjournal{Foundations and Trends{\textregistered} in Machine Learning}
\bvolume{15}
\bpages{335--536}.
\end{barticle}
\endbibitem

\bibitem[\protect\citeauthoryear{Figueiredo and
  Nowak}{2016}]{figueiredo2016ordered}
\begin{binproceedings}[author]
\bauthor{\bsnm{Figueiredo},~\bfnm{Mario}\binits{M.}} \AND
  \bauthor{\bsnm{Nowak},~\bfnm{Robert}\binits{R.}}
(\byear{2016}).
\btitle{Ordered weighted l1 regularized regression with strongly correlated
  covariates: Theoretical aspects}.
In \bbooktitle{Artificial Intelligence and Statistics}
\bpages{930--938}.
\bpublisher{PMLR}.
\end{binproceedings}
\endbibitem

\bibitem[\protect\citeauthoryear{Hu and Lu}{2019}]{hu2019asymptotics}
\begin{binproceedings}[author]
\bauthor{\bsnm{Hu},~\bfnm{Hong}\binits{H.}} \AND
  \bauthor{\bsnm{Lu},~\bfnm{Yue~M}\binits{Y.~M.}}
(\byear{2019}).
\btitle{Asymptotics and optimal designs of SLOPE for sparse linear regression}.
In \bbooktitle{2019 IEEE International Symposium on Information Theory (ISIT)}
\bpages{375--379}.
\bpublisher{IEEE}.
\end{binproceedings}
\endbibitem

\bibitem[\protect\citeauthoryear{Kruskal}{1964}]{kruskal1964nonmetric}
\begin{barticle}[author]
\bauthor{\bsnm{Kruskal},~\bfnm{Joseph~B}\binits{J.~B.}}
(\byear{1964}).
\btitle{Nonmetric multidimensional scaling: a numerical method}.
\bjournal{Psychometrika}
\bvolume{29}
\bpages{115--129}.
\end{barticle}
\endbibitem

\bibitem[\protect\citeauthoryear{Lustig et~al.}{2008}]{lustig2008compressed}
\begin{barticle}[author]
\bauthor{\bsnm{Lustig},~\bfnm{Michael}\binits{M.}},
  \bauthor{\bsnm{Donoho},~\bfnm{David~L}\binits{D.~L.}},
  \bauthor{\bsnm{Santos},~\bfnm{Juan~M}\binits{J.~M.}} \AND
  \bauthor{\bsnm{Pauly},~\bfnm{John~M}\binits{J.~M.}}
(\byear{2008}).
\btitle{Compressed sensing MRI}.
\bjournal{IEEE signal processing magazine}
\bvolume{25}
\bpages{72--82}.
\end{barticle}
\endbibitem

\bibitem[\protect\citeauthoryear{Ogutu, Schulz-Streeck and
  Piepho}{2012}]{ogutu2012genomic}
\begin{binproceedings}[author]
\bauthor{\bsnm{Ogutu},~\bfnm{Joseph~O}\binits{J.~O.}},
  \bauthor{\bsnm{Schulz-Streeck},~\bfnm{Torben}\binits{T.}} \AND
  \bauthor{\bsnm{Piepho},~\bfnm{Hans-Peter}\binits{H.-P.}}
(\byear{2012}).
\btitle{Genomic selection using regularized linear regression models: ridge
  regression, lasso, elastic net and their extensions}.
In \bbooktitle{BMC proceedings}
\bvolume{6}
\bpages{1--6}.
\bpublisher{Springer}.
\end{binproceedings}
\endbibitem

\bibitem[\protect\citeauthoryear{Rangan, Schniter and
  Fletcher}{2019}]{rangan2019vector}
\begin{barticle}[author]
\bauthor{\bsnm{Rangan},~\bfnm{Sundeep}\binits{S.}},
  \bauthor{\bsnm{Schniter},~\bfnm{Philip}\binits{P.}} \AND
  \bauthor{\bsnm{Fletcher},~\bfnm{Alyson~K}\binits{A.~K.}}
(\byear{2019}).
\btitle{Vector approximate message passing}.
\bjournal{IEEE Transactions on Information Theory}
\bvolume{65}
\bpages{6664--6684}.
\end{barticle}
\endbibitem

\bibitem[\protect\citeauthoryear{Su et~al.}{2016}]{su2016slope}
\begin{barticle}[author]
\bauthor{\bsnm{Su},~\bfnm{Weijie}\binits{W.}},
  \bauthor{\bsnm{Candes},~\bfnm{Emmanuel}\binits{E.}} \betal{et~al.}
(\byear{2016}).
\btitle{SLOPE is adaptive to unknown sparsity and asymptotically minimax}.
\bjournal{The Annals of Statistics}
\bvolume{44}
\bpages{1038--1068}.
\end{barticle}
\endbibitem

\bibitem[\protect\citeauthoryear{Su et~al.}{2017}]{su2017false}
\begin{barticle}[author]
\bauthor{\bsnm{Su},~\bfnm{Weijie}\binits{W.}},
  \bauthor{\bsnm{Bogdan},~\bfnm{Ma{\l}gorzata}\binits{M.}},
  \bauthor{\bsnm{Candes},~\bfnm{Emmanuel}\binits{E.}} \betal{et~al.}
(\byear{2017}).
\btitle{False discoveries occur early on the lasso path}.
\bjournal{The Annals of statistics}
\bvolume{45}
\bpages{2133--2150}.
\end{barticle}
\endbibitem

\bibitem[\protect\citeauthoryear{Tibshirani}{1996}]{tibshirani1996regression}
\begin{barticle}[author]
\bauthor{\bsnm{Tibshirani},~\bfnm{Robert}\binits{R.}}
(\byear{1996}).
\btitle{Regression shrinkage and selection via the lasso}.
\bjournal{Journal of the Royal Statistical Society: Series B (Methodological)}
\bvolume{58}
\bpages{267--288}.
\end{barticle}
\endbibitem

\bibitem[\protect\citeauthoryear{Wang, Weng and Maleki}{2022}]{wang2022does}
\begin{barticle}[author]
\bauthor{\bsnm{Wang},~\bfnm{Shuaiwen}\binits{S.}},
  \bauthor{\bsnm{Weng},~\bfnm{Haolei}\binits{H.}} \AND
  \bauthor{\bsnm{Maleki},~\bfnm{Arian}\binits{A.}}
(\byear{2022}).
\btitle{Does SLOPE outperform bridge regression?}
\bjournal{Information and Inference: A Journal of the IMA}
\bvolume{11}
\bpages{1--54}.
\end{barticle}
\endbibitem

\bibitem[\protect\citeauthoryear{Wang, Zhong and
  Fan}{2024}]{wang2024universality}
\begin{barticle}[author]
\bauthor{\bsnm{Wang},~\bfnm{Tianhao}\binits{T.}},
  \bauthor{\bsnm{Zhong},~\bfnm{Xinyi}\binits{X.}} \AND
  \bauthor{\bsnm{Fan},~\bfnm{Zhou}\binits{Z.}}
(\byear{2024}).
\btitle{Universality of approximate message passing algorithms and tensor
  networks}.
\bjournal{The Annals of Applied Probability}
\bvolume{34}
\bpages{3943--3994}.
\end{barticle}
\endbibitem

\bibitem[\protect\citeauthoryear{Wang et~al.}{2020}]{wang2020complete}
\begin{barticle}[author]
\bauthor{\bsnm{Wang},~\bfnm{Hua}\binits{H.}},
  \bauthor{\bsnm{Yang},~\bfnm{Yachong}\binits{Y.}},
  \bauthor{\bsnm{Bu},~\bfnm{Zhiqi}\binits{Z.}} \AND
  \bauthor{\bsnm{Su},~\bfnm{Weijie}\binits{W.}}
(\byear{2020}).
\btitle{The complete Lasso tradeoff diagram}.
\bjournal{Advances in Neural Information Processing Systems}
\bvolume{33}
\bpages{20051--20060}.
\end{barticle}
\endbibitem

\bibitem[\protect\citeauthoryear{Zeng and
  Figueiredo}{2014}]{zeng2014decreasing}
\begin{barticle}[author]
\bauthor{\bsnm{Zeng},~\bfnm{Xiangrong}\binits{X.}} \AND
  \bauthor{\bsnm{Figueiredo},~\bfnm{M{\'a}rio~AT}\binits{M.~A.}}
(\byear{2014}).
\btitle{Decreasing Weighted Sorted $\ell_1$ Regularization}.
\bjournal{IEEE Signal Processing Letters}
\bvolume{21}
\bpages{1240--1244}.
\end{barticle}
\endbibitem

\bibitem[\protect\citeauthoryear{Zhang and Bu}{2021}]{zhang2021efficient}
\begin{binproceedings}[author]
\bauthor{\bsnm{Zhang},~\bfnm{Yiliang}\binits{Y.}} \AND
  \bauthor{\bsnm{Bu},~\bfnm{Zhiqi}\binits{Z.}}
(\byear{2021}).
\btitle{Efficient designs of slope penalty sequences in finite dimension}.
In \bbooktitle{International Conference on Artificial Intelligence and
  Statistics}
\bpages{3277--3285}.
\bpublisher{PMLR}.
\end{binproceedings}
\endbibitem

\bibitem[\protect\citeauthoryear{Zou and Hastie}{2005}]{zou2005regularization}
\begin{barticle}[author]
\bauthor{\bsnm{Zou},~\bfnm{Hui}\binits{H.}} \AND
  \bauthor{\bsnm{Hastie},~\bfnm{Trevor}\binits{T.}}
(\byear{2005}).
\btitle{Regularization and variable selection via the elastic net}.
\bjournal{Journal of the royal statistical society: series B (statistical
  methodology)}
\bvolume{67}
\bpages{301--320}.
\end{barticle}
\endbibitem

\end{thebibliography}

	\clearpage
	\appendix
	\section{List of functions}
	
	\begin{table}[H]
		\centering
		\begin{tabular}{|c|c|c|c|}
			\hline
			& Input &Output& Reference \\\hline
			
			$\widehat\bet$  & $\X,\y,\blam$ & general/2-level SLOPE minimizer& \Cref{eq:SLOPE,eq:K-level} \\
			
			$\bet^k$  & $\X,\y,\blam$ & general/2-level SLOPE iterates by FISTA& \Cref{eq:fista} \\       
			
			$\bet$  & $\Pi$ & true signal of linear regression& $\y=\X\bet+\w$ \\       
			
			$\blam$  & $\Lambda$ & general/2-level SLOPE penalty  & \Cref{eq:asymptotic lambda}\\
			
			$\eta_\text{SLOPE}$  & $\boldsymbol{v},\thet$ & SLOPE proximal operator  & \Cref{eq:SLOPE prox}\\
			
			$\eta_{V,\Theta}$  & $V,\Theta$ & SLOPE limiting scalar function  & \Cref{eq:limiting scalar}\\
			
			$\alpha$  & $(\Pi,\Lambda)$ or $(\pi,\A)$ & zero-threshold & \Cref{def: zero threshold}\\
			
			$\A$&$\Pi,\Lambda,\tau,\delta$& normalized penalty&\Cref{eq:calibration}\\
			
			$\tau$&$\Pi,\A,\delta,\sigma$&state evolution&\Cref{eq:state evolution}\\
			
			$\pi,\pi^*$&$\Pi,\tau$&normalized prior&\Cref{eq:define pi}\\
			
			$\TPP,\FDP$&$(\Pi,\Lambda)$ or $(\pi,\A)$&asymptotic TPP, FDP&\Cref{eq:tppfdp comp}\\
			
			$\min\FDP$&$\alpha^*,\TPP,\epsilon$&minimum $\FDP$ over $(\Pi,\Lambda)$&\Cref{eq:fdp is function of tpp}\\
			
			$\alpha^*$&$\delta,\epsilon,\TPP$&maximum zero-threshold&\Cref{eq:all prior max zero thres}\\
			
			$\Aeff$&$\pi,\alpha_1,\alpha_2,s,h$& 2-level effective penalty&\Cref{eq: 2level Aeff}\\
			
			$h,q_1,q_2$&$\pi,\alpha_1,\alpha_2,s$& quantities for $\Aeff$&\Cref{lem:q1q2h}\\
			
			
			$\rho$&$t_1,t_2,\TPP,\alpha_1,\alpha_2,$&probability of $\pi_{\min}$ being $t_1$ if non-zero&\Cref{eq:compute p}\\
			
			$\pi_{\min}$&$t_1,t_2,\rho,\epsilon$&optimal $\pi$, i.e., $\text{argmin}_\pi\alpha(\pi,\A)$ &\Cref{eq:three point pi}\\

			$F[\pi]$&$\pi,\alpha_1,\alpha_2,s,\Aeff$&quantity for state evolution&\Cref{eq:state evolution constraint}\\
			\hline 
		\end{tabular}
		\caption{List of notations used throughout this paper.}
		\label{tab:notations}
	\end{table}

	\section{Proofs}
	\label{app:proofs}
	
	\subsection{Proof of \Cref{lem:1 flat}} \label{app:lem1proof}
	
	\begin{proof}
		While the lemma states a generalized result for the $K$-level SLOPE, we present the proof for the 2-level SLOPE because the same analysis readily generalizes.
		
		Notice that the SLOPE minimizer satisfies
		$$\widehat\bet=\eta_\text{SLOPE}\left(\widehat\bet
		+ t_\infty \X^\top(\y-\X\widehat\bet);t_\infty \blam \right).$$
		This convergence is guaranteed for such a convex problem under proper conditions on the sequence of step sizes \{$t_k$\} (see Section 2 in \cite{bogdan2015slope}). 
		Given that $\widehat\bet=\eta_\text{SLOPE}(\bm v;\thet)$ for some $(\bm v,\thet)$, it suffices to prove that, for any $(\bm v,\thet)$ subject to 2-level penalty, the elements of $|\eta_\text{SLOPE}(\bm v;\thet)|$ share at most one non-zero value.
		
		The proof relies on how the SLOPE proximal operator is computed, for which we leverage the pool adjacent violators algorithm (PAVA) and \cite[Algorithm 3]{bogdan2015slope}. These algorithms provably solve \eqref{eq:SLOPE prox} in five steps; outlined in \Cref{alg:PAVA} below and visualized in \Cref{fig:normalized PAVA}. 
		
		\begin{algorithm}[!htb]
			\caption{PAVA algorithm to compute the proximal operator $\eta_\text{SLOPE}(\bm v;\thet)$}
			\begin{algorithmic}[1]
				\State \textbf{Sorting: }Sort $|\bm v|$ in increasing order, returning $\operatorname{sort}(|\boldsymbol{v}|)$;
				\State\textbf{Differencing: }Calculate a difference sequence, $\boldsymbol{V}$, defined as
				$
				\boldsymbol{V} :=\operatorname{sort}(|\boldsymbol{v}|)-\thet;
				$
				\State\textbf{Averaging: }Repeatedly average out strictly decreasing sub-sequences in $\boldsymbol{V}$ until none remain. We refer to the increasing sequence after all the averaging as $\operatorname{AVE}(\boldsymbol{V})$, and reassign
				$
				\boldsymbol{V}=\operatorname{AVE}(\boldsymbol{V});
				$
				\State \textbf{Truncating: }Set negative values in the difference sequence to 0 and reassign
				$
				\boldsymbol{V}=\max (\boldsymbol{V}, 0) ;
				$
				\State \textbf{Restoring: }Restore the order and the sign of $\boldsymbol{v}$ from Step 1 to $\boldsymbol{V}$. 
			\end{algorithmic}
			\label{alg:PAVA}
		\end{algorithm}
		
		
		We will show that for any $\bm v\in \R^p$ and any 2-level penalty $\thet\in\R^p$ as in \Cref{eq:K-level SLOPE seq}, $\{\eta_\text{SLOPE}(\bm v;\thet)\}_\text{shared}$ contains either \textbf{(I.1)} only one element and it is zero, \textbf{(I.2)} no elements, \textbf{(II.1)} only one element and it is non-zero, or \textbf{(II.2)} two elements, one that is non-zero and one that is zero. Without loss of generality, we assume $\bm v$ is sorted, strictly increasing, and non-negative, i.e.,\ $\bm v=\operatorname{sort}(|\boldsymbol{v}|)$, so as to omit Step 1 and 5 for the ease of presentation.
		
		In Step 2 of PAVA (the second row of \Cref{fig:normalized PAVA}), the 2-level penalty $\thet$ is subtracted from the monotonic sequence $\bm v$, resulting in two pieces of increasing sub-sequences:
		\begin{align*}
			\thet=\langle\theta_1,\theta_2,s\rangle\Longrightarrow
			\begin{cases}
				[\bm v-\thet]_i=v_{i}-\theta_2\leq[\bm v-\thet]_{i+1}=v_{i+1}-\theta_2, &\quad \forall i<(1-s)p,
				\\
				[\bm v-\thet]_i=v_{i}-\theta_1\leq[\bm v-\thet]_{i+1}=v_{i+1}-\theta_1, &\quad \forall i>(1-s)p.
			\end{cases}
		\end{align*}
		But at $i=(1-s)p$, there are two possibilities:
		
		\textbf{(I)} $[\bm v-\thet]_{(1-s)p}=v_{(1-s)p}-\theta_2\leq [\bm v-\thet]_{(1-s)p+1}=v_{(1-s)p+1}-\theta_1$, i.e.,\ the overall $\R^p$ sequence $\bm v-\thet$ is still monotonically increasing; 
		
		\textbf{(II)} $[\bm v-\thet]_{(1-s)p}=v_{(1-s)p}-\theta_2 > [\bm v-\thet]_{(1-s)p+1}=v_{(1-s)p+1}-\theta_1$, i.e.,\ a vertical drop exists and the overall $\R^p$ sequence $\bm v-\thet$ is not monotonically decreasing.
		
		
		For possibility \textbf{(I)}, Step 3 is not performed as there is no strictly decreasing sub-sequences. For possibility \textbf{(II)}, Step 3 (the third row of \Cref{fig:normalized PAVA}) averages the strictly decreasing sub-sequence between indices $(1-s)p$ and $(1-s)p+1$. After the averaging, the overall sequence may still be non-increasing. Therefore the averaging is repeated on the new (and longer) sub-sequence until the flattened region is larger than its left endpoint and smaller than its right endpoint, i.e.,\ until the overall sequence is monotonically increasing. Because there is only one vertical drop at $(1-s)p$ and it is easy to prove by induction that the flattened region always covers $(1-s)p$, we obtain that the overall sequence has exactly one flattened region after Step 3.
		
		Denoting the flattened region as $h$, Step 4 (the fourth row of \Cref{fig:normalized PAVA}) may either truncate this flattened region to zero, if $h\leq 0$ and $\min(\bm v-\thet)<0$, leading to case \textbf{(I.1)}; or introduce a second flattened region at zero, if $h>0$ and $\min(\bm v-\thet)<0$, leading to case \textbf{(II.2)}; or does nothing if the sequence is all non-negative, i.e.,\ $h>0$ and $\min(\bm v-\thet)\geq 0$, leading to case \textbf{(II.1)}. Finally, if there is no flattened region, then we have case \textbf{(I.2)} if $\min(\bm v-\thet)>0$, or case \textbf{(I.1)} if $\min(\bm v-\thet)\leq 0$.
		
		Therefore, we have proved the four cases \textbf{(I.1)}, \textbf{(I.2)}, \textbf{(II.1)} and \textbf{(II.2)}, which indicate that $\{\eta_\text{SLOPE}(\bm v;\thet)\}_\text{shared}$ has at most 1 non-zero element, i.e.,\
		$$\left|\{h\in\R^+: h\in\{\eta_\text{SLOPE}(\bm v;\thet)\}_\text{shared}\}\right|\leq 1$$
		In other words, the SLOPE solution shares at most 1 non-zero magnitude.


	\end{proof}
	
	We give some concrete examples of the four cases analyzed in the proof:
	\begin{itemize}
		\item  \textbf{(II.1)} sharing one non-zero element: $\eta_\text{SLOPE}((4,-4,2.5);(4,2,2))=(1,-1,0.5)$
		\item \textbf{(II.2)} sharing both a non-zero and a zero magnitude: $\eta_\text{SLOPE}((4,-4,2,1);(4,2,2,2))=(1,-1,0,0)$
		\item \textbf{(I.1)} sharing only zero elements: $\eta_\text{SLOPE}((4,-1,0.5);(3,2,2))=(1,0,0)$
		\item \textbf{(I.2)} not sharing any elements: $\eta_\text{SLOPE}((4,-3,2);(3,2.5,2.5))=(1,-0.5,0)$
	\end{itemize}

	\subsection{Proof of \Cref{thm:1 point mass}}
	\begin{proof}
		The result that the empirical distribution $\widehat\bet$ for 2-level SLOPE weakly converges to $\widehat\Pi=\eta_{\Pi+\tau Z,\A\tau}(\Pi+\tau Z)$ follows directly from \cite[Theorem 3]{bu2019algorithmic} and \cite[Theorem 1]{hu2019asymptotics}, which is also presented in \cite[Section 3]{bu2021characterizing} for the general SLOPE.
		
		We now study the non-zero point mass, starting with the derivation of a general SLOPE limiting scalar function by \cite[Proposition 1]{hu2019asymptotics}, which solves 
		\[
		\eta_{V,\Theta}=\text{argmin}_{g \in \mathcal{I}} \frac{1}{2} \mathbb{E}_{V} \big[ (V - g(V))^2 \big] + \int_0^1 F_{\Theta}^{-1}(u) F_{|g(V)|}^{-1}(u) \,du,
		\]
		where 
		\[
		\mathcal{I} := \{ g(v) \mid g(v) \text{ is odd, non-decreasing, and 1-Lipschitz} \}.
		\]
		
		Here $V,\Theta$ are the same as in \eqref{eq:limiting scalar}, $F$ is the cumulative distribution function and $F_X^{-1}(p):=\inf\{x:\PP(X\leq x)\geq p\}$ is the quantile function. For 2-level SLOPE, $\Theta$ takes two values $\theta_1>\theta_2$, split at a ratio $s$. Hence
		\begin{align*}
			&\frac{1}{2} \mathbb{E}_{V} \big[ (V - g(V))^2 \big] + \int_0^1 F_{\Theta}^{-1}(u) F_{|g(V)|}^{-1}(u) \,du
			\\
			=&\frac{1}{2} \mathbb{E}_{V} \big[ (V - g(V))^2 \big] + \int_0^{1-s} F_{\Theta}^{-1}(u) F_{|g(V)|}^{-1}(u) \,du+ \int_{1-s}^1 F_{\Theta}^{-1}(u) F_{|g(V)|}^{-1}(u) \,du
			\\
			=&\frac{1}{2} \mathbb{E}_{V} \big[ (V - g(V))^2 \big] + (1-s) \theta_2\E_{V}\left(|g(V)| \Big||g|<|G|_{1-s}\right)+ s\theta_1\E_{V}\left(|g(V)| \Big||g|>|G|_{1-s}\right) 
		\end{align*}
		where we have used that integral of quantile function is conditional expectation, and denoted $|G|_{1-s}$ as the $(1-s)$ quantile of $|g(V)|$.
		
		With the law of total expectation, we have
		\begin{align*}
			&\frac{1-s}{2} \mathbb{E}_{V}\left[(V - g(V))^2  + 2 \theta_2|g| \Big||g|<|G|_{1-s}\right]
			+ \frac{s}{2}\mathbb{E}_{V}\left[(V - g(V))^2  + 2 \theta_1|g| \Big||g|>|G|_{1-s}\right]
		\end{align*}
		which reduces to, up to constant terms, 
		\begin{align*}
			\frac{1-s}{2} \mathbb{E}_{V} \left[ (|V| -\theta_2- g(|V|))^2\Big||g|<|G|_{1-s} \right] + \frac{s}{2} \mathbb{E}_{V} \left[ (|V| -\theta_1- g(|V|))^2\Big||g|>|G|_{1-s} \right]
		\end{align*}
		To see this, we see the difference in the first term is 
		\begin{align*}
			&\mathbb{E}_{V}\left[(|V| - g(|V|))^2  + 2 \theta_2|g| -(|V| -\theta_2- g(|V|))^2\Big||g|<|G|_{1-s} \right]
			\\
			=&\theta_2\mathbb{E}_{V}\left[(2|V| - 2|g|-\theta_2)  + 2 |g|\Big||g|<|G|_{1-s} \right]
			\\
			=&\theta_2\mathbb{E}_{V}\left[2|V| -\theta_2\Big||g|<|G|_{1-s} \right]
			=\theta_2\mathbb{E}_{V}\left[2|V| -\theta_2\Big||V|<|V|_{1-s} \right]
		\end{align*}
		which is independent of $g$ that we are optimizing, hence it is a constant term. Here we denote $|V|_{1-s}$ as the $(1-s)$ quantile of $|V|$.
		
		Denoting $f(|V|,\theta_i)=|V|-\theta_i$, we have
		\begin{align}
			\frac{1-s}{2} \mathbb{E}_{V} \left[ (f(|V|,\theta_2)- g(|V|))^2\Big||V|<|V|_{1-s} \right] + \frac{s}{2} \mathbb{E}_{V} \left[ (f(|V|,\theta_1)- g(|V|))^2\Big||V|>|V|_{1-s} \right]\nonumber
			\\
			= \frac{1}{2}\int_{|v|<|V|_{1-s}} (f(|v|,\theta_2)- g(|v|))^2 \rho(v)dv + \frac{1}{2}\int_{|v|>|V|_{1-s}}(f(|v|,\theta_1)- g(|v|))^2\rho(v)dv
			\label{eq:two integrals}
		\end{align}
		where $\rho(v)$ is the probability density function of $V$. 
		
		To minimize \eqref{eq:two integrals} up to arbitrary precision, we discretize the integrals by the Euler's finite difference. For each $p$, we define $v_i=V_{i/p}$ for $1\leq i\leq p$, where $V_{i/p}$ is the $(i/p)$-th quantile of $V$, then as $p\to\infty$:
		$$\int_{|v|<|V|_{1-s}} (f(|v|,\theta_2)- g(|v|))^2 \rho(v)dv\leftarrow \frac{1}{p}\sum_{i:|v_i|<|V|_{1-s}}(f(|v_i|,\theta_2)-g(|v_i|))^2$$
		and similarly
		$$\int_{|v|>|V|_{1-s}} (f(|v|,\theta_1)- g(|v|))^2 \rho(v)dv\leftarrow \frac{1}{p}\sum_{i:|v_i|>|V|_{1-s}}(f(|v_i|,\theta_1)-g(|v_i|))^2$$
		
		\if
		To minimize \eqref{eq:two integrals} up to arbitrary precision, we discretize the integrals by the Euler's finite difference: as $\Delta v\to 0$ and $m\to\infty$,
		$$\int_{|v|<|V|_{1-s}} (f(|v|,\theta_2)- g(|v|))^2 \rho(v)dv\leftarrow \sum_{v_i}(f(|v_i|,\theta_2)-g(|v_i|))^2\rho(v_i)\cdot\Delta v$$
		where $v_i\in\{-|V|_{1-s},-|V|_{1-s}+\Delta v,...,|V|_{1-s}-2\Delta v,|V|_{1-s}-\Delta v,|V|_{1-s}\}$, and similarly
		$$\int_{|v|>|V|_{1-s}} (f(|v|,\theta_1)- g(|v|))^2 \rho(v)dv\leftarrow \sum_{v_i}(f(|v_i|,\theta_1)-g(|v_i|))^2\rho(v_i)\cdot\Delta v$$
		where $v_i\in\{-|V|_{1-s}-m\Delta v,-|V|_{1-s}-(m-1)\Delta v,...,-|V|_{1-s}\}\cup\{|V|_{1-s},|V|_{1-s}+\Delta v,...,|V|_{1-s}+m\Delta v\}$. Leveraging that $g$ is odd, i.e., $g(-v)=-g(v)$, we will minimize \jk{Why is $\rho(v_i)$ multiplying  the sum? Shouldn't $ \Delta v_i\rho(v_i) $ be multiplying each term?}\textcolor{blue}{ZB: $\Delta v$ is a constant so the following is an equivalent minimization problem.}
		\begin{align*}
			&\sum_{0<v_i<|V|_{1-s}}\rho(v_i)\cdot(v_i-\theta_2-g(v_i))^2+\sum_{-|V|_{1-s}<v_i<0}\rho(v_i)\cdot(-v_i-\theta_2+g(v_i))^2
			\\
			&+\sum_{v_i>|V|_{1-s}}\rho(v_i)\cdot(v_i-\theta_1-g(v_i))^2+\sum_{v_i<-|V|_{1-s}}\rho(v_i)\cdot(-v_i-\theta_1+g(v_i))^2
		\end{align*}
		\fi
		
		To this end, we will formulate an isotonic regression problem. Firstly, we denote
		\begin{align*}
			y_i=
			\begin{cases}
				v_i+\theta_1 & \text{if }v_i<-|V|_{1-s}
				\\
				v_i+\theta_2 & \text{if }-|V|_{1-s}<v_i<0
				\\
				v_i-\theta_2 & \text{if }0<v_i<|V|_{1-s}
				\\
				v_i-\theta_1 & \text{if }v_i>|V|_{1-s}
			\end{cases}
		\end{align*}
		where we have leveraged that $g$ is odd, i.e., $g(-v)=-g(v)$, and hence for $v<0$, we get $(f(|v|,\theta)-g(|v|))^2=(|v|-\theta-g(|v|))^2=(-v-\theta+g(v))^2=(v+\theta-g(v))^2$.
		
		Secondly, up to a constant factor $1/p$, we minimize over $g_i:=g(v_i)$ by
		$$\sum_{i} (y_i-g_i)^2 \text{ s.t. $g$ is non-decreasing}.$$
		
		
		This isotonic regression can be readily solved by the PAVA algorithm that iteratively averages the increasing subsequences of $\{y_i\}$. We note there are three vertical drops (where the increasing sub-sequences exist) at $-|V|_{1-s}, 0, |V|_{1-s}$ respectively. By the proof of \Cref{lem:1 flat}, the averaging will result in three flattend regions in the sequence $\{g_i\}$. Given that $g$ is odd, the height of the flattened region around $0$ must be 0, since $g(0)=0$; the height of the flattened region around $|V|_{1-s}$ must equal the negative of that around $-|V|_{1-s}$, which we denote as an unknown $h'$. Therefore, we have
		\begin{align*}
			g(|v|)=
			\begin{cases}
				0 & \text{ if } |v|<|V|_{1-s}-a, |v|<\theta_2
				\\
				|v|-\theta_2 & \text{ if } |v|<|V|_{1-s}-a, |v|>\theta_2
				\\
				h' & \text{ if } |v|>|V|_{1-s}-a, |v|<|V|_{1-s}+b
				\\
				0 & \text{ if } |v|>|V|_{1-s}+b, |v|<\theta_1
				\\
				|v|-\theta_1 & \text{ if } |v|>|V|_{1-s}+b, |v|>\theta_1
			\end{cases}
		\end{align*}
		and generally we have, 
		\begin{align*}
			g(v)=
			\begin{cases}
				\eta_\text{soft}(v;\theta_2) & \text{ if } |v|<|V|_{1-s}-a
				\\
				\text{sign}(v)h' & \text{ otherwise }
				\\
				\eta_\text{soft}(v;\theta_1) & \text{ if } |v|>|V|_{1-s}+b
			\end{cases}
		\end{align*}
		
		
		Finally, we check whether $g$ is 1-Lipschitz, odd, and non-decreasing. Indeed, it is 1-Lipschitz because soft-thesholding function and constant function are 1-Lipschitz; it is odd because soft-thresholding function and sign function are odd; it is obviously non-decreasing by construction.
		
		Note this result recovers \eqref{eq:2-level limiting} with a substitution of $V\to \pi+Z$, $\Theta\to\A$, and $h'\to\max\{h,0\}$ where $(a,b,h)$ can be determined in \Cref{lem:q1q2h}.

		\if
		If we do not satisfy the monotonicity constraint on $g$ imposed by $\mathcal{I}$, then 
		\begin{align*}
			g_\text{infeasible}(v)=
			\begin{cases}
				\text{sign}(v)\max\{f(|v|,\theta_2),0\}=\eta_\text{soft}(v;\theta_2) & \text{ if } |v|<|V|_{1-s}
				\\
				\text{sign}(v)\max\{f(|v|,\theta_1),0\}=\eta_\text{soft}(v;\theta_1) & \text{ if } |v|>|V|_{1-s}
			\end{cases}
		\end{align*}
		is a solution that optimizes each of the two integrals above separately (c.f., Example 1 in \cite{hu2019asymptotics} that solves the LASSO). However, it is generally infeasible (i.e., $g_\text{infeasible}\not\in\mathcal{I}$) because it is not monotonic, due to the vertical drop at $|v|=|V|_{1-s}$. See the second row of \Cref{fig:normalized PAVA} for a visualization. Therefore, we seek a non-decreasing projection of $g_\text{infeasible}$ which is the continuous version of the iterative averaging in the PAVA algorithm or the isotonic regression. 
		
		To flatten the vertical drop of $g_\text{infeasible}$ at $|v|=|V|_{1-s}$, we write the candidate solution
		\begin{align*}
			g_\text{feasible}(v)=
			\begin{cases}
				\eta_\text{soft}(v;\theta_2) & \text{ if } |v|<|V|_{1-s}-a
				\\
				\text{sign}(v)\max\{h,0\} & \text{ otherwise }
				\\
				\eta_\text{soft}(v;\theta_1) & \text{ if } |v|>|V|_{1-s}+b
			\end{cases}
		\end{align*}
		for some unknown constants $a,b,h$. By continuity, we have $|V|_{1-s}-a-\theta_2=h$ if $|V|_{1-s}-a>\theta_2$, and similarly $|V|_{1-s}+b-\theta_1=h$ if $|V|_{1-s}+b>\theta_1$. In fact, $g_\text{feasible}$ satisfies the complementary slackness condition: when $g_\text{feasible}$ is strictly increasing at $v$, i.e., the constraint of $\mathcal{I}$ is binding, $g_\text{feasible}(v)=g_\text{infeasible}(v)$; otherwise, $g_\text{feasible}(v)$ remains constant.
		
		We note that it is equivalent to rewrite
		\begin{align*}
			g_\text{feasible}(v)=
			\begin{cases}
				\eta_\text{soft}(v;\theta_2) & \text{ if } |v|<\theta_2+h
				\\
				\text{sign}(v)\max\{h,0\} & \text{ otherwise }
				\\
				\eta_\text{soft}(v;\theta_1) & \text{ if } |v|>\theta_1+h
			\end{cases}
		\end{align*}
		where $h$ can be determined in the same way as \Cref{lem:q1q2h} with a substitution of $V\to \Pi+\tau Z$ and $\Theta\to\A$. Hence, $g_\text{feasible}$ has at most one non-zero shared magnitude: if $h>0$, it has exactly one; else, it has none.
		\fi

		\if
		Without loss of generality, we consider $v\geq 0$ so that $\text{sign}(v)=1$ and $|v|=v$. 
		Note then the condition on $g(|v|)$ can be translated to $v$ (denoting $|V|_{1-s}$ as the quantile): \jk{Where did $\text{sign}(v)$ go below?}\textcolor{blue}{ZB: assumed $v\geq 0$.}
		\begin{align*}
			g_\text{infeasible}(v)=
			\begin{cases}
				\max\{v-\theta_2,0\} & \text{ if } v<|V|_{1-s}
				\\
				\max\{v-\theta_1,0\} & \text{ if } v>|V|_{1-s}
			\end{cases}
		\end{align*}
		{\color{blue}
			For the simplicity of presentation, we analyze the case when $|V|_{1-s}>\theta_1>\theta_2$ (note that the cases $\theta_1>|V|_{1-s}>\theta_2$ and $\theta_1>\theta_2>|V|_{1-s}$ can be analyzed in a similar manner), where
			\begin{align*}
				g_\text{infeasible}(v)=
				\begin{cases}
					0 & \text{ if } v<\theta_2
					\\
					v-\theta_2 & \text{ if } \theta_2<v<|V|_{1-s}
					\\
					v-\theta_1 & \text{ if } v>|V|_{1-s}
				\end{cases}
			\end{align*}

			To flatten the vertical drop of $g_\text{infeasible}$ at $v=|V|_{1-s}$, we write the candidate solution
			\begin{align*}
				g_\text{feasible}(v)=
				\begin{cases}
					0 & \text{ if } v<\theta_2
					\\
					v-\theta_2 & \text{ if } \theta_2<v<|V|_{1-s}-a
					\\
					c & \text{ otherwise }
					\\
					v-\theta_1 & \text{ if } v>|V|_{1-s}+b
				\end{cases}
			\end{align*}
			for some constants $a,b,c$. In fact, $g_\text{feasible}$ satisfies the complementary slackness condition: when $g_\text{feasible}$ is strictly increasing at $v$, i.e., the constraint of $\mathcal{I}$ is binding, $g_\text{feasible}(v)=g_\text{infeasible}(v)$; otherwise, $g_\text{feasible}(v)$ remains constant. 
			
			Substituting into \eqref{eq:two integrals}, we have
			\begin{align*}
				&\frac{1-s}{2\PP(v<|V|_{1-s})}\int_0^{|V|_{1-s}}\left(v-\theta_2-g_\text{feasible}
				(v)\right)^2 \rho(v)dv
				\\
				+&\frac{s}{2\PP(v>|V|_{1-s})}\int_{|V|_{1-s}}^\infty\left(v-\theta_1-g_\text{feasible}
				(v)\right)^2 \rho(v)dv
				\\
				=&\frac{1-s}{2\PP(v<|V|_{1-s})}\int_0^{\theta_2}\left(v-\theta_2\right)^2 \rho(v)dv+\frac{1-s}{2\PP(v<|V|_{1-s})}\int_{|V|_{1-s}-a}^{|V|_{1-s}}\left(v-\theta_2-c\right)^2 \rho(v)dv
				\\
				+&\frac{s}{2\PP(v>|V|_{1-s})}\int_{|V|_{1-s}}^{|V|_{1-s}+b}\left(v-\theta_1-c\right)^2 \rho(v)dv
			\end{align*}
			
			To minimize this over $a,b,c$, we denote that,
			\begin{align*}
				Q(a,b,c):=&\frac{1-s}{2\PP(v<|V|_{1-s})}\int_{|V|_{1-s}-a}^{|V|_{1-s}}\left(v-\theta_2-c\right)^2 \rho(v)dv
				\\
				&+\frac{s}{2\PP(v>|V|_{1-s})}\int_{|V|_{1-s}}^{|V|_{1-s}+b}\left(v-\theta_1-c\right)^2 \rho(v)dv
			\end{align*}
			and take the partial derivatives $\frac{\partial Q}{\partial a}=0$ and $\frac{\partial Q}{\partial b}=0$. We obtain the optimal $a,b$ conditioning on $c$ as
			$$a=|V|_{1-s}-c-\theta_2, b=c+\theta_1-|V|_{1-s}$$
			which enforces the continuity of $g_\text{feasible}$ at $|V|_{1-s}-a$ and at $|V|_{1-s}+b$. Finally, the optimal $c$ can be derived by solving $\frac{\partial Q}{\partial c}\big|_{a=|V|_{1-s}-c-\theta_2, b=c+\theta_1-|V|_{1-s}}=0$.
		}
		
		It is not hard to see \jk{I don't immediately see this. Could you add more details?} that the non-decreasing projection is
		\begin{align*}
			g_\text{feasible}(v)=
			\begin{cases}
				\max\{v-\theta_2,0\} & \text{ if } v<|V|_{1-s}-\frac{\theta_1-\theta_2}{2}
				\\
				\max\{|V|_{1-s}-\theta_1/2-\theta_2/2,0\} & \text{ otherwise}
				\\
				\max\{v-\theta_1,0\} & \text{ if } v>|V|_{1-s}+\frac{\theta_1-\theta_2}{2}
			\end{cases}
		\end{align*}
		which has at most one non-zero shared magnitude that is $|V|_{1-s}-\theta_1/2-\theta_2/2$.
		\fi

	\end{proof}

	\subsection{Proof of \Cref{thm:complete trade-off}}
	\begin{proof}
		Our proof is based on the homotopy result in \cite{wang2020complete} that is also used to derive the achievable $(\TPP,\FDP)$ of the general SLOPE (c.f., Proposition D.3 in \cite{bu2021characterizing}). We briefly describe the idea of homotopy. Consider a continuous transformation $ f $ moving from our trade-off curve $q_\text{2-level}(\TPP)$ to the horizontal line $\FDP=1-\epsilon$. During the movement, $f$ sweeps out a region of $(\TPP,\FDP)$ during the transformation, so that each $(\TPP,\FDP)$ is realizable by some $f$. \begin{lemma}[Lemma 3.7 in \cite{wang2020complete}]\label{lem:wanghua}
			If a continuous curve is parameterized by $ f:[0,1] \times[0,1] \rightarrow \mathbb{R}^{2} $ and if the four curves
			\begin{itemize}
				\item$\mathcal{C}_{1}=\{f(u, 0): 0 \leq u \leq 1\}$,
				\item$\mathcal{C}_{2}=\{f(u, 1): 0 \leq u\leq 1\}$,
				\item$\mathcal{C}_{3}=\{f(0, s): 0 \leq s \leq 1\}$,
				\item$\mathcal{C}_{4}=\{f(1, s): 0 \leq s \leq 1\}$,  
			\end{itemize}
			join together as a simple closed curve, $\mathcal{C}:=\mathcal{C}_{1} \cup \mathcal{C}_{2} \cup \mathcal{C}_{3} \cup \mathcal{C}_{4}, $ then $ \mathcal{C} $ encloses an interior area $ \mathcal{D}, $ and $ \forall(x, y) \in \mathcal{D}, \exists(u, s) \in [0,1] \times[0,1] $ such that $ f(u, s)=(x, y) . $ In other words, every point inside the region $\mathcal{D}$ enclosed by curve $ \mathcal{C}$ is realizable by some $ f(u, s)$.
		\end{lemma}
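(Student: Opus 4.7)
My plan is to prove this classical topological fact by combining the Jordan curve theorem with a standard winding-number (degree) argument via contradiction. The hypothesis is that $f: [0,1]^2 \to \mathbb{R}^2$ is continuous and its restriction to the boundary $\partial[0,1]^2$ traces out a simple (Jordan) closed curve $\mathcal{C}$. By the Jordan curve theorem, $\mathbb{R}^2 \setminus \mathcal{C}$ decomposes into a bounded component $\mathcal{D}$ (the ``interior'') and an unbounded component. We need to show that every $p \in \mathcal{D}$ lies in the image $f([0,1]^2)$.

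The approach is by contradiction: fix $p \in \mathcal{D}$ and suppose $p \notin f([0,1]^2)$. Then the radial projection
\begin{equation*}
g(u,s) := \frac{f(u,s) - p}{\lVert f(u,s) - p \rVert}
\end{equation*}
is well-defined and continuous as a map $g: [0,1]^2 \to S^1$. Since $[0,1]^2$ is contractible, $g$ is null-homotopic, and hence so is its restriction $g|_{\partial [0,1]^2}: S^1 \to S^1$; in particular, this restriction has winding number $0$ about $p$. On the other hand, the restriction $g|_{\partial[0,1]^2}$ is precisely the radial projection of the Jordan curve $\mathcal{C}$ from the interior point $p$, and so its winding number equals the winding number of $\mathcal{C}$ about $p$. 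A standard consequence of the Jordan curve theorem (equivalent to the fact that the winding number is locally constant on $\mathbb{R}^2 \setminus \mathcal{C}$ and takes the value $\pm 1$ on the bounded component) gives that this winding number is nonzero, yielding the required contradiction.

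To assemble this into a clean argument, I would proceed in the following order. First, I would verify that the four curves $\mathcal{C}_1,\ldots,\mathcal{C}_4$ indeed parametrize $\partial [0,1]^2$ continuously as a loop, so that $\mathcal{C} = f(\partial[0,1]^2)$ is the image of a continuous loop; the hypothesis that $\mathcal{C}$ is a simple closed curve then lets me invoke the Jordan curve theorem to obtain $\mathcal{D}$. Second, I would recall the winding-number characterization: for a Jordan curve $\mathcal{C}$ and $p \in \mathcal{D}$, the winding number $n(\mathcal{C}, p) = \pm 1$. Third, I would carry out the contradiction with $g$: the key observation is that $g$ extends continuously from $\partial[0,1]^2$ to all of $[0,1]^2$, which forces $g|_{\partial[0,1]^2}$ to be homotopic in $S^1$ to a constant map and therefore to have winding number $0$ in the sense of degree theory on $S^1 \to S^1$.

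The main obstacle is the rigorous bookkeeping between the two notions of winding number used -- the winding number of a planar loop around a point, and the degree of a map $S^1 \to S^1$ -- and in particular verifying that the contractibility of $[0,1]^2$ transfers through $g$ to null-homotopy of $g|_{\partial[0,1]^2}$. In practice, this is handled by either lifting $g$ along the covering map $\mathbb{R} \to S^1$, $x \mapsto e^{2\pi i x}$ (the absence of any zero of $f(u,s) - p$ guarantees a global continuous lift), or by invoking the functoriality of the fundamental group: $\pi_1([0,1]^2) = 0$, so $g_\ast$ kills the generator of $\pi_1(\partial[0,1]^2) \cong \mathbb{Z}$. Since the theorem is entirely topological and standard, I would likely cite \cite{wang2020complete} or a standard topology text for the Jordan curve theorem and the winding-number facts, and only write out the contradiction argument in detail.
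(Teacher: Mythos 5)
You should first note a structural point: the paper does not prove this statement at all. It is imported verbatim as Lemma 3.7 of \cite{wang2020complete} and used as a black box in the proof of \Cref{thm:complete trade-off}, so there is no in-paper argument to compare against; the relevant benchmark is the proof in the cited source, which is the same standard topological argument you give. Your proposal is correct in outline and is essentially the canonical proof of this fact: assume $p\in\mathcal{D}$ is missed by $f$, form the radial projection $g(u,s)=(f(u,s)-p)/\|f(u,s)-p\|$, use contractibility of $[0,1]^2$ (equivalently, $\pi_1([0,1]^2)=0$ together with functoriality, or a lift along $\mathbb{R}\to S^1$) to conclude the boundary loop has winding number $0$ about $p$, and contradict the fact that a Jordan curve has winding number $\pm 1$ about any point of its bounded complementary component. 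Your bookkeeping between the two notions of winding number is handled correctly, and your acknowledgment that the $\pm 1$ statement is a refinement of, not an immediate corollary to, the Jordan curve theorem is appropriate.

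One caveat deserves explicit attention if you write this out in full. The lemma's hypothesis, read literally, only says that the \emph{set} $\mathcal{C}_1\cup\mathcal{C}_2\cup\mathcal{C}_3\cup\mathcal{C}_4$ is a simple closed curve; your argument needs more, namely that the boundary loop $f|_{\partial[0,1]^2}$ is a Jordan \emph{parametrization} of $\mathcal{C}$ (injective, up to collapsing degenerate edges). With only the set-theoretic reading, a boundary loop that backtracks — e.g., traversing the top arc of a circle forward and back, then the bottom arc forward and back — has image equal to the full simple closed curve yet winding number $0$ about the center, and such a loop extends to a map of the square missing the center; the conclusion then fails. So the winding-number-$\pm 1$ step is exactly where injectivity of the boundary parametrization enters, and you should state it as a hypothesis on $f|_{\partial[0,1]^2}$ rather than on the image. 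Note also that in the application in \Cref{thm:complete trade-off}, $\mathcal{C}_4$ degenerates to a single point, so the parametrization is injective only after collapsing a constant edge; this is harmless (the loop is still homotopic in $\mathbb{R}^2\setminus\{p\}$ to a homeomorphic parametrization of $\mathcal{C}$), but it is worth a sentence in a careful write-up.
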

		
		We will prepare some notations before leveraging \Cref{lem:wanghua}. For $\sigma\neq \infty$, we denote $ (\TPP,\FDP) $ as functions of $ (\Pi, \Lambda, \sigma) $ and parameterize the 2-level penalty distribution $\Lambda_*(u)$ and the three-point prior distribution $\Pi_*(u)$ such that 
		\begin{align*}
			\TPP(\Pi_*(u),\Lambda_*(u),\sigma\neq\infty)&=u,
			\\
			\FDP(\Pi_*(u),\Lambda_*(u),\sigma\neq\infty)&=q_\text{2-level}(u).
		\end{align*}
		Such parameterization is valid due to the tightness of $q_\text{2-level}$ by \Cref{thm:trade-off}.
		
		Now we can define
		$$f(u,s)=\left(u,\FDP(\Pi_*(u),\Lambda_*(u),\text{tan}\frac{\pi s}{2})\right),$$
		and leverage \Cref{lem:wanghua} to analyze each of the four curves.
		\begin{itemize}
			\item $\mathcal{C}_1$ is $\FDP=q_\text{2-level}(\TPP)$, i.e., the 2-level SLOPE trade-off;
			\item $\mathcal{C}_2$ is $\FDP=1-\epsilon$, since the noise $\sigma=\infty$ when $s=1$, and no signal of $\Pi$ is extractable;
			\item $\mathcal{C}_3$ is $\TPP=0$;
			\item $\mathcal{C}_4$ is a single-point $(\TPP,\FDP)=(1,1-\epsilon)$, or a curve with no length.
		\end{itemize}
		
		Given that $\{\mathcal{C}_1,\mathcal{C}_2,\mathcal{C}_3,\mathcal{C}_4\}$ forms a closed curve, we have shown that $\mathcal{D}_{\epsilon,\delta}$ is achievable by omitting the single-point $\mathcal{C}_4$.
	\end{proof}
	
	\subsection{Proof of \Cref{lem:expand E}}
	\label{sec:proof56}
	\begin{proof}
		We first use the definition of the 2-level SLOPE limiting scalar function given in \eqref{eq:2-level limiting}, which is restated below for convenience, to expand $\E\{(\eta(t+Z;\Aeff(t+Z))-t)^2\}$. Recall,
		\begin{align*}
			\begin{split}
				\textbf{2-level SLOPE}
				\\
				\textbf{limiting scalar function}
			\end{split}
			:\, \, \eta_{\pi+Z,\A}(x)=
			\begin{cases}
				\eta_\text{soft}(x;\alpha_1) &\text{ if } \alpha_1+h < |x|,
				\\
				\text{sign}(x) \max\{h,0\} &\text{ if } \alpha_2+h<|x|<\alpha_1+h, \\
				\eta_\text{soft}(x;\alpha_2) &\text{ if } |x|<\alpha_2+h,
			\end{cases}
		\end{align*}
		Using the above, we have
		\begin{align}
			\E(\eta(t+Z;\Aeff)-t)^2&=\int\left(\eta(t+z;\Aeff(t+z))-t\right)^2\phi(z)dz \nonumber
			\\
			&=\int_{|t+z|>\alpha_1+h}\left(\eta_\text{soft}(t+z;\alpha_1)-t\right)^2\phi(z)dz
			\label{eq:first term of lengthy}
			\\
			&\quad +\int_{\alpha_2+h<|t+z|<\alpha_1+h}(h\cdot\text{sign}(t+z)-t)^2\phi(z)dz
			\label{eq:second term of lengthy}
			\\
			&\quad +\int_{\alpha_2<|t+z|<\alpha_2+h}(\eta_\text{soft}(t+z;\alpha_2)-t)^2\phi(z)dz
			\label{eq:third term of lengthy}
			\\
			&\quad +\int_{|t+z|<\alpha_2}(\eta_\text{soft}(t+z;\alpha_2)-t)^2\phi(z)dz
			\label{eq:fourth term of lengthy}
		\end{align}
		
		Next, we will expand each of \eqref{eq:first term of lengthy}, \eqref{eq:second term of lengthy}, \eqref{eq:third term of lengthy} and \eqref{eq:fourth term of lengthy} using the fact that the soft-thresholding function is defined as $\eta_\text{soft}(\boldsymbol{v};\theta)=\text{sign}(\boldsymbol{v})(|\boldsymbol{v}|-\theta)_+$ and some facts about truncated Gaussian expectations given in \Cref{fact:truncated normal}.

		To expand the first term notice that if $t+z>\alpha_1+h$ then $\eta_\text{soft}(t+z;\alpha_1) = t+z-\alpha_1$ and if $t+z<-\alpha_1-h$ then $\eta_\text{soft}(t+z;\alpha_1) = t+z+\alpha_1$.
		Hence, we have 
		\begin{align*}
			\text{\eqref{eq:first term of lengthy}} &= \int_{t+z>\alpha_1+h}(t+z-\alpha_1-t)^2\phi(z)dz+\int_{t+z<-\alpha_1-h}(t+z+\alpha_1-t)^2\phi(z)dz\\
			&=\int_{z-\alpha_1>h-t}(z-\alpha_1)^2\phi(z)dz+\int_{z+\alpha_1<-h-t}(z+\alpha_1)^2\phi(z)dz\\
			&=\E(N_1^2|N_1>h-t)\PP(N_1>h-t)+\E(N_2^2|N_2<-h-t)\PP(N_2<-h-t)\\
			&=(1-\Phi(\alpha_1+h-t))\E(N_1^2|N_1>h-t)+\Phi(-\alpha_1-h-t) \E(N_2^2|N_2<-h-t),
		\end{align*}
		where $N_1\sim \mathcal{N}(-\alpha_1,1)$ and $N_2\sim \mathcal{N}(\alpha_1,1)$. The expectation of the truncated normal distribution is given in \Cref{fact:truncated normal}, which leads to 
		\begin{align*}
			(1-\Phi(\alpha_1+h-t))\mathbb{E} (N_1^2\mid N_1>h-t)
			&=(\alpha_1^2+1)(1-\Phi(\alpha_1+h-t)) \\
			&\quad +(-\alpha_1+h-t)\phi(\alpha_1+h-t),
		\end{align*}
		and
		\begin{align*}
			\Phi(-\alpha_1-h-t) \mathbb{E} (N_2^2\mid N_2<-h-t)
			&=(\alpha_1^2+1)\Phi(-\alpha_1-h-t)\\
			&\quad -(\alpha_1-h-t)\phi(-\alpha_1-h-t)\,.
		\end{align*}

		The second term \eqref{eq:second term of lengthy} can be easily expanded as
		\begin{align*}
			\text{\eqref{eq:second term of lengthy}} &= \int_{\alpha_2+h<t+z<\alpha_1+h}(h-t)^2\phi(z)dz+\int_{-\alpha_1-h<t+z<-\alpha_2-h}(-h-t)^2\phi(z)dz\\
			&=(h-t)^2\left[\Phi(\alpha_1+h-t)-\Phi(\alpha_2+h-t)\right] \\
			&\quad +(h+t)^2\left[\Phi(-\alpha_2-h-t)-\Phi(-\alpha_1-h-t)\right].
		\end{align*}

		Next, to expand the third term \eqref{eq:third term of lengthy}, we notice that  if $-\alpha_2-h<z+t<-\alpha_2$, then $\eta_\text{soft}(t+z;\alpha_2) = t+z+\alpha_2$ and if $\alpha_2<z+t<\alpha_2+h$, then $\eta_\text{soft}(t+z;\alpha_2) = t+z-\alpha_2$. Using this we find
		\begin{align*}
			\text{\eqref{eq:third term of lengthy}} &= \int_{-\alpha_2-h<z+t<-\alpha_2}(t+z+\alpha_2-t)^2\phi(z)dz+\int_{\alpha_2<z+t<\alpha_2+h}(t+z-\alpha_2-t)^2\phi(z)dz\\
			&=\int_{-h-t<z+\alpha_2<-t}(z+\alpha_2)^2\phi(z)dz+\int_{-t<z-\alpha_2<h-t}(z-\alpha_2)^2\phi(z)dz\\
			&=\left[\Phi(-\alpha_2-t)-\Phi(-\alpha_2-h-t)\right]\mathbb{E}(N_3^2|-h-t<N_3<-t)\\
			&\quad +\left[\Phi(\alpha_2+h-t)-\Phi(\alpha_2-t)\right]\mathbb{E}(N_4^2|-t<N_4<h-t),
		\end{align*}
		where $N_3\sim \mathcal{N}(\alpha_2,1)$ and $N_4\sim \mathcal{N}(-\alpha_2,1)$. Again, following \Cref{fact:truncated normal}, we obtain 
		\begin{align*}
			&\left[\Phi(-\alpha_2-t)-\Phi(-\alpha_2-h-t)\right]\mathbb{E}(N_3^2|-h-t<N_3<-t)\\
			&=(1+\alpha_2^2)\left[\Phi(-\alpha_2-t)-\Phi(-\alpha_2-h-t)\right]\\
			&\quad +(\alpha_2-h-t)\phi(-\alpha_2-h-t)-(\alpha_2-t)\phi(-\alpha_2-t),
		\end{align*}
		and \begin{align*}
			&\left[\Phi(\alpha_2+h-t)-\Phi(\alpha_2-t)\right]\mathbb{E}(N_4^2|-t<N_4<h-t)
			\\
			&=(1+\alpha_2^2)\left[\Phi(\alpha_2+h-t)-\Phi(\alpha_2-t)\right]\\
			&\quad -(\alpha_2+t) \phi(\alpha_2-t)+(\alpha_2-h+t)\phi(\alpha_2+h-t).
		\end{align*}
		
		To expand the fourth term \eqref{eq:fourth term of lengthy}, notice that $\eta(t+z;\alpha_2)=0$ when $|t+z|<\alpha_2$. Therefore,
		\begin{align*}
			\text{\eqref{eq:fourth term of lengthy}} =\int_{-\alpha_2<t+z<\alpha_2}t^2\phi(z)dz
			=t^2\int_{-\alpha_2-t<z<\alpha_2-t}\phi(z)dz=t^2\left[\Phi(\alpha_2-t)-\Phi(-\alpha_2-t)\right].
		\end{align*}
		
	\end{proof}
	
	\section{Auxiliary results}
	\subsection{Facts about the expectation of a truncated normal distribution}
	\begin{fact}\label{fact:truncated normal}
		Consider $X\sim \mathcal{N}(\mu,1)$.
		\begin{enumerate}
			\item We have $\mathbb{E} (X\mid X>a)=\mu + \frac{\phi (\alpha )}{1-\Phi(\alpha)}$ and $\text{Var}(X\mid X>a)=\E(X^2|X>a)-\E(X|X>a)^2=1+\alpha \frac{\phi (\alpha )}{1-\Phi(\alpha)}-(\frac{\phi (\alpha )}{1-\Phi(\alpha)})^{2}$ where $\alpha =a-\mu $ 
			\begin{align*}
				\mathbb{E} (X^2\mid X>a)=\mu^2+1+ \frac{\alpha\phi(\alpha)}{1-\Phi(\alpha)}+2\mu \frac{\phi(\alpha)}{1-\Phi(\alpha)}\,.
			\end{align*}
			\item Similarly, by $\mathbb{E} (X\mid X<b)=\mu - {\frac {\phi (\beta)}{\Phi (\beta)}}
			$ and $\text{Var} (X\mid X<b)=1-\alpha {\frac {\phi (\beta)}{\Phi (\beta)}}-\left({\frac {\phi (\beta)}{\Phi (\beta)}}\right)^{2}$ where $\beta=b-\mu$, we have 
			$$\mathbb{E} (X^2\mid X<b)=\mu^2+1- \frac{\beta\phi(\beta)}{\Phi(\beta)}-2\mu \frac{\phi(\beta)}{\Phi(\beta)}\,.$$
			\item By the result that $\mathbb{E} (X\mid a<X<b)=\mu -{\frac {\phi (\beta )-\phi (\alpha )}{\Phi (\beta )-\Phi (\alpha )}}$
			and
			$\text{Var} (X\mid a<X<b)=1-{\frac {\beta \phi (\beta )-\alpha \phi (\alpha )}{\Phi (\beta )-\Phi (\alpha )}}-\left({\frac {\phi (\beta )-\phi (\alpha )}{\Phi (\beta )-\Phi (\alpha )}}\right)^{2}$, we can obtain that 
			\begin{align*}
				\mathbb{E}(X^2|a<X<b)=\mu^2+1-\frac{\phi(\beta)}{\Phi(\beta)-\Phi(\alpha)}(\beta+2\mu)+\frac{\phi(\alpha)}{\Phi(\beta)-\Phi(\alpha)}(\alpha+2\mu)\,.
			\end{align*}
		\end{enumerate}
	\end{fact}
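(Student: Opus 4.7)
The plan is to reduce all three parts to moments of the standard normal truncated on an interval, by the affine substitution $z = x-\mu$. Writing $X = \mu + Z$ with $Z \sim \mathcal{N}(0,1)$, the events $\{X>a\}$, $\{X<b\}$, $\{a<X<b\}$ translate to $\{Z>\alpha\}$, $\{Z<\beta\}$, $\{\alpha<Z<\beta\}$ respectively, where $\alpha = a-\mu$ and $\beta = b-\mu$. All the claims then follow from two elementary antiderivatives, namely
\begin{equation*}
\int z\,\phi(z)\,dz = -\phi(z) + C, \qquad \int z^{2}\,\phi(z)\,dz = -z\,\phi(z) + \Phi(z) + C,
\end{equation*}
the second of which comes from integration by parts using $z\phi(z)\,dz = -d\phi(z)$, together with the identity $\phi'(z) = -z\phi(z)$.

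Concretely, for part (1) I would compute
\begin{equation*}
\mathbb{E}(X \mid X > a)(1-\Phi(\alpha)) = \int_{\alpha}^{\infty}(\mu + z)\phi(z)\,dz = \mu(1-\Phi(\alpha)) + \phi(\alpha),
\end{equation*}
which yields the stated conditional mean upon dividing by $1-\Phi(\alpha)$. For the second moment I would expand $(\mu+z)^{2} = \mu^{2} + 2\mu z + z^{2}$ and apply the two antiderivatives above to get
\begin{equation*}
\int_{\alpha}^{\infty}(\mu+z)^{2}\phi(z)\,dz = \mu^{2}(1-\Phi(\alpha)) + 2\mu\,\phi(\alpha) + \bigl[1 - \Phi(\alpha) + \alpha\,\phi(\alpha)\bigr],
\end{equation*}
and dividing by $1-\Phi(\alpha)$ gives the claimed $\mathbb{E}(X^{2}\mid X>a)$. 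The variance formula is then obtained as $\mathbb{E}(X^{2}\mid X>a) - \mathbb{E}(X\mid X>a)^{2}$ after a short algebraic simplification using $\alpha = a-\mu$.

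Part (2) is handled by a symmetric computation on $(-\infty,\beta]$, where evaluating the antiderivative of $z^{2}\phi(z)$ at $-\infty$ contributes $0$ and at $\beta$ contributes $-\beta\phi(\beta) + \Phi(\beta)$; the sign flip relative to part (1) accounts for the minus signs in the stated expression. Part (3) combines the two by writing $\int_{\alpha}^{\beta} = \int_{-\infty}^{\beta} - \int_{-\infty}^{\alpha}$ (or equivalently evaluating the antiderivatives at both endpoints), so that
\begin{equation*}
\int_{\alpha}^{\beta} z^{2}\,\phi(z)\,dz = \bigl[-z\,\phi(z) + \Phi(z)\bigr]_{\alpha}^{\beta} = \Phi(\beta) - \Phi(\alpha) + \alpha\,\phi(\alpha) - \beta\,\phi(\beta),
\end{equation*}
and combining with the linear and constant pieces and dividing by $\Phi(\beta)-\Phi(\alpha)$ produces the stated $\mathbb{E}(X^{2}\mid a<X<b)$.

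There is no real obstacle here: the proof is a bookkeeping exercise in integration by parts on $\phi$ plus the substitution $z = x-\mu$. The only place that needs care is tracking signs and boundary contributions, especially in part (3) where two endpoint evaluations must be combined and then matched to the asymmetric form of the stated answer (note in particular the $+2\mu$ factors combine with $\alpha$ and $\beta$ in the same way because $2\mu\,\phi(z)$ integrates to $-2\mu\,\phi(z)$, giving symmetric contributions at both endpoints). If desired, the second-moment expressions can also be obtained in one line from the variance formulas provided, via $\mathbb{E}(X^{2}\mid\mathcal{A}) = \mathrm{Var}(X\mid\mathcal{A}) + \mathbb{E}(X\mid\mathcal{A})^{2}$, which serves as an independent sanity check.
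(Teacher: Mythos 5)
Your computations are correct, and your route differs from the paper's in a small but genuine way: the paper gives no integration at all for this Fact, instead quoting the standard truncated-normal mean and variance formulas and reading off each second moment via $\mathbb{E}(X^2\mid\mathcal{A})=\mathrm{Var}(X\mid\mathcal{A})+\mathbb{E}(X\mid\mathcal{A})^2$ — the one-line algebraic route you relegate to a sanity check. Your direct derivation from the two antiderivatives $\int z\phi(z)\,dz=-\phi(z)$ and $\int z^2\phi(z)\,dz=-z\phi(z)+\Phi(z)$ is fully self-contained (it proves the mean and variance formulas rather than citing them) and I verified each endpoint evaluation, including the boundary terms $z\phi(z)\to 0$ at $\pm\infty$ and the combination $\Phi(\beta)-\Phi(\alpha)+\alpha\phi(\alpha)-\beta\phi(\beta)$ in part (3). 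One bonus of your approach: it exposes a typo in the paper's statement of part (2), where the variance is written as $1-\alpha\,\phi(\beta)/\Phi(\beta)-\left(\phi(\beta)/\Phi(\beta)\right)^2$ but should read $1-\beta\,\phi(\beta)/\Phi(\beta)-\left(\phi(\beta)/\Phi(\beta)\right)^2$; your direct computation of $\mathbb{E}(X^2\mid X<b)$ confirms the displayed second-moment formula is nevertheless correct, so the typo is harmless downstream (only the second moments are used in the proof of Lemma 5.6). In short, the paper's route buys brevity at the cost of importing standard facts; yours buys verifiability at the cost of a page of bookkeeping, and both land on the same expressions.
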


	\subsection{More details about $q_\text{LASSO}$, $q_\text{2-level}$ and $q_\text{SLOPE}$}
	\label{app:q details}
	While $q_\text{LASSO}$, $q_\text{2-level}$ and $q_\text{SLOPE}$ can be unified by \eqref{eq:fdp is function of tpp}, they are different because the maximum zero-thresholds are different. 
	
	To be concrete, for the LASSO, \cite{su2017false} shows that
	\begin{enumerate}
		\item The $\TPP$ domain of LASSO is $[0,1-\frac{(1-\delta)(\epsilon-\epsilon^*)}{\epsilon(1-\epsilon^*)})$ where $\epsilon^*(\delta)$ is determined by $\delta$ via an implicit parametric form in their Equation (C.5)
		$$\delta = \frac{2\phi(t)}{2\phi(t) + t(2\phi(t)-1)}, \epsilon^* = \frac{2\phi(t)-2t\phi(-t)}{2\phi(t) + t(2\phi(t)-1)}.$$
		\item The minimum $\FDP$ is given by
		$$q_{\text{Lasso}}(\TPP; \delta, \epsilon) = \frac{2(1-\epsilon)\Phi(-t^*(\TPP))}{2(1-\epsilon)\Phi(-t^*(\TPP)) + \epsilon\cdot\TPP}$$
		where $t^*(u)$ is the largest positive root of the following, in terms of $x$,
		$$\frac{2(1-\epsilon)[(1+x^2)\Phi(-x)-x\phi(x)] + \epsilon(1+x^2)-\delta}{\epsilon[(1+x^2)(1-2\Phi(-x))+2x\phi(x)]} = \frac{1-u}{1-2\Phi(-x)}.$$
	\end{enumerate}
	For the 2-level SLOPE, we devote \Cref{sec:tpp fdp all priors} to derive the maximum zero-threshold $\alpha^*_\text{2-level}$ in \eqref{eq:alpha star explicit}. The searching process of maximum zero-threshold is like this:
	\begin{enumerate}
		\item For each zero-threshold $\alpha_2$, search over $(\alpha_1,s)$;
		\item For each penalty $(\alpha_2,\alpha_1,s)$, search over priors $(t_1,t_2)$;
		\item If any $(\alpha_2,\alpha_1,s,t_1,t_2)$ gives $F[\pi_{\min}]\leq \delta$, then $\alpha_2$ is a feasible zero-threshold.
	\end{enumerate}
	
	For the general SLOPE, \cite{bu2021characterizing} gives a lower bound (denoted as $q_\star$ in the literature) that can only be found numerically. The searching process is similar to 2-level SLOPE, but must be done in the finite dimensions, using tools from the calculus of variations and quadratic optimization programming.
	\begin{enumerate}
		\item For each zero-threshold $\alpha$, search over priors $(t_1,t_2)$;
		\item For each $(\alpha,t_1,t_2)$, construct a quadratic programming where the objective is $F[\pi_{\min}]$ and minimize it.
		\item If any $(\alpha,t_1,t_2)$ gives $F[\pi_{\min}]\leq \delta$, then $\alpha$ is a feasible zero-threshold.
	\end{enumerate}

	\subsection{Connection between finite-dimensional and asymptotic regime for TPP and FDP}
	\label{app:tpp fdp details}
	In \cite[Theorem 2.1]{su2017false} and \cite[Theorem 1]{bu2021characterizing}, the TPP-FDP trade-off is formulated in finite dimensions ($\p\neq\infty$). As a consequence, the statements in these works require an arbitrarily small constant $c>0$ such that
	$$\textup{FDP}\geq q(\textup{TPP})-c,$$
	holds with probability tending to one as $\p\to\infty$. That is, for all $c > 0$,
	$$\plim_\p \PP(\textup{FDP}\geq q(\textup{TPP})-c)=1;$$
	therefore,
	$$\lim_{c\to 0}\plim_\p \PP(\textup{FDP}\geq q(\textup{TPP})-c)=1.$$
	
	We note these statements can be translated to the asymptotic regime ($\p\to\infty$): with an exchange of $\plim$ and $\PP$, which is allowed by the uniform convergence proven in \cite[Lemma A.1]{su2017false} for LASSO and \cite[Lemma D.1]{bu2021characterizing} for general SLOPE:
	$$\PP(\FDP\geq q(\TPP))=1.$$
	
	To be rigorous, we also provide a finite-dimension version of \Cref{thm:trade-off} that resembles the general SLOPE trade-off in \cite[Theorem 1]{bu2021characterizing}:
	Define 
	\begin{align}
		\operatorname{TPP}_\xi(\boldsymbol{\beta}, \blam) & =\frac{\left|\left\{j:|\widehat{\beta}(\blam)_j|>\xi, \, \beta_j \neq 0\right\}\right|}{\left|\left\{j: \beta_j \neq 0\right\}\right|}\, , \quad 
		\operatorname{FDP}_\xi(\boldsymbol{\beta}, \blam) =\frac{\left|\left\{j:|\widehat{\beta}(\blam)_j|> \xi, \, \beta_j=0\right\}\right|}{\left|\left\{j:|\widehat{\beta}(\blam)_j|>\xi\right\}\right|} \,.
	\end{align}
	Here $\operatorname{TPP}_\xi$ and $\operatorname{FDP}_\xi$ reduce to the standard TPP and FDP when $\xi\to 0$. These thresholded TPP and FDP values are only defined for technical reasons; namely, because a SLOPE solution may have many entries that are close to but not exactly equal to zero.
	
	Under the working assumptions (A1)$\sim$(A5) in \Cref{sec:AMP}, the following inequality holds with probability tending to one as $n,p\to\infty$:
	$$
	\operatorname{FDP}_{\xi}(\boldsymbol{\beta}, \boldsymbol{\lambda}) \geq q_\text{2-level}\left(\operatorname{TPP}_{\xi}(\boldsymbol{\beta}, \boldsymbol{\lambda}) ; \delta, \epsilon\right)-c_{\xi},
	$$
	for some positive constant $c_{\xi}\to 0$ as $\xi \rightarrow 0$. Here, $q_\text{2-level}$ is given in \Cref{thm:trade-off}.

\end{document}